\def\isarxiv{1}
\def\bE{\mathbb{E}}
\def\bP{\mathbb{P}}
\def\bQ{\mathbb{Q}}
\def\bZ{\mathbb{Z}}
\def\cA{\mathcal{A}}
\def\cB{\mathcal{B}}
\def\cC{\mathcal{C}}
\def\cD{\mathcal{D}}
\def\cE{\mathcal{E}}
\def\cP{\mathcal{P}}
\def\contig{\vartriangleleft}
\def\wt{\widetilde}
\DeclareMathOperator\Ber{\mathrm{Ber}}
\DeclareMathOperator\Bin{\mathrm{Bin}}
\DeclareMathOperator\BSC{\mathrm{BSC}}
\DeclareMathOperator\dist{\mathrm{dist}}
\DeclareMathOperator\KL{\mathrm{KL}}
\DeclareMathOperator\supp{\mathrm{supp}}
\DeclareMathOperator\Unif{\mathrm{Unif}}
\DeclareMathOperator\Inf{\mathsf{Inf}}
\DeclareMathOperator\I{\mathsf{I}}
\DeclareMathOperator\bs{\mathsf{bs}}
\DeclareMathOperator\s{\mathsf{s}}
\DeclareMathOperator\R{\mathsf{R}}
\renewcommand{\setminus}{\backslash}
\DeclareMathOperator*{\argmax}{arg\,max}
\newcommand{\rom}[1]{\textup{\uppercase\expandafter{\romannumeral#1}}}
\newtheorem{theorem}{Theorem}[section]
\newtheorem{lemma}[theorem]{Lemma}
\newtheorem{proposition}[theorem]{Proposition}
\newtheorem{corollary}[theorem]{Corollary}
\theoremstyle{definition}
\newtheorem{definition}[theorem]{Definition}
\newtheorem{open}{Open Problem}
\newcommand{\DKLs}[0]{D_{\KL}}
\newcommand{\DKL}[0]{\DKLs(p \parallel 1-p)}
\newcommand{\Threshold}[2]{\mathrm{TH}_{#2}^{#1}}
\begin{document}
\title{Tight Bounds for Noisy Computation of High-Influence Functions, Connectivity, and Threshold}
\ifdefined\isarxiv
\author{Yuzhou Gu\thanks{\texttt{yuzhougu@nyu.edu}. New York University. Part of the work was done while supported by the National Science Foundation under Grant No.~DMS-1926686.}
\and
Xin Li\thanks{\texttt{lixints@cs.jhu.edu}. Johns Hopkins University. Supported by NSF CAREER Award CCF-1845349 and NSF Award CCF-2127575.}
\and
Yinzhan Xu\thanks{\texttt{xyzhan@ucsd.edu}. University of California, San Diego. Supported by NSF Grant CCF-2330048, HDR TRIPODS Phase II grant 2217058 and a Simons Investigator Award.}}
\else
\author{Anonymous Author(s)}
\fi
\date{}

\maketitle

\pagenumbering{gobble}

\begin{abstract}
  The noisy query model (a.k.a.~noisy decision tree model) was formally introduced by [Feige, Raghavan, Peleg and  Upfal, SICOMP 1994].
  In this model, the (binary) return value of every query (possibly repeated) is independently flipped with some fixed probability $p \in (0, 1/2)$.
  The noisy query complexity (a.k.a.~noisy decision tree complexity) is the smallest number of queries an algorithm needs to make to achieve error probability $\le \frac 13$.

  Previous works often focus on specific problems, and it is of great interest to have a characterization of noisy query complexity for general functions. We show that any Boolean function with total influence $\Omega(n)$ has noisy query complexity $\Omega(n\log n)$, matching the trivial upper bound of $O(n \log n)$. Our result is the first noisy query complexity lower bound of this generality, beyond what was known for random Boolean functions [Reischuk and Schmeltz, FOCS 1991].

  Our second contribution is to determine the asymptotic noisy query complexity of the Graph Connectivity problem. In this problem, the goal is to determine whether an undirected graph is connected, where each query asks for the existence of an edge in the graph.
  A simple algorithm can solve the problem with error probability $o(1)$ using $O(n^2 \log n)$ noisy queries, but no non-trivial lower bounds were known prior to this work.
  We show that any algorithm that solves the problem with error probability $\le \frac 13$ uses $\Omega(n^2\log n)$ noisy queries.

  For the proofs of the above lower bounds, we develop a three-phase method, which is a refinement of the two-phase method of [Feige, Raghavan,  Peleg and  Upfal, SICOMP 1994]. Our three-phase method adds an extra step where carefully designed ``free'' information is disclosed to the algorithm. This way, we can precisely control the posterior distribution of the input given the query results, empowering a more refined analysis.

  Last but not least, we determine the exact number of noisy queries (up to lower order terms) needed to solve the $k$-Threshold problem and the Counting problem. The $k$-Threshold problem asks to decide whether there are at least $k$ ones among $n$ bits, given noisy query access to the bits.
  We prove that $(1\pm o(1)) \frac{n\log (\min\{k,n-k+1\}/\delta)}{(1-2p)\log \frac{1-p}p}$ queries are both sufficient and necessary to achieve error probability $\delta = o(1)$. Previously, such a result was only known when $\min\{k,n-k+1\}=o(n)$ [Wang, Ghaddar, Zhu and Wang, arXiv 2024]. We also show a similar $(1\pm o(1)) \frac{n\log (\min\{k+1,n-k+1\}/\delta)}{(1-2p)\log \frac{1-p}p}$ bound for the Counting problem, where one needs to count the number of ones among $n$ bits given noisy query access and $k$ denotes the answer.
\end{abstract}

\newpage
\pagenumbering{arabic}

\section{Introduction}

Fault tolerance is a crucial feature of algorithms that work for large systems, as errors occur unavoidably. Hence, previous studies have considered various models to capture the effect of errors, such as R{\'e}nyi--Ulam game \cite{renyi1961problem, ulam1976adventures}, independent noise \cite{feige1994computing}, and independent noise without resampling \cite{braverman2008noisy}.

\cite{feige1994computing} formally proposed the noisy query model with independent noise (which they call the noisy Boolean decision tree model or the noisy comparison tree model, depending on whether the problem uses point queries to input bits or comparison queries between input elements). In this model, each query returns a bit (either an input bit or a pairwise comparison result) that is independently flipped with some fixed probability $p \in (0, \frac{1}{2})$ (i.e. independent noise) and repeated queries are allowed. The efficiency of an algorithm is measured in terms of the number of queries it makes. \cite{feige1994computing}  showed tight asymptotic bounds for the noisy query complexity for a wide range of problems, including Parity, Threshold, Binary Search and Sorting.

In fact, researchers had studied queries with independent noise even before \cite{feige1994computing} formally defined the model. \cite{berlekamp1964block,horstein1963sequential,burnashev1974interval} all studied some versions of Binary Search under independent noise. In particular, \cite{berlekamp1964block,horstein1963sequential} studied the problem through the lens of channel coding (see \cite{wang2022noisy} for a more detailed discussion about the relationship between the channel coding perspective and the noisy query perspective). These examples further demonstrate that the noisy query model by~\cite{feige1994computing} is a natural model to study.

Following \cite{feige1994computing}, researchers have studied problems in the noisy query model extensively, including random functions \cite{reischuk1991reliable,feige1992complexity,evans1998average}, $k$-CNF and $k$-DNF \cite{DBLP:journals/rsa/KenyonK94}, Binary Search \cite{ben2008bayesian,dereniowski2021noisy, gu2023optimal, zhu2023optimal}, Sorting~\cite{wang2022noisy,wang2023variable,gu2023optimal, zhu2023optimal}, Graph Search \cite{Emamjomeh-Zadeh16, DereniowskiTUW19, dereniowski2021noisy} (a generalization of Binary Search), and $k$-Threshold \cite{zhu2023optimal, zhu2023noisy,wang2024noisy}.

However, despite the popularity and naturality of the model, most research on the noisy query model focus on specific functions instead of general functions. In the above examples, the only exceptions are the lower bounds for random functions \cite{reischuk1991reliable,feige1992complexity,evans1998average}, and upper bounds for $k$-CNF and $k$-DNF \cite{DBLP:journals/rsa/KenyonK94}. Furthermore, the specific functions studied in literature are often the ones studied already in \cite{feige1994computing} such as Threshold, Binary Search and Sorting. As a result, the noise query complexity of many important and natural problems are left unexplored. In this paper, we take a first step towards studying the noise query complexity of more general functions and problems.

\subsection{High-influence functions}

Our first result is a lower bound for the noisy query complexity of high-influence functions, a greatly general family of functions. This result is the first result towards understanding the lower bound of general Boolean functions, beyond the lower bound for random functions \cite{reischuk1991reliable,feige1992complexity,evans1998average}.

Influence is a central quantity in the analysis of Boolean functions. For a Boolean function $f: \{0,1\}^n\to \{0,1\}$, the influence of coordinate $i\in [n]$ is defined as
\begin{align*}
  \Inf_i(f) =  \bP_{x\sim \{0, 1\}^n}(f(x) \ne f(x\oplus e_i)),
\end{align*}
where $e_i$ denotes the bit string where the only $1$ is on the $i$-th coordinate, and $\oplus$ denotes exclusive or. That is, $\Inf_i(f)$ is the probability that flipping the $i$-th coordinate of a uniformly random bit string also flips the function value.
The \emph{total influence} is the sum of influences over all coordinates, i.e.,
\begin{align*}
  \I(f) = \sum_{i\in [n]} \Inf_i(f).
\end{align*}
We prove that Boolean functions with linear total influence have noisy query complexity $\Omega(n\log n)$.

\begin{restatable}[Noisy query complexity of high-influence functions]{theorem}{LinearTotalInfluence} \label{thm:influence}
  For any $c>0$, there exists $c'>0$ such that the following holds.
  For any Boolean function $f: \{0,1\}^n\to \{0,1\}$ with $\I(f) \ge c n$, any noisy query algorithm computing $f(x)$ with error probability $\le \frac 13$ makes at least $c' n\log n$ noisy queries in expectation to the coordinates of the input $x \in \{0, 1\}^n$.
\end{restatable}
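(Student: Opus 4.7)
My plan is to apply the three-phase method sketched in the abstract, combined with a Bayes-posterior / noise-stability analysis that translates the high total influence into a per-coordinate $\Omega(\log n)$ lower bound on queries.

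\textbf{Setup and first phase.} A short averaging argument concentrates the influence on a constant-fraction set of coordinates: since $\I(f) \ge cn$ and $\Inf_i(f) \le 1$, the set $H := \{i : \Inf_i(f) \ge c/2\}$ satisfies $|H| \ge cn/2$, and it suffices to show the algorithm spends $\Omega(\log n)$ queries (in expectation) on each $i \in H$. Place the uniform prior on $x \in \{0,1\}^n$ and suppose toward contradiction that an algorithm $\cA$ achieves error $\le 1/3$ with $\bE[T] = o(n \log n)$; let $N_i$ be the (random, adaptive) number of queries to coordinate $i$. By Markov's inequality applied to $\sum_{i \in H} \bE[N_i] \le \bE[T]$, there is a subset $S \subseteq H$ with $|S| \ge |H|/2 = \Omega(n)$ such that $\bE[N_i] = o(\log n)$ for every $i \in S$.

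\textbf{Reveal and Bayes-error analysis.} In the reveal step, disclose to $\cA$ the values $x_{[n]\setminus S}$ as free information. By linearity of influence, the residual function $g(x_S) := f(x)$ satisfies $\bE_{x_{[n]\setminus S}}[\I(g)] \ge (c/2)|S|$. On the reduced problem, since BSC noise on distinct queries is independent and the prior on $x_S$ is uniform, the posterior $x_S \mid Y$ factorizes into independent $\Ber(p_i(Y_i))$; a $\pm 1$-Fourier computation then gives
\[
V := \mathrm{Var}_Y\!\left[\bE_{x \mid Y}[g(x_S)]\right] = \sum_{S' \ne \emptyset} \hat g(S')^2 \prod_{i \in S'} \rho_i,
\]
where $\rho_i := \bE_Y[(1 - 2 p_i(Y_i))^2] \in [0, 1]$ is bounded away from $1$ by an amount $\Omega(1/\sqrt{\log n})$ whenever $\bE[N_i] = o(\log n)$ (via the standard Chernoff bound on BSC decoding together with Jensen on the convex exponential $N \mapsto h(N)$). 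Since the Bayes-optimal error on $g$ lower bounds $\cA$'s error and equals $1/2 - \bE_Y[|\bE_{x\mid Y}[g]|]/2 \ge 1/2 - \sqrt V/2$, the hypothesis $\le 1/3$ forces $V \ge 1/9$. On the other hand, if $g$'s Fourier mass is concentrated at levels $|S'| = \Omega(|S|)$, then $\prod_{i \in S'} \rho_i$ decays exponentially in $|S'|$, giving $V = o(1)$ and the desired contradiction.

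\textbf{Main obstacle.} The key technical step is ruling out non-negligible Fourier mass of $g$ at low levels: a function with $\I(g) = \Omega(|S|)$ can a priori have mass at $|S'| = O(1)$ (e.g., a mixture of a dictator with a parity-like function), for which $\prod_{i \in S'} \rho_i$ is not small and the above argument is inconclusive. This is where the paper's new third phase should enter: the reveal must be tailored to ``kill'' low-degree Fourier mass in the residual---for instance, by additionally disclosing suitable linear functionals of the hidden bits---so that after the reveal the algorithm effectively faces a high-degree problem. Designing the reveal to both preserve the total-influence property on average and suppress low-degree Fourier mass is the central technical challenge, and is where a careful refinement of the two-phase method of \cite{feige1994computing} is essential.
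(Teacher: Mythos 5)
Your plan takes a genuinely different route from the paper's, and the obstacle you identify at the end is in fact fatal to your Fourier-variance strategy; moreover, your guess about how the third phase fixes it is wrong. A Boolean function with $\I(f) \ge cn$ can still carry a constant fraction of its Fourier weight at constant degree --- for instance $f = x_1 \vee \mathrm{PARITY}(x_2,\dots,x_n)$ has $\I(f) = \Theta(n)$ while $\hat f(\{1\})^2 = 1/4$ --- and for such $f$ the product $\prod_{i\in S'}\rho_i$ over the constant-size set does not decay, so $V$ stays $\Omega(1)$ and no contradiction arises. Revealing extra linear functionals of the hidden bits would not ``kill'' this low-degree mass; it would also break the posterior independence you rely on. Crucially, the paper does not touch the Fourier spectrum of $f$ at all.

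The paper's three-phase method is structured differently. Phase~1 makes $m_1 = c_1 \log n$ non-adaptive noisy queries to \emph{every} coordinate (rather than using Markov to find a low-query set $S$). Phase~2's reveal is observation-dependent: coordinates whose empirical count $a_i$ falls outside a central interval $I$ are revealed outright, and the remaining coordinates with $x_i=1$ are subsampled with probabilities $q_{a_i}$ chosen so that the posterior on the unrevealed coordinates collapses \emph{exactly} to a product measure $\Ber(q)^{\otimes U}$ with a very small bias $q = n^{-\Theta(1)}$. Phase~3 then grants $c_2 n$ adaptive \emph{exact} queries. The Phase~3 argument has three steps: (a) the expected $q$-biased total influence of the restriction is $\Omega(n)$; (b) each exact query decreases $\I_q$ by at most $1$; (c) the crux (Lemma~\ref{lem:inf:phase-3:step-3}): for $q$ small, $\I_q(g) \ge cn$ forces $\bE_{\Ber(q)^{\otimes U}}[g]$ to lie in $[c',1-c']$, so any output has $\Omega(1)$ error. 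Step~(c) is a direct combinatorial estimate exploiting that under $\Ber(q)$ with $q$ small, the all-zero restriction on the high-influence block dominates and flipping any of many coordinates toggles the output with constant probability. The Fourier obstruction never appears because the measure change to $\Ber(q)$ combined with the influence bookkeeping does the work that a Fourier level-decay argument cannot.

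Two further technical problems in your sketch. First, your inequality
$\bE_Y\bigl[\lvert \bE_{x\mid Y}[g]\rvert\bigr] \le \sqrt{V}$
is only valid when $\hat g(\emptyset)=0$; in general the right-hand side should be $\sqrt{V+\hat g(\emptyset)^2}$, and since you only control $\I(g)$ in expectation over restrictions $x_{[n]\setminus S}$, an individual restriction can be very unbalanced, so the hypothesis ``error $\le 1/3$'' gives you nothing on those restrictions. Second, the quantitative claim $\rho_i \le 1-\Omega(1/\sqrt{\log n})$ when $\bE[N_i] = o(\log n)$ is not correct in general: heuristically $1-\rho_i \approx \bE[\exp(-D_{\mathrm{KL}}\, N_i)]$, and for $\bE[N_i] = \log n / \log\log n$ (which is $o(\log n)$) the Jensen lower bound gives only $1-\rho_i \ge n^{-\Theta(1/\log\log n)}$, which is asymptotically smaller than $1/\sqrt{\log n}$. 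Even the correct Jensen bound $\rho_i \le 1 - n^{-o(1)}$ would be too weak to make your variance estimate close.
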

Note that the error probability $\frac 13$ can be replaced with any $0<\epsilon<\frac 12$ without affecting the asymptotic noisy query complexity. The statement is tight in the sense that any Boolean function on $n$ inputs can be computed with error probability $o(1)$ using $O(n\log n)$ noisy queries: by simply querying each bit $O(\log n)$ times, we can determine the input string with $o(1)$ error probability.

\cref{thm:influence} unifies and generalizes several previous results. For example, it was known that a random Boolean function (with probability $1-o(1)$) has noisy query complexity $\Omega(n\log n)$ \cite{reischuk1991reliable,feige1992complexity,evans1998average}, and computing the parity function requires $\Omega(n\log n)$ noisy queries \cite{feige1994computing}. As random Boolean functions and the parity function have total influence $\Omega(n)$, \cref{thm:influence} immediately implies these lower bounds as special cases.

Another central notion related to influence is the sensitivity of Boolean functions\footnote{The sensitivity of a Boolean function $f$ at input $x$, denoted by $\s(f, x)$, is the number of bits $i$ for which $f(x) \ne f(x\oplus e_i)$. }, as it is well known that the total influence is the same as the \emph{average sensitivity} (the expected sensitivity over a uniformly random input). \cite{reischuk1991reliable} proved that any non-adaptive algorithm computing a Boolean function $f$ makes at least $\Omega(\s(f)\log \s(f))$ noisy queries, where $\s(f)$ is the (maximum) sensitivity of $f$. This result is incomparable to \cref{thm:influence}, as their lower bound only holds against non-adaptive algorithms. In fact, it is not possible to extend the $\Omega(\s(f)\log \s(f))$ lower bound against adaptive algorithms in general. For instance, the $\mathrm{OR}$ function has sensitivity $n$ and (adaptive) noisy query complexity $O(n)$ \cite{feige1994computing}. On the other hand, the average sensitivity of $\mathrm{OR}$ is much smaller, which suggests that the average sensitivity of a Boolean function $f$ is a more reasonable measure to lower bound the \emph{adaptive} noisy query complexity. This motivates us to raise the following open question, towards a lower bound for general Boolean functions.

\begin{open} \label{open:influence}
Is it true that every Boolean function $f: \{0, 1\}^n \rightarrow \{0, 1\}$ has noisy query complexity $\Omega(\I(f) \log \I(f))$?
\end{open}
\cref{thm:influence} resolves the case where $\I(f) = \Omega(n)$. We note another evidence supporting the $\I(f) \log \I(f)$ lower bound. The randomized query complexity $\R(f)$ satisfies $\R(f) = \Omega(\bs(f)) = \Omega(\s(f)) = \Omega(\I(f))$, where $\bs(s)$ denotes block sensitivity and the first step is by \cite[Lemma 4.2]{nisan1989crew}. In general, the noisy query complexity of a function $f$ is always between $\R(f)$ and $O(\R(f) \log \R(f))$. Therefore, the $\Omega(\I(f) \log \I(f))$ lower bound is consistent with these known relationships.

For the proof of \cref{thm:influence}, we develop a three-phase lower bound framework, which is based on and refines \cite{feige1994computing}'s two-phase method for proving a lower bound for the $k$-Threshold problem. In the three-phase method, we reduce the original problem in the noisy query model to a stronger observation model, where in Phase 1 the algorithm makes non-adaptive noisy observations and in Phase 3 the algorithm makes adaptive exact observations.
In Phase 2, the model gives away free information, which can only help the algorithm.
By designing this free information carefully, the effect of Phase 1 and 2 combined can be significantly simplified, allowing for a precise analysis in Phase 3.

We note that this idea of giving away free information already appears in \cite{feige1994computing}'s two-phase method.
For their problem ($k$-Threshold), this free information is relatively simple.
However, for other and more general problems, the free information could be significantly more involved.
We design the free information in a different way in order for the analysis in Phase 3 to be viable.
This additional phase to the original two-phase method makes it easier to apply and allows for other applications.
As we will soon discuss, the three-phase framework is essential for our result on Graph Connectivity and also leads to a simple proof for the lower bound of $k$-Threshold.

\subsection{Graph Connectivity}
Although the noisy query model is quite natural, there has been little prior work studying graph problems in this model. Some prior examples include \cite{feige1994computing}, which briefly mentioned that a lower bound for the noisy query complexity of Bipartite Matching can be achieved by reducing from the other problems they studied. \cite{DBLP:journals/rsa/KenyonK94} designed algorithms for $k$-CNF and $k$-DNF using a small number of queries, which imply, for instance, that one can test, up to error probability $\delta$, whether a given $n$-vertex graph contains a triangle using $O\left(n^2 \log \frac{1}{\delta}\right)$ noisy queries.

One of the most fundamental graph problems is Graph Connectivity, where we are given an $n$-vertex undirected graph $G$, and need to determine whether the graph is connected via edge queries. It is so basic that those studying algorithms encounter it very early on. For instance, breadth-first-search and depth-first-search are usually among the first graph algorithms taught in undergraduate algorithm classes, and the simplest application of them is to detect whether a graph is connected. However, to our surprise, we do not even have a good understanding of the noisy query complexity of such an elementary problem.

One simple algorithm for Graph Connectivity is to query every edge in the input graph $O(\log n)$ times to correctly compute the input graph with high probability, and then solve Graph Connectivity on the computed graph. This naive algorithm uses $O(n^2 \log n)$ noisy queries, and is essentially all what was previously known about Graph Connectivity in the noisy query model.
In particular, hardness of Graph Connectivity does not seem to follow from known hardness results. %

Using the three-phase method, we prove an $\Omega(n^2 \log n)$ lower bound on the noisy query complexity of Graph Connectivity, showing that the naive $O(n^2 \log n)$ algorithm is actually optimal up to a constant factor:

\begin{restatable}[Hardness of Graph Connectivity]{theorem}{GraphConnectivity}\label{thm:graph-conn-hard}
  Any algorithm solving the Graph Connectivity problem with error probability $\le \frac 13$ uses $\Omega(n^2\log n)$ noisy queries in expectation.
\end{restatable}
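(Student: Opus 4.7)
The plan is to apply the three-phase framework introduced for \cref{thm:influence} to a hard distribution $\mathcal{D}$ on $n$-vertex graphs, and then conclude by Yao's minimax principle: it suffices to exhibit $\mathcal{D}$ such that every deterministic noisy-query algorithm with error probability $\le 1/3$ on $\mathcal{D}$ makes $\Omega(n^2\log n)$ queries in expectation. A natural candidate for $\mathcal{D}$ is a two-block model: sample a uniformly random partition $V = A \cup B$ with $|A|=|B|=n/2$; place a fixed connected template (say a Hamiltonian path) on each of $A$ and $B$; and populate the $N=n^2/4$ bipartite edges according to a carefully designed distribution under which the overall graph is connected if and only if a balanced Boolean function $g$ of the bipartite layer evaluates to $1$, with both YES and NO outcomes having probability bounded away from $0$.

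With $\mathcal{D}$ fixed, I would instantiate the three-phase method exactly as in the proof of \cref{thm:influence}. Phase 1 is the original noisy algorithm. In Phase 2 the model reveals for free the partition $(A,B)$ together with all non-bipartite edges, so that after Phase 2 the only residual uncertainty lives in the $N=\Theta(n^2)$ bipartite edge coordinates. In Phase 3 the algorithm is granted adaptive exact access to those bipartite edges. Mirroring the analysis used for \cref{thm:influence}, if $g$ has total influence $\Omega(N)$ with respect to the bipartite-layer distribution, then the posterior-control lemmas driving the three-phase method force Phase 1 to use $\Omega(N\log N) = \Omega(n^2\log n)$ noisy queries to reach error $\le 1/3$, independently of Phase 3 activity.

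The principal obstacle will be designing the bipartite layer so that $g$ genuinely has total influence $\Omega(N)$. The naive choice, ``graph is connected iff at least one cross edge exists,'' reduces $g$ to an $\mathrm{OR}$ on sparse bits, which is essentially noisy $1$-Threshold and has noisy query complexity only $\Theta(N)$, falling a $\log n$ factor short and yielding only $\Omega(n^2)$. To recover the logarithmic factor, the bipartite distribution must be arranged so that $\Omega(N)$ of its coordinates are each pivotal for connectivity with constant probability. One promising route is to replace the single witness-edge certificate by a $k$-threshold-style structure on the bipartite layer with $k=\Theta(N)$ (so that the problem inherits the $\Theta(N\log N)$ lower bound of balanced Threshold), coupled to the graph-theoretic semantics via auxiliary vertices whose connectivity status is dictated by the threshold outcome. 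Achieving this combinatorial design while preserving the connectivity interpretation is the delicate step; once it is in place, the three-phase template of \cref{thm:influence} closes out the proof.
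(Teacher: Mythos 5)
There is a genuine gap, and it occurs precisely at the step you flag as delicate: no distribution on the bipartite layer can make connectivity a total-influence-$\Omega(N)$ function, because graph connectivity is \emph{monotone} in the edge indicators. For any monotone $g:\{0,1\}^N\to\{0,1\}$, the level-one Fourier coefficients satisfy $\hat g(\{i\}) = \Inf_i(g)$, and Parseval together with Cauchy--Schwarz gives $\I(g) = \sum_i \Inf_i(g) \le \sqrt{N}$. So the connectivity function on the $N=\Theta(n^2)$ bipartite edges has total influence $O(n)$, two orders of magnitude below the $\Omega(n^2)$ you need, and no clever gadget can change this since adding an edge never disconnects a graph. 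Your proposed ``$k$-threshold gadget'' with auxiliary vertices is also blocked by monotonicity: there is no graph-theoretic widget that becomes disconnected when \emph{more} edges appear. Consequently \cref{thm:influence} cannot be applied as a black box here, and the proposal does not close.

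The paper's actual proof takes a different route that avoids this obstruction. The hard distribution is not a two-block model but is based on a uniform random spanning tree $T$: pick a uniformly random $\beta_0$-balanced edge $e_0\in T$ and, with a fair coin, either keep $T$ or delete $e_0$. The three-phase reduction is used, but Phase~2 is custom-built for this distribution: after Steps~2a--2b reveal edges/non-edges with the usual bias, Step~2c additionally reveals all of $G$ except possibly one balanced edge $e^*$, so that the posterior is supported on $\{G_0\}\cup\{G_e : e\in E(T_1,T_2)\}$ where $G_0 = T_1\cup T_2$ is disconnected and each $G_e$ adds a single bridging cross-edge. The heart of the argument is then a direct posterior computation (\cref{lem:conn:phase-2:step-c:structure-1}, \cref{coro:conn:phase-2:step-c:structure-2}) showing $\bP^{(2c)}(G_e)/\bP^{(2c)}(G_0) = \Theta(1/n^2)$ for typical $e$, while both $\bP^{(2c)}(\text{connected})$ and $\bP^{(2c)}(\text{disconnected})$ are $\Theta(1)$. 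Hence $c_2 n^2$ exact Phase-3 queries miss $e^*$ with constant probability and cannot resolve the answer. The $\log n$ factor comes from the Phase-1 calibration (needing $c_1\log n$ noisy samples per potential edge so that the biased product measure in Phase~2 has the right shape), not from any influence bound. If you want to salvage your framework, the lesson is that for connectivity one must give up the reduction to high-influence functions and instead engineer Phase~2 so that the residual uncertainty is a near-uniform ``find the needle among $\Theta(n^2)$ candidates'' question.
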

Similarly as before, the error probability $\frac 13$ can be replaced with any $0<\epsilon<\frac 12$ without affecting the asymptotic noisy query complexity.

We also show an $\Omega(n^2 \log n)$ lower bound for the related $s$-$t$ Connectivity problem, where we are given an $n$-vertex undirected graph $G$ and two fixed vertices $s$ and $t$, and the goal is to determine whether there is a path in the graph connecting $s$ and $t$.

As Graph Connectivity and $s$-$t$ Connectivity are very basic tasks on graphs, their lower bounds immediately imply lower bounds for several other fundamental graph problems as well. For instance, given the lower bounds for Graph Connectivity and $s$-$t$ Connectivity, it is straightforward to show that Global Min-Cut, $s$-$t$ Shortest Path, and $s$-$t$ Max Flow on unweighted undirected graphs all require $\Omega(n^2 \log n)$ noisy queries in expectation.

\subsection{Threshold and Counting}

In the $k$-Threshold problem, one is given a length-$n$ Boolean array $a$ and an integer $k$, and the goal is to determine whether the number of $1$'s in the array $a$ is at least $k$. Note that the answer to the input is false if and only if the number of $0$'s in the input is at least $n - k + 1$. We can thus solve $k$-Threshold using an algorithm for $(n-k+1)$-Threshold: we can flip all input bits, change $k$ to $n - k + 1$, solve the modified instance, and finally flip the result. Therefore, we can assume without loss of generality that $k \le n - k +  1$, or equivalently, $k \le (n + 1) / 2$.

$k$-Threshold is one of the first problems studied in the noisy query model. In \cite{feige1994computing}, it was shown that $\Theta\left(n \log \frac k\delta\right)$ queries are both sufficient and necessary to solve the problem with error probability $\delta$. However, the optimal constant factor was left unknown.\footnote{Studying constant factors is often overlooked in theoretical computer science, but in this research area, determining the optimal constants for noisy query complexities of other fundamental problems such as Binary Search and Sorting has been an active topic (e.g., \cite{burnashev1974interval, ben2008bayesian,dereniowski2021noisy, gu2023optimal}). See \cite{DBLP:conf/icalp/Gretta024} for more discussions on the importance of studying constants in query complexity. }

There has been some progress towards determining the exact constant for $k$-Threshold. In~\cite{zhu2023noisy}, it was shown that the noisy query complexity of the $\mathrm{OR}$ function on $n$ input bits (equivalent to $1$-Threshold) is
\[
(1 \pm o(1)) \frac{n \log \frac{1}{\delta}}{\DKL}
\]
for $\delta=o(1)$,  where $\DKL=(1-2p)\log \frac{1-p}p$ is the Kullback-Leibler divergence between two Bernoulli distributions with heads probabilities $p$ and $1 - p$.
This result was later generalized to $k$-Threshold for all $k = o(n)$ and $\delta = o(1)$ by \cite{wang2024noisy}, who showed an
\[
(1 \pm o(1)) \frac{n \log \frac{k}{\delta}}{\DKL}
\]
bound. Compared to \cite{zhu2023noisy}, \cite{wang2024noisy}'s result works for a much wider range of $k$. However, their lower bound proof technique unfortunately stops working for the case $k = \Theta(n)$, and this case is frustratingly left open (we remark that their algorithm gives the right upper bound even for $k=\Theta(n)$).

In this work, we complete the last piece of the puzzle, showing a matching bound for all values of $k$.
\begin{restatable}[Noisy query complexity of $k$-Threshold]{theorem}{KThreshold}
    \label{thm:k-threshold}
For any $1 \le k \le (n + 1) / 2$ and $\delta = o(1)$, computing $k$-Threshold on a length-$n$ array with error probability $\delta$ needs and only needs
\[
(1 \pm o(1)) \frac{n \log \frac{k}{\delta}}{\DKL}
\]
noisy queries in expectation.
\end{restatable}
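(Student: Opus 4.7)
The upper bound is already achieved by the algorithm of \cite{wang2024noisy}, which attains the stated bound for all $k \le (n+1)/2$. For the lower bound, the regime $k = o(n)$ was handled in \cite{wang2024noisy}; we focus on the complementary range $k = \Theta(n)$, where their technique breaks down because each coordinate's marginal bias is $\Theta(1)$ rather than $o(1)$. The plan is to apply the three-phase framework developed for \cref{thm:influence}.

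Place the hard prior $\mu$ on $\{0,1\}^n$: sample $Z \sim \Unif(\{0,1\})$ and then $x$ uniformly from Hamming-weight-$(k-1+Z)$ strings, so that the correct $k$-Threshold answer equals $Z$. A worst-case error-$\delta$ algorithm has Bayes error $\le \delta$ under $\mu$, and it suffices to lower bound the expected noisy query complexity of recovering $Z$ in the Bayesian model. Applying the three-phase reduction at a $1+o(1)$ overhead in the query budget converts an adaptive noisy algorithm into a three-phase algorithm that: (i) in Phase 1 commits non-adaptively to counts $(q_1,\ldots,q_n)$ with $\sum q_i \le T$ and observes $q_i$ i.i.d.\ noisy copies of each $x_i$; (ii) in Phase 2 receives a carefully designed free signal $\Phi(x)$ depending only on $x$; and (iii) in Phase 3 performs adaptive exact observations of the form prescribed by the framework.

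We design $\Phi$ to reveal $x_i$ exactly for every $i$ in the heavy set $H := \{i : q_i \ge \tau\}$, where $\tau = (1-o(1))\log(k/\delta)/\DKL$. After Phase 2, the posterior on $x_L$ (with $L := [n]\setminus H$) collapses, up to mild reweighting by the $q_i < \tau$ observations that can be absorbed into Phase 3, to the uniform distribution on weight-$(k-1-|x_H|_1)$ or weight-$(k-|x_H|_1)$ strings in $\{0,1\}^L$, with $Z$ having prior $\tfrac{1}{2}$-$\tfrac{1}{2}$. The Phase 3 subproblem is therefore to distinguish two consecutive weight classes on $\{0,1\}^L$; a symmetry/coupling argument in the spirit of a ``find the tagged ball'' lower bound shows this is hard for the Phase 3 model unless $|L| = o(n)$. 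Consequently almost all coordinates must be heavy, forcing
\[
T \;\ge\; |H|\cdot\tau \;=\; (1-o(1))\,\frac{n\log(k/\delta)}{\DKL}.
\]

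The chief obstacle is the Phase 2 design together with the bookkeeping of the reduction: one must verify that $\Phi(x)$ is a bona fide free signal (a deterministic function of $x$ not leaking Phase 1 randomness), that the $1+o(1)$ overhead in the reduction is truly negligible at the level of the leading constant $1/\DKL$, and that the ``collapse to two consecutive weight classes'' holds cleanly after absorbing the $q_i < \tau$ observations on $L$ into Phase 3. Once the framework is set up, the Phase 3 analysis is a short argument and the Phase 1 accounting is a data-processing calculation parallel to the proof of \cref{thm:influence}.
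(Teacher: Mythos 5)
Your proposal diverges from the paper's approach. The paper proves the lower bound for $k = \Theta(n)$ by a \emph{reduction} to the already-established $k = o(n)$ lower bound (Theorem~\ref{thm:th:small-k}): in Lemma~\ref{lem:th:maj} and Theorem~\ref{thm:th:large-k}, one pads the input with artificial $1$'s (or $0$'s), simulates queries to artificial bits for free, and uses an averaging argument on the expected number of queries to $1$-indices versus $0$-indices to show that a too-fast algorithm for the padded instance yields a too-fast algorithm for a small-$k$ instance. This is short and self-contained given the small-$k$ bound. You instead try to run the three-phase machinery directly at $k = \Theta(n)$, which the paper deliberately avoids.

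Your plan, as stated, has several concrete gaps. First, your Phase 1 model (``commits non-adaptively to counts $(q_1,\dots,q_n)$ with $\sum q_i \le T$'') is \emph{not} the three-phase reduction of Lemma~\ref{lem:inf:reduction}; there, Phase 1 makes a \emph{fixed} number $m_1$ of queries to every coordinate, and Phase 3 absorbs the overflow. An adaptive algorithm's query counts are random and depend on outcomes, so they cannot be replaced wholesale by a non-adaptive commitment $(q_i)$ without further argument. Second, after revealing $x_H$, the posterior on $x_L$ is \emph{not} uniform on two consecutive weight classes: the leftover Phase 1 noisy observations on $L$ reweight the posterior coordinate by coordinate. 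The paper's own proofs resolve exactly this issue with the subsampling Step 2b, which equalizes the per-coordinate weights; you wave this away as ``mild reweighting that can be absorbed into Phase 3,'' but noisy Phase 1 observations are not exact queries and cannot be absorbed that way. Third, the Phase 3 hardness of distinguishing weight $k'$ from $k'-1$ on $\{0,1\}^L$ when $|L|=\Omega(n)$ is asserted but never proven; since the algorithm still holds its Phase 1 noisy observations on $L$, this claim is essentially the threshold lower bound itself restricted to $L$, and the argument becomes circular unless those observations are neutralized first (again, Step 2b). So while the high-level ``heavy/light split'' heuristic is suggestive, the proposal as written does not close the lower bound; the paper's reduction-from-small-$k$ route is both simpler and actually completes the argument.

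Minor note on the upper bound: relying on the algorithm of Wang et al.\ is correct, but the paper also supplies its own simpler algorithm (Theorem~\ref{thm:threshold-upper}, based on an asymmetric Check-Bit), which is part of the contribution.
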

Here the $\delta=o(1)$ assumption is standard and has appeared in several previous works \cite{gu2023optimal,zhu2023noisy,wang2024noisy}.

While \cite{wang2024noisy} has already given an algorithm achieving the tight upper bound for any $k$, their algorithm involves calling some extra algorithms such as Noisy Sorting and Noisy Heap, which seems too heavy and unnecessary for the $k$-Threshold problem (after all, in the classic noiseless setting, the algorithm for $k$-Threshold is much simpler than algorithms for Sorting or Heap). We provide a much simpler algorithm which involves only checking each bit one by one and completely avoids calling these extra algorithms.

We also provide an alternative and simpler proof of the lower bound for $k$-Threshold with $k=o(n)$. The proof of \cite{wang2024noisy} considers three cases and uses two different methods (the two-phase method from \cite{feige1994computing} and Le Cam's two point method) for solving them. We show that this casework is unnecessary by providing a uniform and simple proof for all $k=o(n)$ by using our three-phase method.

We also consider a related problem, Counting, where we need to compute the number of $1$'s in $n$ input Boolean bits. The lower bound for $k$-Threshold easily applies to Counting as well (though in a non-black-box way). In addition, we design an algorithm for Counting that matches the lower bound, obtaining the following result.
\begin{restatable}[Noisy query complexity of Counting]{theorem}{Counting}
    \label{thm:counting}
    Given a sequence $a \in \{0, 1\}^n$, computing $\lVert a \rVert_1$ with error probability $\delta = o(1)$ needs and only needs
    \[
    (1\pm o(1))\frac{n \log \frac{\min\{\lVert a \rVert_1, n - \lVert a \rVert_1\}+1}{\delta}}{\DKL}
    \]
    noisy queries in expectation.
\end{restatable}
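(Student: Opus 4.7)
I handle the upper and lower bounds separately; both ultimately reduce to results about $k$-Threshold (\cref{thm:k-threshold}).

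\textbf{Upper bound.} By flipping all bits if needed, assume WLOG that $k := \|a\|_1 \le n/2$. The algorithm processes bits one at a time, running a Sequential Probability Ratio Test (SPRT) on each, with type-I error $\alpha$ (wrongly calling a $0$ a $1$) and type-II error $\beta$ (wrongly calling a $1$ a $0$) chosen so that the union-bound total error $k\beta + (n-k)\alpha$ is at most $\delta$. By Wald's identity, each $0$-bit uses $(1+o(1))\log(1/\beta)/\DKL$ queries in expectation and each $1$-bit uses $(1+o(1))\log(1/\alpha)/\DKL$ queries. Minimizing $(n-k)\log(1/\beta) + k\log(1/\alpha)$ subject to $k\beta + (n-k)\alpha \le \delta$ via Lagrange multipliers gives $\alpha = k\delta/[n(n-k)]$ and $\beta = (n-k)\delta/(nk)$, and direct calculation simplifies the total expected queries to $(1+o(1))\, n\log((k+1)/\delta)/\DKL$, matching the claim since $\min\{k,n-k\}+1 = k+1$ for $k\le n/2$ and using $\delta=o(1)$ to absorb an $O(n)$ additive error. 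Since the optimal thresholds depend on the unknown $k$, I precede SPRT with a cheap estimation phase (e.g.\ coarse majority voting on a sub-sample with doubling refinement) producing $\hat k$ within a constant factor of $k$; the logarithm $\log((k+1)/\delta)$ is insensitive to such an error, so this does not spoil the $(1+o(1))$ factor.

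\textbf{Lower bound.} By prepending a uniformly random permutation I may assume the algorithm is permutation-invariant, so its expected query count $Q(a)$ depends only on $k := \|a\|_1$; call it $Q_k$. Any Counting algorithm immediately solves $k$-Threshold (compare its output to $k$), and the hard distribution behind \cref{thm:k-threshold}'s lower bound is, after symmetrization, uniform over $\{a : \|a\|_1 \in \{k-1,k\}\}$, on which the expected query count equals $\tfrac12(Q_{k-1}+Q_k)$. This gives $\tfrac12(Q_{k-1}+Q_k) \ge (1-o(1))\, n\log(k/\delta)/\DKL$ for $1 \le k \le (n+1)/2$; combining with the matching upper bound on $Q_{k-1}$ yields the pointwise bound $Q_k \ge (1-o(1))\, n\log(k/\delta)/\DKL$, which equals $(1-o(1))\, n\log((k+1)/\delta)/\DKL$ since $\log(1/\delta)\to\infty$ dominates $\log((k+1)/k)\le 1$. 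The boundary $k=0$ is handled by the $1$-Threshold reduction, which produces the $+1$ shift; the regime $k > n/2$ is covered by bit-flip symmetry, yielding the $\min\{\|a\|_1, n-\|a\|_1\}+1$ form.

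\textbf{Main obstacle.} The most delicate step is achieving the $(1+o(1))$ factor rather than a mere $O(1)$ constant in the upper bound, which requires tight overshoot control in Wald's identity and an estimation phase for $k$ whose query cost is dominated by the main SPRT phase. The regime $k=O(1)$ is especially tricky, because the main SPRT phase costs only $n\log(1/\delta)/\DKL$ queries, leaving essentially no room for a coarse preliminary estimation of $k$.
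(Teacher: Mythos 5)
The lower bound is essentially the paper's argument: run the assumed Counting algorithm on the $k$-Threshold hard distribution (supported on inputs with $\|a\|_1\in\{k-1,k\}$), note that its output immediately decides $k$-Threshold, and contradict \cref{thm:k-threshold} using $\log\frac{k+1}{\delta}=(1+o(1))\log\frac{k}{\delta}$ since $\delta=o(1)$. Your additional step of "combining with the matching upper bound on $Q_{k-1}$ to get a pointwise bound on $Q_k$" is both unnecessary and subtly flawed (the upper bound on $Q_{k-1}$ holds for the algorithm one designs, not for the arbitrary algorithm being lower-bounded), but the distributional form of the lower bound is already what the theorem statement means, so this is harmless.

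The upper bound, however, has a real gap and you have correctly identified it. Your Lagrange calculation recovering $\alpha\approx\delta/n$, $\beta\approx\delta/k$ matches the error tolerances $\delta_0=\delta/2n$, $\delta_1=\delta/2k$ inside the paper's $\textsc{Asymmetric-Check-Bit}$, and the per-bit SPRT analysis via Wald is the same engine. The problem is that your plan requires knowing $k$ up to a constant factor before running the SPRT, and as you say, when $k=O(1)$ the total budget is only $(1+o(1))n\log(1/\delta)/\DKL$, while estimating whether $k$ is, say, $0$, $1$, or $5$ requires inspecting essentially all $n$ bits — a preliminary pass of constant-reliability queries on $\Omega(n)$ bits already costs $\Theta(n)$, which is not obviously $o(n\log(1/\delta))$ without further argument, and "doubling refinement" is not fleshed out enough to close this. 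The paper circumvents this entirely with a different idea: its Counting algorithm (Algorithm 2) never estimates $k$ up front. Instead it keeps a running count $k$ of bits already classified as $1$, uses the \emph{current} $k$ in the stopping threshold $-\log(6(k+1)/\delta)/\log\frac{1-p}{p}$, and crucially always processes the bit whose counter $c_i$ is currently largest. Because each $c_i$ is an independent random walk that does not depend on the stopping rule, the analysis can condition on how many $0$-bits' infinite walks would ever cross the upper barrier (the events $F_j$) and sum a geometric series in $\delta^j$, yielding the $(1+o(1))\log\frac{k^*+1}{\delta}/\DKL$ per-$0$-bit cost without any prior estimate of $k^*$. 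The paper does include one cheap estimation pass of $n^{0.99}$ sampled bits, but only to decide whether to globally flip the input (to ensure $\|a\|_1\le n/2$), not to estimate $k$ — that pass costs $O(n^{0.99}\log n)=o(n\log(1/\delta))$ and sidesteps your $k=O(1)$ obstacle. To repair your proposal you would essentially have to invent this dynamic-threshold idea (or some equivalent way to get an adaptive $\beta$ without a separate estimation phase).
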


A problem closely-related to $k$-Threshold is the $k$-Selection problem, where one is given $n$ items (comparable with each other) and the goal is to select the $k$-th largest element using noisy comparison queries. It is known that solving $k$-Selection with error probability $\delta=o(1)$ needs and only needs $\Theta\left(n\log \frac {\min\{k,n-k+1\}}\delta\right)$ noisy queries \cite{feige1994computing}.
Their bounds are only tight up to a constant factor, so the exact value of the leading coefficient remains open.
\begin{open} \label{open:k-selection}
  Determine the exact constant $c$ such that $(c\pm o(1)) n\log \frac {\min\{k,n-k+1\}}{\delta}$ noisy queries is both sufficient and necessary to solve the $k$-Selection problem with error probability $\delta=o(1)$.
\end{open}

\section{Technical overview}
\subsection{Proof overview for high-influence functions}
\label{sec:influence:overview}
Recall that the error probability $\frac 13$ in the statement of \cref{thm:influence} can be replaced with any $0<\epsilon<\frac 12$ without loss of generality. Also, the expected number of queries can be replaced by the worst-case number of queries by Markov's inequality.

Let $f: \{0, 1\}^n \rightarrow \{0, 1\}$ be a Boolean function with $\I(f) = \Omega(n)$. The hard distribution for the input $x$ will be the uniform distribution over $\{0, 1\}^n$.

Inspired by \cite{feige1994computing}, we prove hardness under the noisy query model by introducing a new problem where the algorithm has more power, and prove hardness of this new problem.

The new problem has three phases, described as follows.
\begin{enumerate}
  \item \label{item:sec:influence:overview:phase-1} In Phase 1, the algorithm makes $m_1=c_1\log n$ noisy queries to coordinate $x_i$ for every $i \in [n]$.
  \item \label{item:sec:influence:overview:phase-2} In Phase 2, the oracle reveals some coordinates of $x$ to the algorithm.
  \item \label{item:sec:influence:overview:phase-3} In Phase 3, the algorithm makes $m_2=c_2 n$ adaptive exact queries for some constant $c_2$.
\end{enumerate}
The goal of the algorithm is to compute the value $f(x)$.

Note that the first two phases are non-adaptive. The third phase is adaptive but the algorithm makes exact queries. This is the reason why the three-phase problem is easier to analyze than the original noisy query problem.

It is not hard to prove that if no algorithm can solve the three-phase problem with $\epsilon$ error probability, then no algorithm can solve the original problem with $\epsilon$ error probability using no more than $m_1 m_2 = c_1 c_2 n \log n$ noisy queries.
Therefore we only need to prove hardness of the three-phase problem.

\paragraph{Phase 1.} Let $a_i$ be the number of times where a query to $x_i$ returns $1$ in Phase $1$. The posterior distribution of the input $x$ given observations made in Phase $1$ depends only on the variables $a_i$. In other words, $(a_i)_{i \in [n]}$ is a sufficient statistic for $x$. Conditioned on $x$, the variables $(a_i)_{i \in [n]}$ are independent, and the distribution of each $a_i$ is a binomial distribution depending only on whether $x_i = 1$. That is, if $x_i = 1$, then $a_i \sim \Bin(m_1, 1 - p)$; otherwise, $a_i \sim \Bin(m_1, p)$ ($\Bin(\cdot, \cdot)$ denotes the binomial distribution). Moreover, for $x_i = 0$, $a_i$ is in an interval $I$ around $p m_1$ with probability $1-o(1)$; for $x_i = 1$, $a_i$ is in the interval $I$ with probability $n^{-c_3\pm o(1)}$, for some constant $c_3>0$ depending on $c_1$ and $p$.
Because the observations are independent for different coordinates, each index has a non-negative weight, such that the posterior probability of the input being $x$ is proportional to the product of weights of the coordinates $i$ where $x_i = 1$.

\paragraph{Phase 2.} In Phase 2, the oracle reveals some coordinates of $x$ to the algorithm. This information is revealed in two steps:
\begin{enumerate}[label=2\alph*.]
\item All coordinates with $a_i \not \in I$ are revealed.
\item Every $x_i$ with $a_i \in I$ is revealed independently with probability $q_{a_i}$ for some real numbers $(q_k \in [0, 1])_{k \in I}$.
\end{enumerate}
Because of the observations in Phase 1, the unrevealed coordinates can have different weights.
That is, given observations up to Step 2a, the posterior probabilities for different coordinates being $1$ can be different, which is undesirable.
Step 2b is a subsampling procedure, with the goal of reweighting the unrevealed coordinates so that all of them have the same weight.
If the interval $I$ is not too large, then the probabilities $q_k$ for $k\in I$ will not be too small.
Because observations made up to Step 2b are independent for different coordinates, they have the same effect as the following procedure (if the real numbers $q_k$ are chosen carefully): every $x_i$ with $x_i = 1$ is revealed independently with probability $p_+=1-n^{-c_3\pm o(1)}$ and $x_i$ with $x_i = 0$ is revealed independently with probability $p_-=o(1)$.

\paragraph{Phase 3.} At this stage, let the set of unrevealed coordinates be $U$. Conditioned on the revealed coordinates, $x_i$ for $i\in U$ are i.i.d.~$\Ber(q)$ variables, where  $\Ber(q)$ denotes the Bernoulli distribution with head probability $q=\frac{1-p_+}{1-p_++1-p_-}=n^{-c_3\pm o(1)}$. In other words, the distribution of $x_U$ is $\Ber(q)^{\otimes U}$. Let $g$ be a restriction of $f$ where the revealed coordinates of $x$ are fixed to be the revealed values. Then we need to show that on average, computing $g$ with error probability $\le \epsilon$ requires $\Omega(n)$ (adaptive) exact queries, for some sufficiently small $\epsilon > 0$. To this end, we consider a biased version of total influence $\I_q$:
\begin{align*}
  \I_q(g) = \sum_{i\in U} \bP_{x\sim \Ber(q)^{\otimes U}}( g(x) \ne g(x\oplus e_i)).
\end{align*}
Our proof strategy consists of the following three steps.
\begin{enumerate}[label=3\alph*.]
  \item First, we show that $\I_q(g) = \Omega(n)$ in expectation.
  \item After we make a query, we further fix the value of the queried coordinate, and replace $g$ with the new restricted function. We show that each exact query can only decrease $\I_q(g)$ by at most $1$ in expectation.
  \item Finally, we show that if $\I_q(g) = \Omega(n)$, then it is impossible to guess $g(x)$ (where $x$ follows a biased product distribution) with error probability $\le \epsilon$.
\end{enumerate}
Combining the three steps, we obtain that making $c_2n$ exact queries (in expectation) cannot compute $g$ with $\le \epsilon$ error probability, for sufficiently small $c_2$ and $\epsilon > 0$.

We note that Step 3c is where this approach fails to extend to general functions with sublinear total influence. In fact, for $\I_q(g) = o(n)$, it might be possible to guess $g(x)$ (where $x$ follows a biased product distribution) with $o(1)$ error probability. For instance, if $g(x)$ is a function where $o(1)$ random fraction of the values are $0$, and the other values are $1$, it is straightforward to show that $\I_q(g) = o(n)$ (in expectation), and we can guess $g(x)=1$ to achieve $o(1)$ error probability, without using any queries.

\subsection{Proof overview for Graph Connectivity} \label{sec:conn:overview}
In this section we present an overview of our proof of \cref{thm:graph-conn-hard}, hardness of Graph Connectivity. Again, the error probability $\frac 13$ can be replaced with any $0<\epsilon<\frac 12$ and the expected number of queries can be replaced by the worst-case number of queries, without loss of generality.

\paragraph{Hard distribution.}
To prove the lower bound, we design a hard distribution of inputs.
The distribution is based on uniform random spanning trees (USTs) of the complete graph.
Let $T$ be a UST. We say an edge $e\in T$ is $\beta$-balanced if both components of $T\backslash e$ have size at least $\beta n$.
Let $e$ be a uniform random $\beta_0$-balanced edge of $T$, where $\beta_0=\frac 1{21}$. (If such $e$ does not exist, we restart.)
We throw a fair coin and decide whether to erase edge $e$ from $T$.
That is, conditioned on $(T,e)$, the input graph $G$ is $T$ (connected) with probability $\frac 12$, and is $T\backslash e$ (disconnected) with probability $\frac 12$.

\paragraph{Three-phase problem.}
Following the three-phrase method, it suffices to show the hardness of the three-phase problem described as follows.
\begin{enumerate}[label=\arabic*.]
  \item \label{item:sec:conn:overview:phase-1} In Phase 1, the algorithm makes $m_1=c_1\log n$ noisy queries to every unordered pair (called ``potential edge'') $(u,v)\in \binom V2$ for some constant $c_1$.
  \item \label{item:sec:conn:overview:phase-2} In Phase 2, the oracle reveals some edges and non-edges of $G$ to the algorithm.
  \item \label{item:sec:conn:overview:phase-3} In Phase 3, the algorithm makes $m_2=c_2 n^2$ adaptive exact queries for some constant $c_2$.
\end{enumerate}
The goal of the algorithm is to determine whether the graph is connected.

\paragraph{Phase 1.}
This phase is similar to Phase 1 in \cref{sec:influence:overview}.
In Phase 1, the algorithm makes $m_1$ noisy queries to every potential edge $e\in \binom V2$.
Let $a_e$ be the number of times where a query to a potential edge $e$ returns $1$ in Phase 1.
Similar to \cref{sec:influence:overview}, if $e\in G$, then $a_e \sim \Bin(m_1, 1-p)$; otherwise $a_e \sim \Bin(m_1, p)$.
Specifically, for $e\not \in G$, $a_e$ is in an interval $I$ around $p m_1$ with probability $1-o(1)$; for $e\in G$, $a_e$ is in the interval $I$ with probability $n^{-c_3\pm o(1)}$, for some constant $c_3>0$ depending on $c_1$ and $p$.
Because the observations are independent for different edges, each edge has a non-negative weight, such that the posterior probability of $G=T$ for a tree $T$ is proportional to the product of weights of edges in $T$.

\paragraph{Phase 2.}
In Phase 2, the oracle reveals some edges and non-edges of $G$.
This information is revealed in three steps (the first two steps are similar to those in \cref{sec:influence:overview}).
\begin{enumerate}[label=2\alph*.]
  \item \label{item:sec:conn:overview:step-2a} In Step 2a, the potential edges $e$ with $a_e \not \in I$ are revealed. (Recall that $I$ is an interval around $p m_1$.) That is, the algorithm now knows which potential edges $e$ with $a_e \not \in I$ are edges.
  \item \label{item:sec:conn:overview:step-2b} In Step 2b, every edge $e$ with $a_e\in I$ is revealed independently with probability $q_{a_e}$, for some real numbers $(q_k\in [0,1])_{k\in I}$.
  \item \label{item:sec:conn:overview:step-2c} In Step 2c, $n-2$ edges are revealed as follows.
  If $G$ is disconnected, reveal all edges of $G$; otherwise, if there is a $\beta_0$-balanced edge that has not been revealed so far, uniformly randomly choose one (say $e^*$) from all such edges, and reveal all edges of $G\backslash e^*$. If $G$ is connected but all $\beta_0$-balanced edges have been revealed, report failure.
\end{enumerate}

Similar to \cref{sec:influence:overview}, if the real numbers $q_k$ are chosen carefully, the observations made up to Step 2b have the same effect as the following procedure: observe every edge independently with probability $p_+=1-n^{-c_3\pm o(1)}$; observe every non-edge independently with probability $p_-=o(1)$.

We can show that Step 2c reports failure with $1-\Omega(1)$ probability.
In the following, we condition on the event that the oracle does not report failure in Step 2c.
In this case, Step 2c reveals all except for one edge, which forms two connected components $T_1$ and $T_2$.
By the construction, $T_1$ and $T_2$ both have size at least $\beta_0 n$.
The posterior distribution is supported on $\{T_1\cup T_2, T_1\cup T_2\cup \{e\}: e\in E(T_1,T_2)\}$.
In our full analysis, we compute an exact formula for the posterior probability of every graph in the support.

\paragraph{Phase 3.}
In Phase 3 the algorithm makes $m_2=c_2 n^2$ adaptive exact queries.
Let $G_0 = T_1\cup T_2$, $G_e = T_1\cup T_2 \cup \{e\}$ for $e\in E(T_1,T_2)$.
Let $\bP^{(2)}$ denote the posterior distribution of $G$ after Phase 2.
Because $T_1$ and $T_2$ are already revealed, we can w.l.o.g.~assume that the algorithm makes queries only to edges in $E(T_1,T_2)$.
We prove that after Phase 2, for most edges $e\in E(T_1,T_2)$, we have
\begin{align*}
  \frac{\bP^{(2)}(G_e)}{\bP^{(2)}(G_0)} = \Theta\left(\frac 1{|T_1||T_2|}\right) = \Theta\left(\frac 1{n^2}\right)
\end{align*}
Let $E^{(3)}$ be the set of potential edges queried in Phase 3.
For small enough  $c_2$, we have
\begin{align*}
  \sum_{e\in E^{(3)}} \bP^{(2)}(G_e) = O(1) \cdot \bP^{(2)}(G_0).
\end{align*}
Therefore, with constant probability, all queries in Phase 3 return $0$.
Furthermore, if this happens, then the posterior probability of $G_0$ and $\{G_e: e\in E(T_1,T_2)\backslash E^{(3)}\}$ are within a constant factor of each other.
In this situation, outputting anything will result in a constant error probability.
This concludes that for some $\epsilon>0$, no algorithm can solve the three-phase problem with error probability $\epsilon$.

\subsection{Proof overview for \texorpdfstring{$k$}{k}-Threshold and Counting} \label{sec:threshold:overview}

Before we discuss our techniques for $k$-Threshold and Counting, we briefly discuss the previous work of \cite{wang2024noisy}, who showed a
\[
(1 \pm o(1)) \frac{n \log \frac{k}{\delta}}{\DKL}
\]
bound for $k$-Threshold where $k = o(n)$. Their lower bound stops working for $k = \Theta(n)$ because one step in their lower bound proof reveals the locations of $k-1$ $1$'s to the algorithm, leaving only $n-k+1=(1-\Omega(1))n$ unknown bits. This case can be solved by an algorithm using $(1-\Omega(1))\frac{n \log \frac{k}{\delta}}{\DKL}$ noisy queries, meaning that this approach cannot be used to show a tight lower bound.

\paragraph{Lower bound for $k$-Threshold.}
Our lower bound for $k$-Threshold for general values of $k$ is a reduction from~\cite{wang2024noisy}'s lower bound for $k = o(n)$. In the overview, we focus on the case where $k = (n + 1) / 2$ for odd $n$. In this case, the problem is equivalent to computing the majority of $n$ input bits.

Given any instance of $k$-Threshold on a length-$n$ array for $k = n / \log n$, we first add $L$ artificial $1$'s to the array to obtain a new instance where $n' = n + L$ and $k' = k + L$. We set $L$ so that $k' = (n' + 1) / 2$ (or equivalently, $L = n - 2k + 1$). Now suppose we have an algorithm for the new instance that uses only $(1-\epsilon) \frac{n' \log \frac{k'}{\delta}}{\DKL}$ noisy queries for some $\epsilon > 0$. Whenever the algorithm queries an artificial $1$, it can be simulated without incurring an actual noisy query; instead, we only need to flip biased coin with head probability $1-p$ and return its value. Because the algorithm is for computing the majority, intuitively, by symmetry, the expected number of queries it spends on an input $0$ and an input $1$ should be the same. Furthermore, if we add the artificial $1$'s to random positions, the algorithm should not be able to distinguish an artificial $1$ with an actual $1$. Therefore, in expectation, $L / n'$ fraction of the algorithm's queries are to artificial $1$'s, so the actual query complexity for solving the original $k$-Threshold instance is
\[
\left(1-\frac{L}{n'}\right) \cdot (1-\epsilon) \frac{n' \log \frac{k'}{\delta}}{\DKL} = (1-\epsilon)\frac{n \log \frac{k'}{\delta}}{\DKL}.
\]
Because $k = n / \log n$, we have $\log k' \le \log(n + 1) = (1+o(1)) \log k$, so the above bound becomes
\[
 (1-\epsilon + o(1))\frac{n \log \frac{k}{\delta}}{\DKL},
\]
which contradicts the lower bound from \cite{wang2024noisy}.

Our lower bound for more general values of $k$ is proved using a similar idea. However, we no longer have the symmetry between $0$ and $1$, so we need to reduce from the case $k = n / \log n$ or $k = (n + 1) / 2$ depending on whether the algorithm spends more queries on an input $0$ or $1$.

\paragraph{Upper bound for $k$-Threshold.}
For the upper bound, \cite{wang2024noisy}'s algorithm already works also for the $k = \Theta(n)$ case. Nevertheless, we provide a much simpler algorithm that achieves the same tight upper bound. \cite{wang2024noisy} used a standard $\textsc{Check-Bit}$ procedure to estimate the value of each input bit, and then used established machinery on Noisy Sorting and Noisy Heap studied earlier \cite{feige1994computing}. In comparison, our algorithm uses an asymmetric version of the $\textsc{Check-Bit}$ procedure that estimates the value of each input bit. Using this asymmetric procedure, we essentially only need to check each input bit one by one, avoiding calling extra algorithms such as Noisy Sorting or Noisy Heap.

\paragraph{Upper bound for Counting.}
Our algorithm for Counting is based on the idea of our algorithm for $k$-Threshold.
Our algorithm for $k$-Threshold can additionally count the number of $1$'s in the input if it is at most $k$, so one natural idea for Counting is to first compute an estimation $k$ of the answer $\lVert a \rVert_1$, then use our algorithm for $k$-Threshold to compute the exact answer. However, this approach does not work when $\lVert a \rVert_1$ is very small compared to $n$, as there is no reliable way to estimate the answer within a constant factor using $o\left(n\frac{\log \frac{\lVert a\rVert_1}{\delta}}{\DKL}\right)$ queries.
We circumvent this issue by gradually increasing $k$ during the algorithm and simulate the asymmetric $\textsc{Check-Bit}$ procedure on each input bit.
We can view the asymmetric $\textsc{Check-Bit}$ procedure for each bit as a biased random walk on $\bZ$, and for different $k$ the procedure only changes the stopping condition, but not the random walk. In this way we show that $k$ will eventually stop at the correct answer with desired error probability.

\section{Preliminaries}
Throughout the paper, we use $p \in (0, 1/2)$ to denote the flipping probability of each noisy query, i.e., for a bit $x$, $\textsc{Query}(x)$ returns $x$ with probability $1-p$ and returns $1 - x$ with probability $p$. For $0 \le q \le 1$, $\Ber(q)$ denotes the Bernoulli distribution with head probability $q$.
Throughout the paper, all $\log$s have base $e$.
For two sequences $(f_n)_n$, $(g_n)_n$, we write $f_n \asymp g_n$ if $f_n = \Theta(g_n)$, i.e., there exists $\epsilon>0$ such that $\epsilon f_n \le g_n \le \epsilon^{-1} f_n$ for all $n$ large enough.

Let $\Bin(n,p)$ denote the binomial distribution. The following large deviation bound is useful.
\begin{lemma} \label{lem:binomial-large-deviation}
  Let $0<p<\frac 12$ and $0<q<1$.
  Then for large enough $m$ and integer $k = (q\pm o(1)) m$, we have
  \begin{align*}
    \bP(\Bin(m, p) = k) = \exp\left(-(\DKLs(q \parallel p) \pm o(1)) m\right),
  \end{align*}
  where
  \begin{align*}
    \DKLs(a \parallel b) = a \log \frac ab + (1-a) \log \frac{1-a}{1-b}
  \end{align*}
  is the binary KL divergence function.
\end{lemma}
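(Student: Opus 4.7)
The plan is a direct asymptotic computation from the explicit formula
\[
\bP(\Bin(m,p)=k) = \binom{m}{k} p^k (1-p)^{m-k}.
\]
Writing $r = k/m$, by the assumption $k = (q\pm o(1))m$ with $q\in(0,1)$, we have $r \to q$ and in particular $r$ is bounded away from $0$ and $1$ for $m$ large. Taking $\log$ reduces everything to evaluating $\log\binom{m}{k}$ and the two exponential factors.

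First, I would apply Stirling's formula in the form $\log n! = n\log n - n + O(\log n)$ to the three factorials in $\binom{m}{k} = m!/(k!(m-k)!)$. The linear ($-n$) terms cancel, and since $r$ is bounded away from $0$ and $1$ the remaining $O(\log m)$ errors stay $o(m)$. This yields
\[
\log\binom{m}{k} = -m\bigl(r\log r + (1-r)\log(1-r)\bigr) + O(\log m).
\]
Second, adding $\log\bigl(p^k(1-p)^{m-k}\bigr) = m\bigl(r\log p + (1-r)\log(1-p)\bigr)$ and grouping terms gives
\[
\log\bP(\Bin(m,p)=k) = -m\left(r\log\tfrac{r}{p} + (1-r)\log\tfrac{1-r}{1-p}\right) + O(\log m) = -m\,\DKLs(r\parallel p) + O(\log m).
\]
Third, I would invoke continuity: since $p\in(0,1/2)$ and $q\in(0,1)$, the function $r\mapsto \DKLs(r\parallel p)$ is continuous at $r=q$, so $r = q\pm o(1)$ implies $\DKLs(r\parallel p) = \DKLs(q\parallel p) \pm o(1)$. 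Dividing the $O(\log m)$ error by $m$ absorbs it into the $\pm o(1)$, which yields the claimed $\exp(-(\DKLs(q\parallel p)\pm o(1))m)$ form after exponentiating.

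There is essentially no obstacle here beyond bookkeeping. The only point to be careful about is checking that $r$ stays bounded away from the endpoints so that Stirling's approximation applies in the above simple form and the logarithms $\log r$, $\log(1-r)$, $\log\DKLs(r\parallel p)$ behave continuously; this is immediate from the hypotheses $0<p<1/2$, $0<q<1$, and $k=(q\pm o(1))m$.
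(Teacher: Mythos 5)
Your proof is correct and follows essentially the same route as the paper: apply Stirling's approximation to $\log\binom{m}{k}$ (using the cancellation of the linear terms and that $k/m$ is bounded away from $0$ and $1$), combine with the exponential factors, and pass from $k/m$ to $q$ via continuity of the KL divergence. The only cosmetic difference is that you introduce the intermediate variable $r=k/m$ and explicitly invoke continuity, whereas the paper performs the substitution $k=(q\pm o(1))m$ directly in one line.
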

\begin{proof}
  \begin{align*}
    \bP(\Bin(m, p) = k) =&~ \binom mk p^k (1-p)^{m-k} \\
    \nonumber =&~ \exp(m \log m - k \log k - (m-k) \log (m-k) \pm o(m)) \\
    \nonumber &~\cdot \exp(k \log p + (m-k) \log (1-p))\\
    \nonumber =&~ \exp\left( -\left(q \log \frac{q}{p} + (1-q) \log \frac{1-q}{1-p} \pm o(1)\right) m\right) \\
    \nonumber =&~ \exp(-(\DKLs(q \parallel p) \pm o(1)) m).
  \end{align*}
\end{proof}

\section{High-influence functions}
We prove \cref{thm:influence} in this section. Let us recall the theorem statement.

\LinearTotalInfluence*

\subsection{Three-phase problem}
Let $f$ be a Boolean function with total influence $\I(f) \ge cn$ and $x$ be uniformly chosen from $\{0,1\}^n$.
Our goal is to show that any algorithm computing $f(x)$ with error probability $\epsilon$ makes at least $c'n\log n$ queries in expectation.
(The error probability $1/3$ in  \cref{thm:influence} can be replaced with any $\epsilon > 0$ without affecting the asymptotic noisy query complexity, by repeating the algorithm constantly many times and taking the majority vote.)

Let $c_1,c_2>0$ be two absolute constants. We define a three-phase problem as follows.
\begin{enumerate}
  \item In Phase 1, the algorithm makes $m_1 = c_1 \log n$ queries to every coordinate.
  \item In Phase 2, the oracle reveals some elements to the algorithm.
  \item In Phase 3, the algorithm makes $m_2 = c_2 n$ adaptive exact queries.
\end{enumerate}
The goal of the algorithm is to determine the value of $f(x)$.

\begin{lemma} \label{lem:inf:reduction}
  If no algorithm can solve the three-phase problem with error probability $\epsilon>0$, then no algorithm can compute $f(x)$ with error probability $\epsilon$ using at most $c_1 c_2 n\log n$ noisy queries.
\end{lemma}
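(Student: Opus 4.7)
The plan is to prove the contrapositive. Suppose there is an algorithm $A$ computing $f(x)$ with error probability $\le \epsilon$ using at most $T := c_1 c_2 n \log n = m_1 m_2$ noisy queries in the worst case. I will construct an algorithm $B$ for the three-phase problem with the same error probability, ignoring Phase 2's free information entirely.

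The key combinatorial fact is a pigeonhole: since $A$ issues at most $T = m_1 m_2$ queries in total, the number of coordinates $i \in [n]$ that $A$ queries strictly more than $m_1$ times is at most $T/(m_1+1) \le m_2$. Call these coordinates \emph{heavy}; they are the only ones for which $B$ will spend Phase 3 exact queries. In Phase 1, $B$ collects the required $m_1$ non-adaptive noisy samples of each coordinate. In Phase 3, $B$ internally simulates $A$, answering $A$'s $j$-th noisy query to coordinate $i$ as follows. If $j \le m_1$, return the next unused Phase 1 sample of $x_i$. If $j = m_1 + 1$, use one Phase 3 exact query to learn $x_i$, then return $x_i \oplus Z$ for a freshly drawn $Z \sim \Ber(p)$. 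If $j > m_1 + 1$, return $x_i \oplus Z'$ for a freshly drawn $Z' \sim \Ber(p)$, using the $x_i$ already learned. At the end, $B$ outputs whatever $A$ outputs.

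For correctness, every answer $B$ feeds to $A$ is an independent $\Ber(p)$-corrupted observation of the true bit $x_i$, whether the randomness comes from a pre-collected Phase 1 sample or from a fresh post-hoc coin flip after $x_i$ has been revealed. Consequently the joint law of the transcript seen by $A$ inside the simulation is identical to its law in the real noisy query model, so $B$ inherits $A$'s error of at most $\epsilon$. The budget is respected because the number of Phase 3 exact queries equals the number of heavy coordinates, which is at most $m_2$. There is no serious obstacle in this argument; the only subtlety is that the Phase 3 queries are issued adaptively, interleaved with the simulation of $A$, but this is permitted since Phase 3 is defined to be adaptive, and each exact query is made strictly before $A$'s subsequent decisions that could depend on its answer.
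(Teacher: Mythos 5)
Your proposal is correct and follows essentially the same simulation argument as the paper: answer the first $m_1$ queries to each coordinate from Phase~1 samples, spend one Phase~3 exact query on the $(m_1+1)$-st hit to a coordinate, and thereafter simulate noise locally; the pigeonhole bound $T/(m_1+1)\le m_2$ is exactly the paper's observation. The only (inconsequential) difference is that the paper phrases the budget bound in expectation, matching the expected-query formulation used throughout, whereas you phrase it as a worst-case count; both are compatible with the way Phase~3 is analyzed.
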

\begin{proof}
  Suppose there is an algorithm $\cA$ that computes $f(x)$ with error probability $\epsilon$ using at most $m_1 m_2$ noisy queries.
  We define an algorithm $\cA'$ that solves the three-phase problem with the same error probability.

  Let $x\in \{0,1\}^n$ be chosen uniformly randomly.
  Algorithm $\cA'$ receives query results in Phase 1 and 2 and enters Phase 3.
  It simulates algorithm $\cA$ on the same input $x$ as follows.
  \begin{enumerate}[label=(\arabic*)]
    \item Initially, $x_i\gets *$ for all $i\in [n]$.
    \item When $\cA$ queries coordinate $i$:
    \begin{enumerate}[label=(\alph*)]
      \item If this is the $k$-th time coordinate $i$ is queried for $k \le m_1$, return the $k$-th noisy query result on coordinate $i$ in Phase 1 to $\cA$.
      \item Suppose coordinate $i$ has been queried more than $m_1$ times. If $x_i=*$, make an exact query to coordinate $i$ in Phase 3 and let $x_i\gets$ the query result.
      Return $\BSC_p(x_i)$ to $\cA$.
    \end{enumerate}
    \item When $\cA$ returns, $\cA'$ returns the same result.
  \end{enumerate}

  Because $\cA$ has error probability $\epsilon$, $\cA'$ also has error probability $\epsilon$.
  It suffices to prove that in Phase 3, $\cA'$ makes at most $m_2$ exact queries.
  Note that the number of exact queries $\cA'$ makes in Phase 3 is equal to the number of coordinates $i$ to which $\cA$ makes more than $m_1$ queries.
  Because $\cA$ makes $m_1 m_2$ queries, the expected number of such coordinates is at most $m_2$.
\end{proof}

By \cref{lem:inf:reduction}, to prove \cref{thm:influence}, it suffices to prove hardness of the three-phase problem.

\begin{proposition} \label{prop:inf:three-phase-hard}
  For some $c_1,c_2,\epsilon>0$, no algorithm can solve the three-phase problem with error probability $\epsilon$, where the input is uniformly chosen from $\{0,1\}^n$.
\end{proposition}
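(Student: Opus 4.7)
The plan is to instantiate the three-phase framework from Section~\ref{sec:influence:overview}. I pick constants $c_1 \in (0, 1/\DKL)$ (so that the Phase 2 exponent $c_3 := c_1 \DKL$ stays strictly below $1$) and $c_2 > 0$ small, and engineer Phase 2 so that (i)~the posterior of the unrevealed coordinates is an i.i.d.\ product $\Ber(q)^{\otimes U}$ with $q = n^{-c_3 \pm o(1)}$, and (ii)~the ``reconstructed'' vector $y$ obtained by gluing the revealed values $x_R$ to a fresh $\Ber(q)^{\otimes U}$ sample on the unrevealed coordinates is uniformly distributed on $\{0,1\}^n$. Property (ii) is the bridge that imports the hypothesis $\I(f) \ge cn$ (which speaks only about uniform inputs) into the biased Phase 3 world.

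\textbf{Phases 1 and 2.} For each coordinate $i$, let $a_i$ be the number of $1$-responses among its $m_1 = c_1 \log n$ Phase 1 queries; conditional on $x_i$ it is $\Bin(m_1, p)$ or $\Bin(m_1, 1-p)$. Using \cref{lem:binomial-large-deviation}, I choose an interval $I$ of width $o(m_1)$ around $pm_1$ for which $\bP(a_i \in I \mid x_i = 0) = 1 - o(1)$, $\bP(a_i \in I \mid x_i = 1) = n^{-c_3 \pm o(1)}$, and the likelihood ratio $\bP(a=k \mid x_i=1)/\bP(a=k \mid x_i=0)$ varies by only a sub-polynomial factor across $k \in I$. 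Step 2a reveals every $x_i$ with $a_i \notin I$; Step 2b reveals each remaining $x_i$ independently with probability $q_{a_i}$, where the $q_k \in [0,1]$ are chosen so that $(1-q_k)\,\bP(a=k \mid x_i=1)/\bP(a=k \mid x_i=0)$ is constant in $k$, equalizing the conditional probability $\bP(x_i = 1 \mid \text{unrevealed},\, a_i=k)$ across $k$. A direct computation then yields marginal reveal probabilities $p_+ = 1 - n^{-c_3 \pm o(1)}$ and $p_- = o(1)$, the i.i.d.\ $\Ber(q)$ posterior on $U$, and the uniformity of $y$.

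\textbf{Phase 3 and the main obstacle.} Writing $g$ for the restriction of $f$ fixing $x_R$, I complete the argument in three sub-steps. \emph{(3a)} Using the uniformity of $y$ and the independence of $\mathbf{1}[i \in U]$ from $y_{-i}$, I expand
\[
\bE[\I_q(g)] = \sum_i \bP(i \in U)\cdot \bP_y\!\bigl(f(y_{-i},0) \ne f(y_{-i},1)\bigr) = \sum_i \bP(i \in U)\,\Inf_i(f) = \Omega(\I(f)) = \Omega(n).
\]
\emph{(3b)} For any Boolean $h$ and coordinate $j$ in its support, the identity $\bE_{x_j \sim \Ber(q)}[\I_q(h_{x_j})] = \I_q(h) - \Inf_j(h)$ shows that each Phase 3 exact query decreases $\I_q$ by at most $\Inf_j(g) \le 1$ in conditional expectation; the tower property along the decision tree then gives that the residual function $g^{(t)}$ after $t = c_2 n$ queries satisfies $\bE[\I_q(g^{(t)})] = \Omega(n)$ for sufficiently small $c_2$. \emph{(3c)} The algorithm's optimal leaf-error equals $\min_b \bP_{\Ber(q)^{\otimes U'}}(g^{(t)} = b)$, and I must lower bound its expectation over transcripts by some $\epsilon > 0$. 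This is where I expect the real work: the usual Poincar\'e inequality $\Var_q \le \lambda(q)\,\I_q$ points the wrong way, and for $q \to 0$ a function can have $\I_q = \Omega(n)$ and yet vanishing variance (e.g.\ XOR with $q|U| \to 0$), which is exactly why I need $c_3 < 1$, i.e., $q|U| = \omega(1)$. I plan to handle Step 3c by a two-sample coupling through the uniform reconstructed vector $y$: draw the true input $x$ together with a fresh $\Ber(q)^{\otimes U'}$ copy $x'$ that agrees with $x$ on all revealed and queried coordinates, and lower bound $\bP(f(x) \ne f(x')) = 2\,\bE_H[\Var_H(f)]$ using the residual influence bound and $q|U'| = \omega(1)$; since the expected leaf-error is at least $\bE_H[\Var_H(f)]$, any such positive lower bound closes the argument and delivers the desired $\epsilon$.
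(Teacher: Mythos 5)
Your Phase 1 and Phase 2 setup (interval $I$, subsampling weights $q_k$, the $p_+,p_-$ marginals, and the ``reconstructed vector'' $y$ which is uniform) matches the paper's construction, and your steps 3a and 3b are essentially the paper's calculation $\rho(*)\,\I(f) = \bE_s[\I_q(f_s)]$ and \cref{lem:inf:phase-3:step-2}, respectively. Your reconstructed-$y$ phrasing of 3a is a clean way to package the paper's change-of-measure identity; the independence you invoke ($\mathbbm{1}[s_i=*]$ from $y_{-i}$) is precisely the independence the paper uses in its fourth step of that derivation.

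The gap is in step 3c, which is the crux of the whole argument and which you yourself flag as ``where I expect the real work.'' You correctly observe that a Poincar\'e-type bound goes the wrong way and that $\I_q=\Omega(n)$ can coexist with vanishing variance if $q|U|\to 0$ (XOR is a clean example), and you correctly restrict $c_3<1$ to avoid this. But after that, you only announce a plan (``two-sample coupling through $y$, lower bound $\bP(f(x)\ne f(x'))=2\bE_H[\Var_H(f)]$ using the residual influence bound and $q|U'|=\omega(1)$''); there is no argument that actually converts $\bE[\I_q((f_s)_t)]=\Omega(n)$ into $\bE_H[\Var_H]=\Omega(1)$. Rewriting the target as $2\bE_H[\Var_H]$ is just a restatement, and the coupling as described gives no purchase: one still needs a statement of the form ``$\I_q(g)\ge cn$ and $q\in[\Theta(1/n),c/6]$ imply $\bE_{\Ber(q)^{\otimes U}}[g]\in[c',1-c']$.'' The paper supplies exactly this as \cref{lem:inf:phase-3:step-3}, via a short but nontrivial union-bound argument (find $\lfloor c/6q\rfloor$ high-influence coordinates, condition on them all being $0$, and count the contribution from the event that exactly one of them is $1$); nothing in your proposal replaces that lemma. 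As an aside, the paper only needs $q\ge \tfrac{1}{3n}$, i.e.\ $qn\ge \tfrac13$, not $qn=\omega(1)$, so your $c_3<1$ restriction is stronger than necessary but not harmful.

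Until 3c is filled in with a concrete proof of the variance (or balancedness) lower bound, the proposal does not establish the proposition.
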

\cref{thm:influence} follows by combining \cref{lem:inf:reduction} and \cref{prop:inf:three-phase-hard}. The rest of the section is devoted to the proof of \cref{prop:inf:three-phase-hard}.

\subsection{Phase 1} \label{sec:inf:phase-1}
In Phase 1, the algorithm makes $m_1 = c_1 \log n$ queries to every element $i\in [n]$.
Let $A = \{i\in [n]: x_i=1\}$, where $x$ is the input bit string.
Let $a_i$ denote the number of times where a query to $i$ returns $1$. Then for $i\in A$, $a_i\sim \Bin(m_1,1-p)$; for $i\not \in A$, $a_i \sim \Bin(m_1,p)$. For $0\le j\le m_1$, define
\begin{align}
  p_j = \bP(\Bin(m_1,1-p)=j) = \binom{m_1}j (1-p)^j p^{m_1-j}.
\end{align}
Let $I = \left[ p m_1 - m_1^{0.6}, p m_1 + m_1^{0.6} \right]$.
\begin{lemma} \label{lem:inf:binom-concentrate}
  Let $x\sim \Bin(m_1, 1-p)$, $y\sim \Bin(m_1, p)$.
  Then
  \begin{align}
    \label{eqn:lem-inf-binom-concentrate:i} \bP(x\in I) &= n^{-c_3\pm o(1)}, \\
    \label{eqn:lem-inf-binom-concentrate:ii} \bP(y\in I) &= 1-o(1),
  \end{align}
  where $c_3 = c_1(1-2p)\log\frac{1-p}p$.
\end{lemma}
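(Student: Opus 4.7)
The plan is to prove the two bounds separately, with part (ii) being a standard concentration argument and part (i) being a direct application of the large deviation formula in \cref{lem:binomial-large-deviation} summed over integers in $I$. The interval $I$ has length $2m_1^{0.6}+1 = n^{o(1)}$, which is much larger than the Gaussian-scale $\sqrt{m_1} = \Theta(\sqrt{\log n})$ but much smaller than $m_1$, so it sits in the right regime for both approaches: wide enough to capture essentially all of the $\Bin(m_1,p)$ mass, yet thin enough that its size does not affect the exponential rate for $\Bin(m_1,1-p)$.

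For \eqref{eqn:lem-inf-binom-concentrate:ii}, the mean of $y \sim \Bin(m_1,p)$ is $pm_1$ and $I$ is exactly the window of radius $m_1^{0.6}$ around this mean. Since $m_1^{0.6}/\sqrt{m_1} = m_1^{0.1}\to\infty$, Hoeffding's inequality yields $\bP(y \notin I) \le 2\exp(-2m_1^{0.2}) = o(1)$, which proves the bound.

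For \eqref{eqn:lem-inf-binom-concentrate:i}, I would apply \cref{lem:binomial-large-deviation} to each integer $k \in I$. Any such $k$ satisfies $k/m_1 = p + O(m_1^{-0.4}) = p \pm o(1)$, so the lemma (taking the underlying success probability to be $1-p$ and $q = p$) gives
\begin{align*}
\bP(\Bin(m_1,1-p) = k) = \exp\bigl(-(\DKLs(p \parallel 1-p) \pm o(1))\, m_1\bigr).
\end{align*}
A direct computation gives $\DKLs(p\parallel 1-p) = p\log\frac{p}{1-p} + (1-p)\log\frac{1-p}{p} = (1-2p)\log\frac{1-p}{p}$, so plugging in $m_1 = c_1\log n$ the above equals $n^{-c_3 \pm o(1)}$. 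Summing over the at most $2m_1^{0.6}+1 = n^{o(1)}$ integers in $I$ gives the upper bound $\bP(\Bin(m_1,1-p) \in I) \le n^{-c_3+o(1)}$, while the matching lower bound comes from a single term in the sum.

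The only mild subtlety will be ensuring that the $o(1)$ errors produced by \cref{lem:binomial-large-deviation} are uniform in $k\in I$; this follows because the ratio $k/m_1$ approaches $p$ at the uniform rate $O(m_1^{-0.4})$ over all $k\in I$, and the $o(1)$ in the lemma depends only on this rate (via Stirling's approximation). Once that is verified, combining the per-term estimate with the trivial $|I| = n^{o(1)}$ size bound closes both directions of \eqref{eqn:lem-inf-binom-concentrate:i}.
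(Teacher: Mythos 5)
Your proof is correct and follows essentially the same route as the paper: apply \cref{lem:binomial-large-deviation} pointwise over $k \in I$, observe that the interval has size $n^{o(1)}$ so the sum does not change the exponential rate, and dispatch \eqref{eqn:lem-inf-binom-concentrate:ii} by a standard concentration bound (Hoeffding in your write-up, Chernoff in the paper — the same estimate). Your remark on the uniformity of the $o(1)$ error over $k \in I$, and the explicit unwinding of $\DKLs(p \parallel 1-p) = (1-2p)\log\frac{1-p}{p}$, are details the paper leaves implicit but do not change the argument.
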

\begin{proof}
  By \cref{lem:binomial-large-deviation}, for $k\in I$, we have
  \begin{align*}
    \bP(x=k) = \exp(-(\DKL\pm o(1)) m_1) = n^{-c_1 \DKL \pm o(1)}.
  \end{align*}
  So
  \begin{align*}
    \bP(x\in I) = \sum_{k\in I} \bP(x=k) = n^{-c_1 \DKL \pm o(1)}.
  \end{align*}
  This proves \cref{eqn:lem-inf-binom-concentrate:i}.

  \cref{eqn:lem-inf-binom-concentrate:ii} follows from the Chernoff bound.
\end{proof}

Let $\bP^{(0)} = \Ber(1/2)^{\otimes n}$ denote the prior distribution of $A$ and $\bP^{(1)}$ denote the posterior distribution of $A$ conditioned on observations in Phase 1. Then for any set $B\subseteq [n]$ we have
\begin{align*}
  \bP^{(1)}(B) &\propto \bP\left((a_i)_{i\in [n]} | B\right) \bP^{(0)}(B) \\
  &= \left(\prod_{i\in B} p_{a_i}\right) \left(\prod_{i\in [n] \setminus B} p_{m_1-a_i}\right) \bP^{(0)}(B) \\
  &\propto \left(\prod_{i\in B} p_{a_i}\right) \left(\prod_{i\in [n] \setminus B} p_{m_1-a_i}\right).
\end{align*}

\subsection{Phase 2} \label{sec:inf:phase-2}
In Phase 2, the oracle reveals some elements in $A$ and not in $A$ as follows.
\begin{enumerate}[label=2\alph*.]
  \item In Step 2a, the oracle reveals elements $i$ with $a_i\not \in I$.
  \item In Step 2b, the oracle reveals every $i\in A$ independently with probability $q_{a_i}$, for some constants $(q_k)_{k\in I}$ to be chosen later.
\end{enumerate}

\subsubsection{Step 2a} \label{sec:inf:phase-2:step-a}
Let $S^{(2a)}_+$ (resp.~$S^{(2a)}_-$) denote the set of elements in $A$ (resp.~in $A^c$) revealed in Step 2a.
For $B\subseteq [n]$, we say $B$ is consistent with the observations $\left(S^{(2a)}_+, S^{(2a)}_-\right)$ if $S^{(2a)}_+ \subseteq B$ and $S^{(2a)}_-\cap B = \emptyset$.
Let $\bP^{(2a)}$ denote the posterior distribution at the end of Step 2a, and $\cC^{(2a)}$ be its support.
Then $B\subseteq [n]$ is in $\cC^{(2a)}$ if and only if it is consistent with $\left(S^{(2a)}_+, S^{(2a)}_-\right)$.
For $B\in \cC^{(2a)}$, the posterior probability $\bP^{(2a)}(B)$ is given by
\begin{align} \label{eqn:inf-phase-2-step-a:post}
  \bP^{(2a)}(B) \propto \bP^{(1)}(B)
  \propto \left(\prod_{i\in B} p_{a_i}\right) \left(\prod_{i\in B^c}p_{m_1 - a_i}\right).
\end{align}

\subsubsection{Step 2b} \label{sec:inf:phase-2:step-b}
Let $S^{(2b)}_+$ be the set of elements of $A$ revealed in Step 2b that were not revealed in Step 2a.
Define $S^{(\le 2b)}_+ = S^{(2a)}_+ \cup S^{(2b)}_+$.
Define $\bP^{(2b)}$, $\cC^{(2b)}$ similarly to \cref{sec:inf:phase-2:step-a}.
Then $\cC^{(2b)}$ is the set $B\in \cC^{(2a)}$ that are consistent with $\left(S^{(2b)}_+,\emptyset\right)$.
For any $B\in \cC^{(2b)}$, we have
\begin{align} \label{eqn:inf-phase-2-step-b:post-step}
  \bP^{(2b)}(B) &\propto \bP\left(S^{(2b)}_+ | B, S^{(2a)}_+\right) \bP^{(2a)}(B) \\
  \nonumber &\propto \left(\prod_{i\in B\backslash S^{(\le 2b)}_+} (1-q_{a_i})\right) \bP^{(2a)}(B) \\
  \nonumber &\propto \left(\prod_{i\in B\backslash S^{(\le 2b)}_+} (1-q_{a_i}) p_{a_i}\right)\left(\prod_{i\in B^c\backslash S^{(2a)}_-}p_{m_1 - a_i}\right) \\
  \nonumber &\propto \left(\prod_{i\in B\backslash S^{(\le 2b)}_+} \frac{(1-q_{a_i}) p_{a_i}}{p_{m_1 - a_i}}\right).
\end{align}
Note that for any $B\in \cC^{(2b)}$ and any $i\in B\backslash S^{(\le 2b)}_+$, we have $a_i \in I$.

Let us now choose the values of $q_k$ for $k\in I$.
For $k\in I$, define
\begin{align*}
  q_k = 1-\frac{p_{m_1-k} p_{k_l}}{p_k p_{m_1-k_l}}
\end{align*}
where $k_l=p m_1 - \log^{0.6}n$ is the left endpoint of $I$.
Because $\frac{p_{m_1-k}}{p_k} = \left(\frac{1-p}p\right)^{m_1-2k}$ is decreasing in $k$, we have $q_k\in [0, 1]$ for $k\in I$.
So this choice of $q_k$'s is valid.

With this choice of $q_k$, we can simplify \cref{eqn:inf-phase-2-step-b:post-step} as
\begin{align} \label{eqn:inf-phase-2-step-b:post}
  \bP^{(2b)}(B) \propto \left(\frac{p_{k_l}}{p_{m_1-k_l}}\right)^{|B|}.
\end{align}
\cref{eqn:inf-phase-2-step-b:post} is very useful and greatly simplifies the posterior distribution.
Importantly, $(a_i)_{i\in [n]}$ does not appear directly in the expression.
Therefore, $\left(S^{(\le 2b)}_+, S^{(2a)}_-, |A|\right)$ is a sufficient statistic for $A$ at the end of Step 2b.

Let us now consider the distribution of $\left(S^{(\le 2b)}_+, S^{(2a)}_-\right)$ conditioned on $A$.

Let $p_+$ be the probability that a coordinate $i\in A$ is in $S^{(\le 2b)}_+$.
Every coordinate $i\in A$ is independently in $S^{(2a)}_+$ with probability $\bP(\Bin(m_1,1-p)\not \in I) = 1-n^{-c_3\pm o(1)}$ (\cref{lem:inf:binom-concentrate}).
Every coordinate $i\in A$ is independently in $S^{(2b)}_+$ with probability
\begin{align*}
  \sum_{k\in I} p_k q_k = \sum_{k\in I} p_k \left(1-\frac{p_{m_1-k} p_{k_l}}{p_k p_{m_1-k_l}}\right).
\end{align*}
For a fixed $i\in A$, the events $i\in S^{(2a)}_+$ and $i\in S^{(2b)}_+$ are disjoint, so
\begin{align*}
  p_+ = \bP(\Bin(m_1,1-p)\not \in I) + \sum_{k\in I} p_k q_k \ge \bP(\Bin(m_1,1-p)\not \in I) =  1-n^{-c_3\pm o(1)}.
\end{align*}
On the other hand,
\begin{align*}
  p_+ = 1 - \sum_{k\in I} p_k (1 - q_k) \le 1 - p_{k_l} (1 - q_{k_l}) = 1 - p_{k_l} = 1-n^{-c_3\pm o(1)},
\end{align*}
where the last step is by \cref{lem:binomial-large-deviation}.
Thus,
\begin{align} \label{eqn:inf-p-plus}
    p_+ = 1-n^{-c_3\pm o(1)}.
\end{align}

Let $p_-$ be the probability that a coordinate $i\in A^c$ is in $S^{(2a)}_-$.
By \cref{lem:inf:binom-concentrate},
\begin{align} \label{eqn:inf-p-minus}
  p_- = \bP(\Bin(m_1,p)\not \in I) = o(1).
\end{align}

Therefore, observations up to Step 2b have the same effect as the following procedure:
\begin{definition}[Alternative observation procedure] \label{defn:inf:alt-obs}
  Let $A\sim \Ber(1/2)^{\otimes n}$.
  \begin{enumerate}[label=(\arabic*)]
    \item Observe every coordinate $i\in A$ independently with probability $p_+$ (\cref{eqn:inf-p-plus}).
    \item Observe every coordinate $i\in A^c$ independently with probability $p_-$ (\cref{eqn:inf-p-minus}).
  \end{enumerate}
\end{definition}

By \cref{eqn:inf-phase-2-step-b:post}, the posterior distribution of $A$ after Phase 2 is a biased product distribution on the unrevealed coordinates.

\subsection{Phase 3}
In Phase 3, the algorithm makes at most $c_2 n$ adaptive exact queries.
We will show that for $c_2$ small enough, no algorithm is able to determine $f(x)$ with very small error probability.

Our proof strategy is as follows.
\begin{enumerate}[label=3\alph*.]
  \item Because $f$ has linear total influence, after Phase 2 ends, the Boolean function on the unrevealed coordinates will have a biased version of total influence $\I_q$ at least $\Omega(n)$ in expectation.
  \item Every (adaptive) query made in Phase 3 decreases $\I_q$ by at most $1$. Therefore, after Phase 3, the Boolean function on the remaining unrevealed coordinates has $\I_q=\Omega(n)$ in expectation.
  \item Finally, we show that if a Boolean function has $\I_q=\Omega(n)$, then it is impossible to guess $f(x)$ (where $x$ follows a biased product distribution) with very small error probability.
\end{enumerate}

Because observations up to Phase 2 have been simplified by our analysis, we make some definitions and restate the problem we need to solve in Phase 3.

\subsubsection{Preliminaries}
We have a Boolean function $f: \{0,1\}^n \to \{0,1\}$ and an input $x\sim \Ber(1/2)^{\otimes n}$, and the goal is to determine $f(x)$.
Then we independently observe each $i\in [n]$ with probability is $p_+=1-n^{-c_3 \pm o(1)}$ for $x_i=1$ and $p_-=o(1)$ for $x_i=0$.
Let $s\in \{0,1,*\}^n$ be the observations. That is, if coordinate $i$ is revealed, then $s_i$ is the revealed value; otherwise $s_i=*$.
Let $U = \{i\in [n]: s_i=*\}$ be the unrevealed coordinates.
Conditioned on $s$, the distribution of $x_U$ is a product of $\Ber(q)$, where
\begin{align} \label{eqn:inf:phase-3:q}
  q=\frac{1-p_+}{1-p_+ + 1-p_-} = n^{-c_3 \pm o(1)}.
\end{align}
Let $f_s: \{0,1\}^U\to \{0,1\}$ be the function $f_s(x_U) = f(s\lhd x_U)$ for all $x_U \in \{0,1\}^U$, where $s\lhd x_U$ denotes the bit string where all $*$ in $s$ are replaced with the corresponding value in $x_U$.
Let $\rho$ be the distribution $\left(\frac{p_-}2, \frac{p_+}2, 1-\frac{p_-}2-\frac{p_+}2\right)$ on $\{0,1,*\}$.
Then without conditioning on $x$, $s$ has distribution $\rho^{\otimes n}$.

For any Boolean function $g: \{0,1\}^S\to \{0,1\}$, define the $q$-biased influence of coordinate $i\in S$ as
\begin{align*}
  \Inf_{q,i}(f) = \bP_{x\sim \Ber(q)^{\otimes S}}( f(x) \ne f(x\oplus e_i))
\end{align*}
and the $q$-biased total influence as
\begin{align*}
  \I_q(f) = \sum_{i\in S} \Inf_{q,i}(f).
\end{align*}
When we mention the $q$-biased total influence of the function $f_s$, the sum is over the unrevealed coordinates $i\in U$.

For a string $y\in \{0,1\}^S$, let $D_y$ denote the distribution of $t\in \{0,1,*\}^S$ where for $i\in S$ with $y_i=1$, $t_i=1$ with probability $p_+$ and $t_i=*$ with probability $1-p_+$; for $i\in S$ with $y_i=0$, $t_i=0$ with probability $p_-$ and $t_i=*$ with probability $1-p_-$.
For a string $t\in \{0,1,*\}^S$, let $E_s$ denote the distribution of $y\in \{0,1\}^S$ where $y_i=t_i$ if $t_i\in \{0,1\}$ and $y_i\sim \Ber(q)$ independently for $i\in S$ with $t_i=*$, where $q$ is defined in \cref{eqn:inf:phase-3:q}.
With these definitions, we have $x\sim \Ber(1/2)^{\otimes n}$, $s\sim \rho^{\otimes n}$, $x\sim E_s$ conditioned on $s$, $s\sim D_x$ conditioned on $x$.

\subsubsection{Step 3a}
We connect the total influence $\I(f)$ with the biased total influence $\I_q(f_s)$.

Fix $i\in [n]$, we have (in the following, $x \cup 0_i$ and $x \cup 1_i$ denotes setting the $i$-th bit of $x$ as $0$ and $1$, respectively)
\begin{align*}
  \Inf_i(f) &= \bP_{x\sim \Ber(1/2)^{\otimes ([n]\backslash i)}} ( f(x \cup 0_i) \ne f(x \cup 1_i)) \\
  &= \bP_{\substack{x\sim \Ber(1/2)^{\otimes ([n]\backslash i)}\\ s\sim D_x }} ( f(x \cup 0_i) \ne f(x \cup 1_i)) \\
  &= \bP_{\substack{s\sim \rho^{\otimes ([n]\backslash i)}\\ x\sim E_s }} ( f(x \cup 0_i) \ne f(x \cup 1_i)) \\
  &= \frac{1}{\bP_{s_i\sim \rho}(s_i=*)} \cdot \bP_{\substack{s\sim \rho^{\otimes [n]}\\ x\sim E_{s_{[n]\backslash i}} }} ( f(x \cup 0_i) \ne f(x \cup 1_i) \land s_i=*) \\
  &= \frac{1}{\bP_{s_i\sim \rho}(s_i=*)} \cdot \bP_{\substack{s\sim \rho^{\otimes [n]}\\ x\sim E_s }} ( f(x) \ne f(x \oplus e_i) \land s_i=*)\\
  &= \frac 1{\rho(*)} \cdot \bE_{s\sim \rho^{\otimes [n]}} \left[\mathbbm{1}_{s_i=*} \cdot \bP_{x\sim E_s} (f(x)\ne f(x \oplus e_i)) \right].
\end{align*}
The first step is by definition of $\Inf_i$.
The second step is because $s$ is not involved in the condition.
The third step is by considering the joint distribution between $x$ and $s$.
The fourth step is because the two conditions $f(x \cup 0_i) \ne f(x \cup 1_i)$ and $s_i=*$ are independent: the former depends on $s_{[n]\backslash i}$ and the latter depends on $s_i$.
The fifth step rewrites the condition. %
The sixth step changes the order of checking $s_i=*$ and choosing $x\sim E_s$.

Summing over $i\in [n]$, we have
\begin{align*}
  \rho(*) \I(f) &= \sum_{i\in [n]} \bE_{s\sim \rho^{\otimes [n]}} \left[\mathbbm{1}_{s_i=*} \cdot \bP_{x\sim E_s} (f(x)\ne f(x \oplus e_i)) \right] \\
  &= \bE_{s\sim \rho^{\otimes [n]}} \left[\sum_{i\in [n]: s_i=*} \bP_{x\sim E_s} (f(x)\ne f(x \oplus e_i)) \right] \\
  &= \bE_{s\sim \rho^{\otimes [n]}} \I_q(f_s).
\end{align*}
The second step is by linearity of expectation.
The third step is by definition of $\I_q$.

Note that $\rho(*) = 1-\frac{p_-}2-\frac{p_+}2 = \frac12 \pm o(1)$.
Because $\I(f) \ge c n$, we have
\begin{align} \label{eqn:inf:phase-3:step-1:q-inf}
  \bE_{s\sim \rho^{\otimes [n]}} \I_q(f_s) \ge (2c\pm o(1)) n.
\end{align}
That is, $f_s$ has expected $q$-biased total influence at least $(2c\pm o(1)) n$.

\subsubsection{Step 3b}
We prove the following lemma, which essentially says that adaptive exact queries in Phase 3 can only decrease the $q$-biased total influence $\I_q$ by a certain amount. In the following, recall the definition of $f_t$ for a Boolean function $f: \{0,1\}^n\to \{0,1\}$ and $t\in \{0,1,*\}^n$ is $f$ where we restrict all input coordinates $i$ with $t_i \ne *$ to be equal to $t_i$.

\begin{lemma} \label{lem:inf:phase-3:step-2}
  Let $f: \{0,1\}^n\to \{0,1\}$ be a Boolean function.
  Suppose the input $x$ follows distribution $\Ber(q)^{\otimes n}$.
  Consider an algorithm which adaptively makes at most $m$ exact queries in expectation.
  Let $t\in \{0,1,*\}^n$ be the random variable denoting the query results.
  Then $\bE[\I_q(f_t)] \ge \I_q(f)-m$, where $\bE$ is over the randomness of the revealed coordinates and the randomness of the algorithm.
\end{lemma}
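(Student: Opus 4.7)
The plan is to first establish the bound for a single non-adaptive exact query and then promote it to the adaptive setting via a stopping-time argument. For the single-query step, fix any coordinate $j\in[n]$ and observe that for any $i\ne j$, by conditioning on $x_j$, one has
\[
\Inf_{q,i}(f) = (1-q)\,\Inf_{q,i}(f_{0_j}) + q\,\Inf_{q,i}(f_{1_j}),
\]
where $f_{b_j}$ denotes $f$ restricted by $x_j=b$, and the influence $\Inf_{q,i}$ on the right is taken with respect to $\Ber(q)^{\otimes([n]\setminus\{j\})}$. Meanwhile, $\Inf_{q,j}(f_{b_j})=0$ since $j$ is no longer a free coordinate. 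Summing over $i$ and taking expectation over $x_j\sim\Ber(q)$ yields
\[
\bE_{x_j\sim \Ber(q)}[\I_q(f_{t})] \;=\; \I_q(f)-\Inf_{q,j}(f) \;\ge\; \I_q(f)-1,
\]
where $t$ is the result of the single query. So one exact query costs at most $1$ in expected $q$-biased total influence.

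For the adaptive case, let $\tau$ denote the (random) number of exact queries made by the algorithm, let $Q_k$ denote the coordinate queried at step $k$, and let $T_k\in\{0,1,*\}^n$ denote the observation vector after $k$ queries (set $T_k=T_\tau$ for $k>\tau$). Let $\mathcal{F}_{k-1}$ be the $\sigma$-algebra generated by the algorithm's internal randomness and the observations $T_{k-1}$. Since the algorithm's decision to make a $k$-th query and its choice of $Q_k$ are measurable with respect to $\mathcal{F}_{k-1}$, the event $\{k\le\tau\}$ lies in $\mathcal{F}_{k-1}$, and conditioned on $\mathcal{F}_{k-1}$ with $k\le\tau$, the unrevealed coordinates of $x$ remain i.i.d.~$\Ber(q)$ (since the prior is product Bernoulli, and conditioning on previously revealed coordinates leaves the rest independent). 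Applying the single-query calculation above conditionally then gives
\[
\bE\bigl[\I_q(f_{T_{k-1}})-\I_q(f_{T_k}) \,\big|\, \mathcal{F}_{k-1},\, k\le\tau\bigr] \;=\; \Inf_{q,Q_k}(f_{T_{k-1}}) \;\le\; 1.
\]

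To assemble the final bound, telescope and take total expectation:
\[
\bE[\I_q(f)-\I_q(f_t)] \;=\; \sum_{k\ge 1} \bE\bigl[(\I_q(f_{T_{k-1}})-\I_q(f_{T_k}))\,\mathbbm{1}_{k\le\tau}\bigr].
\]
Using $\{k\le\tau\}\in\mathcal{F}_{k-1}$ and the conditional bound above, each summand is at most $\bP(k\le\tau)$, so the sum is at most $\sum_{k\ge 1}\bP(k\le\tau)=\bE[\tau]\le m$, finishing the proof. The one subtlety worth flagging is justifying the interchange of sum and expectation and the optional-stopping step when $\tau$ is only bounded in expectation; this is handled by monotone convergence since the per-step decrements are nonnegative (each influence is in $[0,1]$), so no integrability issues arise.
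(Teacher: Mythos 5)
Your proposal is correct and takes essentially the same approach as the paper: the single-query computation (decompose $\Inf_{q,i}(f)$ by conditioning on the queried coordinate, sum, and subtract $\Inf_{q,j}(f)\le 1$) is identical to the paper's. The only difference is that the paper compresses the adaptive step into a one-line appeal to induction, whereas you spell it out more carefully as a telescoping martingale/stopping-time argument that explicitly handles the case where the query count is only bounded in expectation; that extra rigor is a small but welcome improvement.
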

\begin{proof}
  By induction it suffices to prove the case where the algorithm makes exactly one query.
  Without loss of generality, assume that the algorithm makes an query to coordinate $1$.
  Then $t=1*^{n-1}$ with probability $q$ and $t=0*^{n-1}$ with probability $1-q$.
  \begin{align*}
    \bE[\I_q(f_t)] &= q \I_q(f_{1*^{n-1}}) + (1-q) \I_q(f_{0*^{n-1}}) \\
    &= q \sum_{2\le i\le n} \Inf_{q,i}(f_{1*^{n-1}}) +(1-q) \sum_{2\le i\le n} \Inf_{q,i}(f_{0*^{n-1}}) \\
    &= \sum_{2\le i\le n} \Inf_{q,i}(f) \\
    &= \I_q(f) - \Inf_{q,1}(f) \\
    &\ge \I_q(f)-1.
  \end{align*}
  The first step is by expanding the expectation.
  The second step is by definition of $\I_q$.
  The third step is because $q \Inf_{q,i}(f_{1*^{n-1}}) + (1-q) \Inf_{q,i}(f_{0*^{n-1}}) = \Inf_{q,i}(f)$.
  The fourth step is by definition of $\I_q$.
  The fifth step is because $\Inf_{q,i}(f) \le 1$.
\end{proof}
We now apply \cref{lem:inf:phase-3:step-2} to Phase 3. Let $t\in \{0,1,*\}^U$ be the observations made in Phase 3, where $U = \{i\in [n]: s_i=*\}$ is the set of unrevealed coordinates at the end of Phase 2.
Then \cref{eqn:inf:phase-3:step-1:q-inf} together with \cref{lem:inf:phase-3:step-2} implies that
\begin{align} \label{eqn:inf:phase-3:step-2:q-inf}
  \bE_{s\sim \rho^{\otimes [n]}} [ \bE_t [\I_q((f_{s})_{t})] ]\ge (2c - c_2\pm o(1)) n.
\end{align}

\subsubsection{Step 3c}
After Phase 3, the Boolean function on the unrevealed coordinates has $q$-biased total influence at least $(2c-c_2\pm o(1))n = \Omega(n)$ in expectation.
In particular, with probability $\ge c-c_2/2\pm o(1)$, the function has $I_q((f_{s})_t) \ge (c-c_2/2\pm o(1))n$.
Any algorithm now needs to output an answer in $\{0,1\}$.
The following result shows that the error probability will be $\Omega(1)$ no matter what the algorithm outputs.

\begin{lemma} \label{lem:inf:phase-3:step-3}
  For any $0 < c \le 1$, there exists $c'>0$ such that the following holds.
  Let $f: \{0,1\}^n\to \{0,1\}$ be a Boolean function and $\frac 1{3n} \le q \le c/6$ be a parameter.
  If $\I_q(f) \ge cn$, then \[c' \le \bE_{x \sim \Ber(q)^{\otimes n}} [f(x)] \le 1 - c'.\]
\end{lemma}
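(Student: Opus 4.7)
The plan is to prove the contrapositive: if $\mu := \bE_{x \sim \Ber(q)^{\otimes n}}[f(x)]$ is too small, then $\I_q(f) < cn$. Combined with the symmetric statement for $1 - f$ (which has the same biased total influence but mean $1-\mu$), this yields the two-sided bound $c' \le \mu \le 1 - c'$. Concretely, I will establish the key inequality $\I_q(f) \le 5n\sqrt{\mu}$, from which $\I_q(f) \ge cn$ immediately forces $\mu \ge c^2/25 =: c'$.

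To derive this inequality, set $T := f^{-1}(1)$ and split $\I_q(f) = \bE[s(f,x)\mathbbm{1}_T(x)] + \bE[s(f,x)\mathbbm{1}_{T^c}(x)]$, where $s(f,x)$ denotes the sensitivity of $f$ at $x$. The first expectation is at most $n\mu$ via the trivial bound $s(f,x) \le n$. For the second, I reindex each pair $(x,i)$ with $x \notin T$ and $x \oplus e_i \in T$ by its partner $y := x \oplus e_i \in T$, using the density ratio $\bP(x)/\bP(y) \in \{(1-q)/q,\, q/(1-q)\}$ according to whether $y_i = 1$ or $y_i = 0$. Summing, and bounding the down-crossing count at $y$ by $|y|$ and the up-crossing count by $n - |y|$, one obtains
\[
\bE[s(f,x)\mathbbm{1}_{T^c}] \le \frac{1-q}{q}\, \bE[\mathbbm{1}_T \cdot |y|] + \frac{q}{1-q}\, \bE[\mathbbm{1}_T \cdot (n - |y|)].
\]

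The crucial step is Cauchy-Schwarz: $\bE[\mathbbm{1}_T \cdot |y|] \le \sqrt{\mu \cdot \bE[|y|^2]}$. Since $|y| \sim \Bin(n,q)$ satisfies $\bE[|y|^2] = nq(1-q) + (nq)^2 \le (nq + \sqrt{nq})^2$, this gives $\bE[\mathbbm{1}_T \cdot |y|] \le \sqrt{\mu}\,(nq + \sqrt{nq})$. Multiplying by $(1-q)/q$ and using $q \ge 1/(3n)$ to bound $\sqrt{n/q} \le \sqrt{3}\,n$, the first piece is at most $(n + \sqrt{3}\,n)\sqrt{\mu} \le 3n\sqrt{\mu}$. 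The second piece is bounded crudely: $\frac{q}{1-q}\bE[\mathbbm{1}_T(n-|y|)] \le 2qn\mu \le n\mu/3$ using $q \le c/6 \le 1/6$. Assembling and using $\mu \le \sqrt{\mu}$ for $\mu \in [0,1]$, we obtain $\I_q(f) \le \tfrac{4}{3} n\mu + 3n\sqrt{\mu} \le 5n\sqrt{\mu}$, as desired.

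The principal obstacle is improving on the naive bound $\I_q(f) \le n\mu/q$ (which follows directly from $\Inf_{q,i}(f) \le \mu/q$ per coordinate), since that bound only yields $\mu \ge \Omega(q) = \Omega(1/n)$, far from a constant. The Cauchy-Schwarz refinement above exploits the concentration of $|y|$ under $\Ber(q)^{\otimes n}$: rather than bounding $|y| \le n$ pointwise, one uses $\sqrt{\bE[|y|^2]} = O(nq + \sqrt{nq})$, which is substantially smaller. The hypothesis $q \ge 1/(3n)$ is precisely what ensures $\sqrt{n/q} = O(n)$, keeping the prefactor $\frac{1-q}{q}\cdot\sqrt{\bE[|y|^2]}$ at $O(n)$; for $q \ll 1/n$ the $\sqrt{n/q}$ factor would blow up and the argument would break down.
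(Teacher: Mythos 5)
Your proof is correct, and it takes a genuinely different route from the paper's. The paper's argument is more local and structural: it first selects $m = \lfloor c/(6q)\rfloor$ coordinates with biased influence $\ge c/2$, observes that $x_{[m]}=0^m$ with probability $\ge 1 - c/6$, shows that flipping a high-influence coordinate within $[m]$ often changes the function value conditioned on $x_{[m]}=0^m$, and then chains a sequence of union-bound inequalities to relate $\bE[f]$ and $1 - \bE[f]$. Your argument is global and moment-based: you reindex the sensitivity mass at $x \notin T$ to its partner $y = x\oplus e_i \in T$ via the density ratio $\bP(x)/\bP(y)$, then control the resulting sum by Cauchy--Schwarz against the second moment of $|y|\sim\Bin(n,q)$, yielding the clean bound $\I_q(f)\le 5n\sqrt{\mu}$ (from which $\mu\ge c^2/25$; symmetry under $f\mapsto 1-f$ handles the other side). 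Both proofs hinge on the same two hypotheses---$q \ge 1/(3n)$ to keep the prefactor $\frac{1-q}{q}\sqrt{\bE|y|^2} = O(n)$, and $q \le c/6 \le 1/6$ to kill the second term---but yours is shorter and gives an explicit constant. I verified each step: the reindexing identity, the bounds on the down/up-crossing counts by $|y|$ and $n-|y|$, the inequality $\bE[|y|^2] = (nq)^2 + nq(1-q) \le (nq+\sqrt{nq})^2$, and the final assembly $\tfrac43 n\mu + 3n\sqrt{\mu} \le 5n\sqrt{\mu}$ using $\mu \le \sqrt{\mu}$ all check out.
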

\begin{proof}
    Since $\I_q(f) \ge cn$ and $\Inf_{q,i}(f) \le 1$ for every $i$, there exist at least $cn/2$ indices $i\in [n]$ with $\Inf_{q,i}(f) \ge c / 2$. As $\lfloor c / 6q\rfloor \le cn/2$, there are $m = \lfloor c / 6q\rfloor$ indices $i\in [n]$ with $\Inf_{q,i}(f) \ge c / 2$. Without loss of generality, assume that $\Inf_{q,i}(f) \ge c / 2$ for $i\in [m]$.

    Let $0^m$ denote the length-$m$ bit string with all $0$'s. By union bound, we have
    \begin{align}
    \label{lem:inf:phase-3:step-3:all-0}
    \begin{split}
    \bP_{x \sim \Ber(q)^{\otimes n}}\left(x_{[m]}=0^m\right) \ge 1 - \sum_{i\in [m]} \bP_{x \sim \Ber(q)^{\otimes n}}\left(x_i = 1\right) = 1 - mq \ge 1 - c / 6.
    \end{split}
    \end{align}

    Let $\diamond$ denote the concatenation operation of two bit strings. For any $i \in [m]$, we have
    \begin{align}
    \nonumber &~\bP_{y \sim \Ber(q)^{\otimes ([n]\backslash [m])}}\left(f(0^{m} \diamond y) \ne f(e_i \diamond y)\right) \\
    \nonumber =&~\bP_{x \sim \Ber(q)^{\otimes n}}\left(f(x) \ne f(x \oplus e_i) \mid x_{[m]}=0^m\right)\\
    \nonumber \ge&~\bP_{x \sim \Ber(q)^{\otimes n}}\left(f(x) \ne f(x \oplus e_i) \wedge x_{[m]}=0^m\right)\\
    \nonumber \ge &~\bP_{x \sim \Ber(q)^{\otimes n}}\left(f(x) \ne f(x \oplus e_i)\right) + \bP_{x \sim \Ber(q)^{\otimes n}}\left(x_{[m]}=0^m\right) - 1 \\
    \nonumber \ge & \Inf_{q, i}(f) +(1 - c/6) - 1 \tag{By \cref{lem:inf:phase-3:step-3:all-0}}\\
    \ge &~c / 3. \label{eq:lem:inf:phase-3:step-3:eq2}
    \end{align}

    Finally, we have
    \begin{align*}
    &\bE_{x \sim \Ber(q)^{\otimes n}}(f(x))\\
    &\ge \bP_{x \sim \Ber(q)^{\otimes n}}(f(x) = 1 \wedge x_1 + \cdots + x_m = 1)\\
    &= \sum_{i\in [m]} \bP_{x \sim \Ber(q)^{\otimes n}}(f(x) = 1 \wedge x_{[m]} = e_i)\\
    &= q (1-q)^{m-1} \cdot \sum_{i\in [m]} \bP_{y \sim \Ber(q)^{\otimes ([n]\backslash [m])}}(f(e_i \diamond y) = 1)\\
    &\ge  \Theta_c(q) \cdot \sum_{i\in [m]} \bP_{y \sim \Ber(q)^{\otimes ([n]\backslash [m])}}(f(0^m \diamond y) = 0 \wedge f(0^m \diamond y) \ne f(e_i \diamond y))\\
    &\ge  \Theta_c(q) \cdot \sum_{i\in [m]} \left( \bP_{y \sim \Ber(q)^{\otimes ([n]\backslash [m])}}(f(0^m \diamond y) = 0) +  \bP_{y \sim \Ber(q)^{\otimes ([n]\backslash [m])}} (f(0^m \diamond y) \ne f(e_i \diamond y)) - 1\right)\\
    &\ge  \Theta_c(1) \cdot \left( \bP_{y \sim \Ber(q)^{\otimes ([n]\backslash [m])}}(f(0^m \diamond y) = 0) +  c / 3 - 1\right) \tag{By \cref{eq:lem:inf:phase-3:step-3:eq2}}\\
    &\ge  \Theta_c(1) \cdot \left( \bP_{x \sim \Ber(q)^{\otimes n}}(f(x) = 0 \wedge x_{[m]}=0^m) +  c / 3 - 1\right)\\
    &\ge  \Theta_c(1) \cdot \left( \bP_{x \sim \Ber(q)^{\otimes n}}(f(x) = 0) + (1 - c / 6) - 1 +  c / 3 - 1\right) \tag{By \cref{lem:inf:phase-3:step-3:all-0}}\\
    &= \Theta_c(1) \cdot (c/6 -  \bE_{x \sim \Ber(q)^{\otimes n}}(f(x))).
    \end{align*}
    Therefore, there exists $c'$ depending only on $c$ such that $\bE_{x \sim \Ber(q)^{\otimes n}}[f(x)] \ge c'$.

    By symmetry, we also have $\bE_{x \sim \Ber(q)^{\otimes n}}(f(x)) \le 1 - c'$.
\end{proof}

Applying \cref{lem:inf:phase-3:step-3} to the Boolean function $(f_s)_t$ on the unrevealed coordinates finishes the proof.

\section{Graph Connectivity}
Recall the Graph Connectivity problem, where the input is an unknown undirected graph on $n$ labeled vertices. In each query, the algorithm picks an unordered pair $e=(u,v)\in \binom V2$, and the oracle returns whether $e$ is an edge of $G$, flipped independently with probability $0<p<\frac 12$. The goal of the algorithm is to determine whether $G$ is connected or not.

In this section, we prove \cref{thm:graph-conn-hard}, which we recall below:
\GraphConnectivity*

\subsection{Preliminaries} \label{sec:conn:prelim}

\begin{lemma}[Stirling's formula] \label{lem:stirling}\
  \begin{enumerate}
    \item There exists an absolute constant $\epsilon>0$ such that for all $n\ge 1$,
    \begin{align*}
      \epsilon \sqrt n\left(\frac ne\right)^n \le n! \le \epsilon^{-1} \sqrt n\left(\frac ne\right)^n.
    \end{align*}
    \item Fix $m\ge 2$. There exists an constant $\epsilon=\epsilon(m)>0$ such that for all $n\ge 1$ and all $k_1,\ldots,k_m\ge 1$ with $k_1+\cdots+k_m=n$,
    \begin{align*}
      \epsilon \cdot \frac{n^{n+1/2}}{\prod_{i\in [m]} k_i^{k_i+1/2}} \le \binom{n}{k_1,\ldots,k_m} \le \epsilon^{-1} \cdot \frac{n^{n+1/2}}{\prod_{i\in [m]} k_i^{k_i+1/2}}.
    \end{align*}
  \end{enumerate}
\end{lemma}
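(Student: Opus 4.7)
The plan is to prove Part 1 first (classical Stirling) and then derive Part 2 by direct substitution into the definition of the multinomial coefficient.

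For Part 1, the approach is to estimate $\log(n!)=\sum_{k=1}^n \log k$ by comparison with $\int_1^n \log x\,dx = n\log n - n + 1$. Using the concavity of $\log$ and the trapezoidal rule, one gets $\sum_{k=1}^n \log k = n\log n - n + \tfrac{1}{2}\log n + O(1)$, which after exponentiation yields $n! = \Theta(\sqrt n\,(n/e)^n)$. Concretely, I would bound $\log k \ge \int_{k-1/2}^{k+1/2}\log x\,dx$ (for $k\ge 1$, suitably adjusted at the endpoints) to get the lower bound, and $\log k \le \tfrac12(\log(k) + \log(k))$ combined with a trapezoidal estimate $\int_{k-1}^k \log x\,dx \ge \tfrac12(\log(k-1)+\log k)$ for the upper bound. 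Since we only need the existence of an absolute constant (not the sharp $\sqrt{2\pi}$), these elementary comparisons suffice. Alternatively, one can simply cite Stirling's formula as a well-known fact and take the constant to be, e.g., the smaller of $1/\sqrt{2\pi}$ and $1/e$ after checking small cases by hand.

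For Part 2, I expand
\[
\binom{n}{k_1,\ldots,k_m} = \frac{n!}{\prod_{i\in [m]} k_i!}
\]
and apply Part 1 to each factorial. The numerator is $\Theta\bigl(\sqrt n\,(n/e)^n\bigr)$, and for each $i$, $k_i! = \Theta\bigl(\sqrt{k_i}\,(k_i/e)^{k_i}\bigr)$. The factors of $e^{-n}$ in the numerator and $\prod e^{-k_i} = e^{-n}$ in the denominator cancel. Rearranging powers of $n$ and $k_i$ gives
\[
\binom{n}{k_1,\ldots,k_m} = \Theta_m\!\left(\frac{\sqrt n\,n^n}{\prod_i \sqrt{k_i}\,k_i^{k_i}}\right) = \Theta_m\!\left(\frac{n^{n+1/2}}{\prod_i k_i^{k_i+1/2}}\right),
\]
where the hidden constant depends only on $m$ (it is the product of $m+1$ constants from Part 1, one for $n!$ and one for each $k_i!$).

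The main (mild) obstacle is the proof of Part 1 itself; however, since the bound required is only tight up to a constant (not the sharp constant $\sqrt{2\pi}$), the argument is elementary and can be replaced by a citation. The rest is mechanical bookkeeping. I would likely keep the proof terse by invoking Stirling's approximation as standard and checking only that the dependence on $m$ in Part 2 is acceptable (it is, since $m$ is fixed).
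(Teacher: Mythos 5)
Your proposal is correct and matches the paper's (implicit) approach: the paper states this lemma without proof, treating Part~1 as the classical Stirling approximation and Part~2 as the mechanical consequence you describe, where the $e^{-n}$ factors cancel because $\sum_i k_i = n$ and the resulting constant $\epsilon^{m+1}$ depends only on $m$. One small nit: in your sketch of Part~1 the inequality ``$\log k \le \tfrac12(\log k + \log k)$'' is vacuous as written, but since you correctly note the bound is only needed up to constants and can be cited as standard, this does not affect the soundness of the proof.
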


Let $(\bP_n)_n$ and $(\bQ_n)_n$ be two sequences of probability measures such that $\bP_n$ and $\bQ_n$ are defined on the same measurable spaces $\Omega_n$.
We say $\bQ_n$ is contiguous with respect to $\bP_n$, denoted by $\bQ_n\contig \bP_n$, if for every sequence $(\cA_n)_n$ of measurable sets $\cA_n\subseteq \Omega_n$, $\bP_n(\cA_n) \to 0$ implies $\bQ_n(\cA_n)\to 0$.

\begin{lemma}[Cayley's formula] \label{lem:cayley}
The number of spanning trees on $n$ labeled vertices is $n^{n-2}$.
\end{lemma}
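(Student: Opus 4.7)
The plan is to prove Cayley's formula via the classical Prüfer sequence bijection, which matches labeled trees on $[n]$ with strings in $[n]^{n-2}$ and therefore gives the count $n^{n-2}$ immediately.

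First I would define the Prüfer encoding of a labeled tree $T$ on vertex set $[n]$ with $n\ge 2$. Iteratively, for $i=1,\ldots,n-2$, locate the leaf of the current tree with the smallest label, record the label of its unique neighbor as the $i$-th entry of the output sequence, and then delete this leaf. After $n-2$ deletions exactly one edge remains, and the process halts, producing a sequence $\pi(T)\in [n]^{n-2}$. A basic observation to record is that throughout this procedure a vertex $v$ appears in the remaining sequence-plus-tree exactly $\deg_T(v)$ times in total, and more usefully, at any step the multiset of labels not yet appearing in the output sequence and not yet deleted is precisely the set of current leaves together with one additional vertex; this will make the decoder well-defined.

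Next I would describe the decoder and check it inverts $\pi$. Given $s=(s_1,\ldots,s_{n-2})\in [n]^{n-2}$, maintain an active set $S$, initially $[n]$. For $i=1,\ldots,n-2$, let $v_i$ be the smallest element of $S$ that does not appear in $s_i,s_{i+1},\ldots,s_{n-2}$, add the edge $\{v_i,s_i\}$, and remove $v_i$ from $S$; finally add an edge between the two remaining elements of $S$. The key steps are: (i) show by induction on $i$ that such a $v_i$ exists (since $|S|=n-i+1$ while only $n-i-1$ labels are forbidden by the suffix), so the decoder never fails; (ii) verify that the resulting graph has $n-1$ edges and is acyclic — acyclicity follows because each $v_i$ is removed from $S$ immediately after its unique added edge, so it cannot lie on any later cycle; (iii) check that applying $\pi$ to the decoded tree returns $s$ by induction on the length of $s$, using that the smallest leaf of the decoded tree at step $i$ is exactly $v_i$.

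Combining (i)–(iii) shows $\pi$ is a bijection between labeled trees on $[n]$ and $[n]^{n-2}$, yielding the claimed count $n^{n-2}$; the small cases $n=1,2$ can be handled by inspection (with the convention $n^{n-2}=1$ when $n\in\{1,2\}$). The main obstacle is the bookkeeping in step (iii), i.e., verifying that encoding and decoding are mutually inverse; the cleanest way I know is to prove simultaneously by induction on $n$ that (a) the smallest leaf produced by the decoder at its first step equals the smallest element of $[n]$ missing from the whole sequence suffix, and (b) after performing that first step, the remaining decoding problem on the string $(s_2,\ldots,s_{n-2})$ with active set $S\setminus\{v_1\}$ is exactly the Prüfer decoding of a tree on $n-1$ vertices, letting the induction close.
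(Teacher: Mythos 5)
The paper cites Cayley's formula as a classical fact and gives no proof, so there is nothing to compare against; your Pr\"ufer-sequence argument is the standard bijective proof and is correct in outline. One small inaccuracy in your motivating ``basic observation'': at each stage of the encoding, the set of labels not yet deleted and not occurring in the remaining suffix of the output is \emph{exactly} the set of leaves of the current tree, not the leaves together with one additional vertex. (Indeed, a remaining vertex $v$ appears in the remaining suffix precisely $\deg(v)-1$ times, so it is absent iff $\deg(v)=1$.) This does not affect your decoder, which you state correctly, but it is worth fixing since you lean on the observation to argue the decoder is well-defined.
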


\subsection{Structure of a uniform spanning tree} \label{sec:conn:ust}
Our proof of \cref{thm:graph-conn-hard} uses several properties of the uniform random spanning tree (UST) of the complete graph.
In this section we state and prove these properties.

The notion of balanced edges is crucial to our construction and analysis.
\begin{definition}[Balanced edges] \label{defn:conn:balanced-edge}
  Let $\beta>0$ be a constant.
  Let $T$ be a spanning tree on $n$ labeled vertices.
  An edge $e\in T$ is called \emph{$\beta$-balanced} if both sides of $e$ has at least $\beta n$ vertices.
  Let $B_\beta(T)$ denote the set of $\beta$-balanced edges of $T$.
\end{definition}

\begin{proposition} \label{prop:conn:ust-structure-1}
  Let $T$ be a UST on $n$ labeled vertices.
  There exist absolute constants $\epsilon,\gamma_1,\gamma_2,\gamma_3>0$ such that with probability at least $\epsilon$, the following are true simultaneously.
  \begin{enumerate}[label=(\roman*)]
    \item\label{item:prop:conn:ust-structure-1:i}
    \begin{align*}
      \gamma_1 \sqrt n \le \left| B_{1/3}(T) \right| \le \gamma_2 \sqrt n.
    \end{align*}
    \item\label{item:prop:conn:ust-structure-1:ii} For all $e\in B_{1/3}(T)$, if $T_1$, $T_2$ are the two connected components of $T\backslash e$, then
    \begin{align*}
      \min\{\left| B_{1/7}(T_1) \right|,\left| B_{1/7}(T_2) \right|\} \ge \gamma_3 \sqrt n.
    \end{align*}
  \end{enumerate}
\end{proposition}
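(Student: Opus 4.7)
The plan is to compute moments of $|B_\beta(T)|$ via Cayley's formula (\cref{lem:cayley}) and Stirling's approximation (\cref{lem:stirling}), and then apply Paley-Zygmund, Markov, and Chebyshev. The key UST property I use throughout is that, conditional on an edge $e = (u,v)$ being in $T$ together with the induced vertex partition $(V_1, V_2)$ of sizes $(k, n-k)$, the two components of $T \setminus e$ are independent uniform spanning trees on $V_1, V_2$. This is a direct consequence of Cayley, since the number of spanning trees of $K_n$ containing $e$ and realizing the partition factors as $\binom{n-2}{k-1} k^{k-2} (n-k)^{n-k-2}$.

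For (i), I compute the first moment by fixing $e$ and a partition size $k$. Applying Stirling to the normalized count $\binom{n-2}{k-1} k^{k-2}(n-k)^{n-k-2} / n^{n-2}$ shows each contribution is $\Theta(n^{-5/2})$ uniformly for $k/n \in [1/3, 2/3]$, so summing over $\Theta(n)$ partition sizes and $\binom{n}{2}$ pairs $e$ gives $\bE|B_{1/3}(T)| \asymp \sqrt{n}$. An analogous enumeration over pairs of balanced edges, with the induced three-way partition of $T$ handled by Cayley and Stirling (now with multinomial coefficients and a product of $k_i^{k_i-2}$ factors), yields $\bE|B_{1/3}(T)|^2 = O(n)$. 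Paley-Zygmund then gives the lower bound $\gamma_1 \sqrt{n}$ with constant probability, and Markov (with a large enough constant $\gamma_2$) gives the upper bound $\gamma_2 \sqrt{n}$ with probability arbitrarily close to $1$; combined, (i) holds with positive probability.

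For (ii), I union-bound over $e$ and partition sizes. Conditionally on $e \in T$ with partition $(V_1, V_2)$, each $T_i$ is a UST on $|V_i| \ge n/3$ vertices, and the first-moment calculation of the previous step yields $\bE|B_{1/7}(T_i)| \asymp \sqrt{n}$. The main obstacle is obtaining a sufficiently strong tail bound: I need $\bP[|B_{1/7}(\text{UST}_m)| < \gamma_3 \sqrt{m}] = O(1/\sqrt{m})$ uniformly in $m \ge n/3$, so that the expected number of bad edges $e \in B_{1/3}(T)$ is $\Theta(\sqrt{n}) \cdot O(1/\sqrt{n}) = O(1)$; choosing $\gamma_3$ small then makes this less than any fixed threshold, and Markov yields the complement of (ii) with probability bounded away from $1$. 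A bare Paley-Zygmund bound only delivers constant tail probability, so this is insufficient; the required improvement would follow from a variance bound $\Var|B_{1/7}(\text{UST}_m)| = O(\sqrt{m})$, which in turn requires a leading-order cancellation between the terms in $\bE|B|^2$ and $(\bE|B|)^2$.

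The technical crux is thus verifying this variance cancellation. I would approach it by expanding Stirling's formula to one subleading order in the three-way-partition enumeration of pairs of edges, matching the leading contributions of the diagonal and off-diagonal terms of the second moment against the square of the first moment. An alternative I would try in parallel is to exploit the known negative correlation of UST edges (so that the off-diagonal covariances contribute non-positively to the variance) together with a bound on the diagonal term of $\bE|B|^2$, which would directly give $\Var|B| \le \bE|B| = O(\sqrt{m})$. Once this variance bound is in hand, Chebyshev supplies the required tail estimate and both (i) and (ii) hold simultaneously with positive probability by combining their individually established lower bounds.
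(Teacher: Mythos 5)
Your argument for item (i) is valid but takes a genuinely different route from the paper. You compute first and second moments of $|B_{1/3}(T)|$ directly via Cayley and Stirling (the first-moment computation is essentially the paper's \cref{lem:conn:ust-structure-2}) and then apply Paley--Zygmund and Markov, whereas the paper introduces a contiguous measure $\bQ$ built from ``tree tuples'' (\cref{defn:conn:ust-measure-q}, \cref{lem:conn:ust-contig}) and uses the fact that under $\bQ$ the set $B_{1/3}(T)$ is exactly the $u_1$--$v_1$ path, whose length is controlled by the typical-distance estimate (\cref{lem:conn:ust-dist}). Both are fine for (i); the paper's route is set up so that it can be reused for (ii).

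For item (ii) your proposal has a fatal gap: the tail bound $\bP[\,|B_{1/7}(\text{UST}_m)| < \gamma_3 \sqrt m\,] = O(1/\sqrt m)$ you need for the union bound is false, and neither of your two proposed routes to a variance bound $\Var|B_{1/7}(\text{UST}_m)| = O(\sqrt m)$ can succeed. After rescaling by $\sqrt m$, $|B_\beta(\text{UST}_m)|$ converges in distribution to a nondegenerate positive random variable determined by the Brownian continuum random tree, so the variance is $\Theta(m)$, not $O(\sqrt m)$, and the left tail is $\Theta(1)$ (at best $\Theta(\gamma_3)$ as $\gamma_3 \to 0$, but not vanishing in $m$). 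With $\Theta(\sqrt n)$ balanced edges and a $\Theta(1)$ per-edge failure probability, the expected number of ``bad'' edges is $\Theta(\sqrt n)$ and Markov gives nothing. Negative edge-correlation of USTs also does not help: $\{e \in B_\beta(T)\}$ is a global event about the tree, not an edge-inclusion event, so these events inherit none of the negative association. The deeper issue is that a union bound over $e \in B_{1/3}(T)$ is structurally the wrong move, because the components $T_1, T_2$ for different balanced $e$ overlap heavily and the failure events are highly positively correlated. The paper's contiguous-measure construction is designed precisely to exploit that overlap: it fixes two linear-size subtrees $T_{L_0}$ and $T_{R_0}$ (one on each end of the balanced chain) with the property that $T_{L_0} \subseteq T_1$ and $T_{R_0} \subseteq T_2$ (up to swapping) for \emph{every} $\frac13$-balanced edge $e$ simultaneously, and every $\frac13$-balanced edge of $T_{L_0}$ is a $\frac17$-balanced edge of $T_1$. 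Conditioning on the vertex partition, $T_{L_0}$, $T_{R_0}$, $T_W$ are independent USTs, so a single event of constant probability (that $B_{1/3}(T_{L_0})$, $B_{1/3}(T_{R_0})$, $\dist_{T_W}(u_1,v_1)$ are all $\Theta(\sqrt n)$) implies (ii) for all $e$ at once, with no union bound. You would need to adopt some version of this ``one subtree covers all cuts'' idea to repair the argument.
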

Our proof of \cref{prop:conn:ust-structure-1} uses the following lemmas.

\begin{lemma}[Balanced edges form a chain] \label{lem:conn:balanced-edge-chain}
  Let $T$ be a tree on $n$ labeled vertices.
  Then the subgraph formed by all edges in $B_{1/3}(T)$ is either empty or a chain.
\end{lemma}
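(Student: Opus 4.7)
My plan is to establish two properties of the subgraph $H$ formed by the edges in $B_{1/3}(T)$: (i) every vertex has degree at most $2$ in $H$, and (ii) the edges of $B_{1/3}(T)$ are $T$-connected, in the sense that the unique $T$-path between any two of them uses only edges of $B_{1/3}(T)$. Since $H$ is automatically a subforest of $T$, properties (i) and (ii) together force $H$ to be either empty or a single path.

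For (i), suppose for contradiction that some vertex $v$ is incident to three distinct balanced edges $e_1,e_2,e_3\in B_{1/3}(T)$. For each $i$, let $C_i$ be the component of $T\setminus e_i$ not containing $v$; then $|C_i|\ge n/3$ since $e_i$ is $1/3$-balanced. But $C_1,C_2,C_3$ and $\{v\}$ are pairwise disjoint subsets of $V(T)$, so $n\le |C_1|+|C_2|+|C_3|\le n-1$, a contradiction.

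For (ii), the key observation is that along any simple $T$-path, the sizes of the ``one-sided'' components across consecutive edges are strictly monotone. Let $e,f\in B_{1/3}(T)$, and let $f_1=e,f_2,\dots,f_m=f$ be the edges of the unique simple $T$-path joining them, with vertex sequence $x_0,x_1,\dots,x_m$ where $f_i=(x_{i-1},x_i)$. Set $a_i$ to be the size of the component of $T\setminus f_i$ containing $x_0$. Passing from $f_i$ to $f_{i+1}$ enlarges this component by $\{x_i\}$ together with every subtree hanging off $x_i$ other than the one containing $x_{i+1}$, so $a_{i+1}>a_i$. Because $f_1=e$ is balanced, $a_1\ge n/3$; because $f_m=f$ is balanced, $a_m\le 2n/3$. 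Monotonicity then forces $a_i\in[n/3,2n/3]$ for every $i$, so each $f_i$ is $1/3$-balanced. Combining (i) and (ii), $H$ is a connected subgraph of the tree $T$ with maximum degree at most $2$, hence a (possibly empty) path. I do not anticipate a serious obstacle; the only subtle point is stating the monotonicity of the $a_i$ along a tree path carefully.
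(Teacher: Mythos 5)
Your proof is correct and takes essentially the same approach as the paper: both establish that $H$ has maximum degree at most $2$ via the same three-disjoint-components counting argument, and that the $T$-path between any two balanced edges consists entirely of balanced edges. Your monotonicity-of-component-sizes phrasing of the connectivity step is just a more explicit rendering of the paper's containment argument ($T_{3,i}\supseteq T_{i,2}$), so the two proofs are substantively identical.
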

\begin{proof}
  Suppose $B_{1/3}(T)$ is non-empty.
  Let $H$ be the subgraph formed by all edges in $B_{1/3}(T)$.

  \paragraph{Step 1.} We prove that $H$ is connected.
  Let $e_1,e_2\in B_{1/3}(T)$ and $e_3$ be an edge on the path (in $T$) between $e_1$ and $e_2$.
  Let $T_{i,1},T_{i,2}$ ($i=1,2,3$) be the two connected components of $T\backslash e_i$,
  such that $e_3\in T_{1,1}$, $e_3\in T_{2,1}$, $e_1\in T_{3,1}$, $e_2\in T_{3,2}$.
  Then $|T_{i,j}|\ge \frac n3$ for $i,j\in \{1,2\}$.
  We have $T_{3,i} \supseteq T_{i,2}$ ($i=1,2$).
  So $|T_{3,i}| \ge \frac n3$ for $i=1,2$ and $e_3$ is $\frac 13$-balanced.
  This shows that $H$ is connected.

  \paragraph{Step 2.} We prove that $H$ is a chain.
  By Step 1, $H$ is a connected subgraph of $T$, so it is a tree.
  Suppose $H$ has a vertex $v$ of degree at least three.
  Then there exist three distinct $\frac 13$-balanced edges $e_1,e_2,e_3$ containing $v$.
  Say $e_i=(v,u_i)$ ($i=1,2,3$).
  Let $T_i$ ($i=1,2,3$) be the connected component of $T\backslash e_i$ not containing $v$.
  Then $|T_i|\ge \frac n3$ and $T_i$ ($i=1,2,3$) are all disjoint.
  This implies
  \begin{align*}
    n\ge |T_1|+|T_2|+|T_3|+1 \ge n+1,
  \end{align*}
  which is a contradiction.
  So all vertices of $H$ have degree at most two, and $H$ must be a chain.
\end{proof}

The following statement about typical distances in a UST is well-known (e.g., \cite{aldous1991continuum}).
\begin{lemma}[Typical distance in UST] \label{lem:conn:ust-dist}
  Let $T$ be a UST on $n$ labeled vertices and $u,v$ be two fixed vertices (not dependent on $T$).
  Then for any $\epsilon>0$, there exist absolute constants $\gamma_1,\gamma_2>0$, such that
  \begin{align*}
    \bP\left[ \gamma_1 \sqrt n \le \dist_T(u,v) \le \gamma_2 \sqrt n \right] \ge 1-\epsilon,
  \end{align*}
  where $\dist_T$ denotes the graph distance in $T$.
\end{lemma}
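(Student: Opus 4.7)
My plan is to derive an explicit closed-form formula for the distribution of $\dist_T(u,v)$ and then extract both tail bounds from it. The key ingredient is the following counting identity, which follows from the generalised Cayley formula: for any rooted forest on $n$ labelled vertices whose components have sizes $s_1,\ldots,s_c$, the number of spanning trees of $K_n$ that contain it is $n^{c-2}\prod_{i=1}^c s_i$.

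First, I would count spanning trees with $\dist_T(u,v)=k$. Such a tree is obtained by (i) choosing an ordered sequence of $k-1$ intermediate vertices on the $u$--$v$ path from the $n-2$ remaining vertices in $(n-2)!/(n-k-1)!$ ways, and (ii) extending the resulting forest, whose components have sizes $k+1,1,\ldots,1$ (with $n-k-1$ isolated vertices, so $c=n-k$), to a spanning tree. By the formula above, step (ii) contributes $n^{n-k-2}\cdot(k+1)$. Dividing by the total count $n^{n-2}$ of spanning trees (\cref{lem:cayley}), I obtain
\begin{align*}
  \bP\bigl(\dist_T(u,v)=k\bigr) \;=\; \frac{k+1}{n}\prod_{j=2}^{k}\Bigl(1-\frac{j}{n}\Bigr),
  \qquad 1\le k\le n-1.
\end{align*}

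Next I would extract the two tail bounds. For the lower tail, using $1-x\le 1$ crudely,
\begin{align*}
  \bP\bigl(\dist_T(u,v)\le \gamma_1\sqrt n\bigr)
  \;\le\; \sum_{k=1}^{\lfloor\gamma_1\sqrt n\rfloor}\frac{k+1}{n}
  \;\le\; \gamma_1^2+o(1),
\end{align*}
which is $\le \epsilon/2$ once $\gamma_1$ is chosen small in terms of $\epsilon$. For the upper tail, using $1-x\le e^{-x}$ yields $\prod_{j=2}^{k}(1-j/n)\le\exp\bigl(-k(k+1)/(2n)+O(1/n)\bigr)$, so
\begin{align*}
  \bP\bigl(\dist_T(u,v)\ge \gamma_2\sqrt n\bigr)
  \;\le\; O(1)\cdot\sum_{k\ge \gamma_2\sqrt n}\frac{k}{n}\,e^{-k^2/(2n)}
  \;\le\; O(1)\cdot e^{-\gamma_2^2/2},
\end{align*}
where the last step follows by comparing the sum to the integral $\int_{\gamma_2}^\infty x\,e^{-x^2/2}\,dx=e^{-\gamma_2^2/2}$ after the substitution $x=k/\sqrt n$. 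Choosing $\gamma_2$ large in terms of $\epsilon$ makes this $\le \epsilon/2$, and a union bound completes the proof.

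I do not expect a serious obstacle: the counting step uses the standard generalised Cayley formula, and the tail estimates reduce to comparing discrete sums with the Rayleigh-type integral $\int x e^{-x^2/2}\,dx$. The only place that requires mild care is ensuring the exponential bound $1-x\le e^{-x}$ is applied only for $j\le k\le n-1$ (where the product is non-negative) and checking that the regime $k$ close to $n$ contributes negligibly, which it does since $\prod_{j=2}^{k}(1-j/n)$ vanishes well before $k=n$.
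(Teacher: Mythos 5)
Your proof is correct, and it is genuinely different from what the paper does: the paper does not prove \cref{lem:conn:ust-dist} at all, citing it as well-known from Aldous's work on the continuum random tree. You instead give a self-contained elementary derivation. The key identity you invoke — that the number of spanning trees of $K_n$ containing a given spanning forest with components of sizes $s_1,\ldots,s_c$ equals $n^{c-2}\prod_i s_i$ — is indeed standard (it follows, e.g., from the weighted matrix-tree theorem applied to the contracted multigraph, and one can sanity-check it at $c=1,2,3$ and at $c=n$ where it recovers Cayley). Your count of trees with $\dist_T(u,v)=k$ is right: there are $(n-2)!/(n-k-1)!$ ordered choices of the $k-1$ interior path vertices, the resulting forest has $c=n-k$ components with size product $k+1$, and dividing $\frac{(n-2)!}{(n-k-1)!}\,n^{n-k-2}(k+1)$ by $n^{n-2}$ gives exactly $\frac{k+1}{n}\prod_{j=2}^{k}\bigl(1-\tfrac{j}{n}\bigr)$. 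Both tail estimates go through: the lower tail is $\le \sum_{k\le\gamma_1\sqrt n}(k+1)/n = O(\gamma_1^2)$, and the upper tail, via $1-x\le e^{-x}$ and comparison with $\int_{\gamma_2}^\infty x e^{-x^2/2}\,dx = e^{-\gamma_2^2/2}$, is $O(e^{-\gamma_2^2/2})$. The one place worth stating explicitly in a final write-up is that the function $k\mapsto \tfrac{k}{n}e^{-k^2/(2n)}$ is decreasing for $k\ge\sqrt n$, so for $\gamma_2\ge 1$ the sum-to-integral comparison loses only an additive $o(1)$. What your approach buys is self-containedness and an explicit formula for the distance distribution; what the paper's citation buys is brevity and immediate access to the much finer asymptotics (convergence of $\dist_T(u,v)/\sqrt n$ to a Rayleigh law) that the cited work establishes but that this lemma does not need.
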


To prove \cref{prop:conn:ust-structure-1}, we define a probability measure $\bQ$ on the space of spanning trees on $n$ labeled vertices that is contiguous with respect to the UST measure, and \cref{prop:conn:ust-structure-1} holds under $\bQ$.

\begin{definition}[Measure $\bQ$] \label{defn:conn:ust-measure-q}
  We define the measure $\bQ$ as follows.
  A \emph{tree tuple} $\cD$ is a tuple which consists of the following data:
  \begin{enumerate}[label=(\roman*)]
    \item Six distinct vertices $u_0,u_1,u_2,v_0,v_1,v_2\in [n]$.
    \item Integers $\frac n3-\frac n{100}\le L_0,R_0 \le \frac n3-1$.
    Integers $\frac n3-L_0\le L_1 \le \frac{n}{100}$, $\frac n3-R_0\le R_1 \le \frac{n}{100}$.
    Let $W = n-L_0-L_1-R_0-R_1$.
    \item A partition of $V=[n]$ into five subsets $V=V_{L_0}\sqcup V_{L_1}\sqcup V_{R_0}\sqcup V_{R_1} \sqcup V_W$, where $u_i\in V_{L_i}$, $v_i\in V_{R_i}$, $u_2,v_2\in V_W$, and $|V_{L_i}|=L_i$, $|V_{R_i}|=R_i$, $|V_W|=W$ ($i=1,2$).
    \item Five spanning trees: $T_{L_i}$ on $V_{L_i}$, $T_{R_i}$ on $V_{R_i}$, $T_W$ on $V_W$ ($i=1,2$).
  \end{enumerate}
  Given a tree tuple $\cD$, it produces a spanning tree
  \begin{align*}
    T(\cD) = T_{L_0} \cup T_{L_1} \cup T_{R_0} \cup T_{R_1} \cup T_W \cup \{(u_0,u_1),(u_1,u_2),(v_0,v_1),(v_1,v_2)\}.
  \end{align*}
  The measure $\bQ$ is the distribution of $T(\cD)$, where $\cD$ is uniformly chosen from all tree tuples.
\end{definition}

\begin{lemma}[Contiguity] \label{lem:conn:ust-contig}
  Let $\bP$ be the UST measure on $n$ labeled vertices.
  Let $\bQ$ be as defined in \cref{defn:conn:ust-measure-q}.
  Then $\bQ \contig \bP$.
\end{lemma}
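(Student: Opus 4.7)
The plan is to establish $\bQ \contig \bP$ via a second-moment bound on the Radon-Nikodym derivative. Let $X(T) := \#\{\cD : T(\cD) = T\}$ and $N_{\mathrm{total}}$ denote the total number of tree tuples, so that $\bQ(T) = X(T)/N_{\mathrm{total}}$, while $\bP(T) = n^{2-n}$ by Cayley's formula (Lemma \ref{lem:cayley}). Then $\bQ(\cA_n) = \bE_\bP[X \mathbbm{1}_{\cA_n}]/\bE_\bP[X]$, and Cauchy--Schwarz gives $\bQ(\cA_n) \le \sqrt{\bE_\bP[X^2]\, \bP(\cA_n)}/\bE_\bP[X]$; contiguity then follows from $\bE_\bP[X] = \Theta(1)$ and $\bE_\bP[X^2] = O(1)$.

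For the first moment, the number of tree tuples with fixed sizes $(L_0, L_1, R_0, R_1)$ and $W = n - L_0 - L_1 - R_0 - R_1$ is
\[
\binom{n}{L_0, L_1, R_0, R_1, W} \cdot L_0 L_1 R_0 R_1 W(W-1) \cdot L_0^{L_0-2} L_1^{L_1-2} R_0^{R_0-2} R_1^{R_1-2} W^{W-2},
\]
by combining the partition count, the special-vertex choices, and Cayley's formula for spanning trees on each part. Stirling (Lemma \ref{lem:stirling}) collapses the $e^n$ factors and reduces this to $\Theta(\sqrt{n}\, n^n / ((L_0 L_1 R_0 R_1)^{3/2}\sqrt{W}))$. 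Since $L_0, R_0, W \asymp n$ and the inner sum $\sum_{L_1 \ge n/3 - L_0} L_1^{-3/2}$ behaves like $\Theta(1/\sqrt{n/3 - L_0})$ by convergence of $\sum k^{-3/2}$, summing over admissible ranges yields $N_{\mathrm{total}} = \Theta(n^{n-2})$ and hence $\bE_\bP[X] = \Theta(1)$.

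For the second moment, I would bound $\bE_\bP[X^2] = \#\{(\cD_1, \cD_2) : T(\cD_1) = T(\cD_2)\}/n^{n-2}$ by classifying pairs of tree tuples according to the overlap of their four marked edges. The diagonal contribution $\cD_1 = \cD_2$ (and relabellings preserving the edge set) is $\Theta(N_{\mathrm{total}}) = \Theta(n^{n-2})$, matching the target bound. For partial overlaps, the joint decomposition of the common tree refines into more than five components, and each additional joint component brings an extra Stirling decay $\Theta(1/k^{3/2})$; I expect each overlap class to again contribute $O(n^{n-2})$ after careful combinatorial accounting.

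The main obstacle is the partial-overlap case analysis in the second step, where two distinct decompositions of a common $T$ can share $0, 1, 2$, or $3$ marked edges, producing up to nine joint pieces whose sizes must simultaneously respect the balance constraints of both decompositions. Trees admitting many decompositions -- paths are the extreme case, with $X(T) = \Theta(n^2)$ -- contribute disproportionately pointwise, but they occur under $\bP$ with probability decaying like $e^{-n}$ (because each of the five parts must itself be a path, incurring the Stirling deficit $k!/k^{k-2}$ per part), so the dominant contribution to $\bE_\bP[X^2]$ comes from the diagonal and near-diagonal cases, and the overall second moment remains bounded.
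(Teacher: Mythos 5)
Your first-moment computation of $N_{\mathrm{total}} = \Theta(n^{n-2})$ matches the paper's Stirling calculation, but the second-moment step is where the proposal goes astray, and the worry you raise in the last paragraph is based on a misreading of \cref{defn:conn:ust-measure-q}. The key structural fact the paper exploits — and your proposal misses — is that $X(T) \in \{0,2\}$ for every $T$. The size constraints in the tree-tuple definition ($L_0, R_0 \ge n/3 - n/100$, $L_1, R_1 \le n/100$, $L_0 + L_1, R_0 + R_1 \ge n/3$) combined with \cref{lem:conn:balanced-edge-chain} force the decomposition to be essentially unique: $u_1$ and $v_1$ must be the two endpoints of the chain $B_{1/3}(T)$, $u_2$ and $v_2$ are the second vertices inward along that chain, and $u_0$ (resp.\ $v_0$) is the unique neighbor of $u_1$ (resp.\ $v_1$) off the chain whose component has size at least $n/3 - n/100$. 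This reconstructs $\cD$ from $T(\cD)$ uniquely up to the $u \leftrightarrow v$ swap, so every $T \in \supp\bQ$ has exactly two preimages.

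In particular, your claim that paths have $X(T) = \Theta(n^2)$ is false: for a path, the chain endpoints pin $u_1, v_1$ to the two edges at roughly the $n/3$ and $2n/3$ positions, and the rest follows, giving $X(T) = 2$. Once you observe $X \le 2$, the entire second-moment machinery collapses to a triviality — $\bE_\bP[X^2] \le 2\bE_\bP[X]$ — and more directly, $\bQ = \bP(\cdot \mid T \in \supp\bQ)$, so contiguity is equivalent to $\bP(\supp\bQ) = \Omega(1)$, which your first-moment bound already gives. The overlap-classification analysis you propose is therefore not merely laborious; it is symptomatic of a missing structural observation, without which the argument as written cannot be completed (you would need to actually prove $\bE_\bP[X^2] = O(1)$, and the heuristic that "near-diagonal cases dominate" is not a proof). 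I would recommend first establishing the exact-fiber-count claim and then discarding the second-moment framing entirely.
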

\begin{proof}
  Let us study the support of $\bQ$.
  By \cref{lem:conn:balanced-edge-chain}, $\frac 13$-balanced edges in $T$ form a chain.
  By the construction (\cref{defn:conn:ust-measure-q}), for $T=T(\cD)$, $u_1$ and $v_1$ are the two endpoints of the chain $B_{1/3}(T)$, and $u_2$ (resp.~$v_2$) is the second vertex on the path from $u_1$ to $v_1$ (resp.~$v_1$ to $u_1$).
  Furthermore, $u_0$ (resp.~$v_0$) is the only neighbor $x$ of $u_1$ (resp.~$v_1$) not on the path between $u_1$ and $v_1$, such that $u_0$'s connected component in $T\backslash (u_0,x)$ (resp.~$v_0$'s connected component in $T\backslash (v_0,x)$) has size at least $\frac n3-\frac{n}{100}$.
  This means that $\cD$ can be reconstructed given $T(\cD)$, up to swapping $(u_0,u_1,u_2)$ and other data with $(v_0,v_1,v_2)$ and the corresponding data.
  Specifically, every $T$ in the support of $\bQ$ is realized by exactly two $\cD$s.
  Therefore, $\bQ$ is equal to the conditional measure $\bP(\cdot \mid T\in \supp \bQ)$.

  Now, to prove that $\bQ \contig \bP$, it suffices to prove that $\left| \supp \bQ \right| = \Theta(\left| \supp \bP \right|) = \Theta(n^{n-2})$.
  Because every $T$ is realized exactly twice, it suffices to prove that the number of different tree tuples is $\Theta(n^{n-2})$.
  We construct a tree tuple according to the following procedure.
  \begin{enumerate}[label=(\roman*)]
    \item Choose six distinct vertices $u_0,u_1,u_2,v_0,v_1,v_2\in [n]$.
    There are $n!/(n-6)!=\Theta(n^6)$ ways to do this.
    \item Choose $L_1$ and $R_1$ from $[1, \frac{n}{100}]$.
    Choose $L_0$ from $\left[\frac n3-L_1, \frac n3-1\right]$ and $R_0$ from $\left[\frac n3-R_1, \frac n3-1\right]$.
    \item Choose the partition $V=V_{L_0}\sqcup V_{L_1}\sqcup V_{R_0}\sqcup V_{R_1} \sqcup V_W$.
    There are
    \begin{align*}
      \binom{n-6}{L_0-1,L_1-1,R_0-1,R_1-1,W-2}
    \end{align*}
    ways to do this.
    \item Choose the five spanning trees, $T_{L_0}$, $T_{L_1}$, $T_{R_0}$, $T_{R_1}$, $T_W$.
    There are
    \begin{align*}
      L_0^{L_0-2} L_1^{L_1-2} R_0^{R_0-2} R_1^{R_1-2} W^{W-2}
    \end{align*}
    ways to do this.
  \end{enumerate}
  Summarizing the above, for large enough $n$, the number of different tree tuples is
  \begin{align*}
    &~\frac{n!}{(n-6)!} \cdot \sum_{1\le L_1,R_1 \le \frac{n}{100}} \sum_{\substack{\frac n3-L_1 \le L_0 \le \frac n3-1\\ \frac n3-R_1 \le R_0 \le \frac n3-1}} \binom{n-6}{L_0-1,L_1-1,R_0-1,R_1-1,W-2} \\
    \nonumber &~\cdot L_0^{L_0-2} L_1^{L_1-2} R_0^{R_0-2} R_1^{R_1-2} W^{W-2} \\
    \nonumber \asymp &~ \sum_{1\le L_1,R_1 \le \frac{n}{100}} \sum_{\substack{\frac n3-L_1 \le L_0 \le \frac n3-1\\ \frac n3-R_1 \le R_0 \le \frac n3-1}} \binom{n}{L_0,L_1,R_0,R_1,W} L_0^{L_0-1} L_1^{L_1-1} R_0^{R_0-1} R_1^{R_1-1} W^W \\
    \nonumber \asymp &~ \sum_{1\le L_1,R_1 \le \frac{n}{100}} \sum_{\substack{\frac n3-L_1 \le L_0 \le \frac n3-1\\ \frac n3-R_1 \le R_0 \le \frac n3-1}} \frac{n^{n+1/2}}{(L_0 L_1 R_0 R_1)^{3/2} W^{1/2}} \\
    \nonumber \asymp &~ \sum_{1\le L_1,R_1 \le \frac{n}{100}} \sum_{\substack{\frac n3-L_1 \le L_0 \le \frac n3-1\\ \frac n3-R_1 \le R_0 \le \frac n3-1}} \frac{n^{n-3}}{(L_1 R_1)^{3/2}} \\
    \nonumber \asymp &~ \sum_{1\le L_1,R_1 \le \frac{n}{100}} \frac{n^{n-3}}{(L_1 R_1)^{1/2}} \\
    \nonumber \asymp &~ n^{n-2}.
  \end{align*}
  This finishes the proof.
\end{proof}

\begin{lemma}[\cref{prop:conn:ust-structure-1}, \cref{item:prop:conn:ust-structure-1:i} for $\bQ$] \label{lem:conn:ust-structure-q-i}
  Let $T$ be sampled from $\bQ$ (\cref{defn:conn:ust-measure-q}).
  For any $\epsilon>0$, there exist $\gamma_1,\gamma_2>0$ such that with probability at least $1-\epsilon$, \cref{prop:conn:ust-structure-1}, \cref{item:prop:conn:ust-structure-1:i} holds for $T$.
\end{lemma}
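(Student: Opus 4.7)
The plan is to exploit the explicit structure of trees sampled from $\bQ$. As already extracted during the proof of \cref{lem:conn:ust-contig}, for any tree $T = T(\cD)$ in the support of $\bQ$, the chain of $\tfrac{1}{3}$-balanced edges (which is a chain by \cref{lem:conn:balanced-edge-chain}) has endpoints exactly $u_1$ and $v_1$. Since this chain is a subtree of $T$ containing $u_1$ and $v_1$, it coincides with the unique $u_1$-$v_1$ path in $T$. Hence
\begin{align*}
  |B_{1/3}(T)| = \dist_T(u_1, v_1) = 2 + \dist_{T_W}(u_2, v_2),
\end{align*}
where the $+2$ accounts for the constructed edges $(u_1, u_2)$ and $(v_1, v_2)$, and the rest of the $u_1$-to-$v_1$ path lies entirely inside $T_W$ (going $u_2$ to $v_2$).

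Under $\bQ$, conditional on the choice of distinguished vertices $u_0, u_1, u_2, v_0, v_1, v_2$, the integer parameters $L_0, L_1, R_0, R_1$, and the partition $V = V_{L_0} \sqcup V_{L_1} \sqcup V_{R_0} \sqcup V_{R_1} \sqcup V_W$, the tree $T_W$ is uniformly distributed over spanning trees on $V_W$; that is, it is a UST on $W = |V_W|$ vertices with two fixed distinct vertices $u_2, v_2$. The integer ranges imposed in \cref{defn:conn:ust-measure-q} force $L_0 + L_1 \ge n/3$, $R_0 + R_1 \ge n/3$, and both are at most $n/3 + n/100$, so $W = n - (L_0 + L_1) - (R_0 + R_1) = \Theta(n)$ deterministically.

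Given the two observations above, the proof reduces to invoking \cref{lem:conn:ust-dist} for the UST $T_W$ at the fixed pair $u_2, v_2$: for any $\epsilon > 0$ there exist $\gamma_1', \gamma_2' > 0$ with $\gamma_1' \sqrt W \le \dist_{T_W}(u_2, v_2) \le \gamma_2' \sqrt W$ with conditional probability $\ge 1 - \epsilon$. Since $W = \Theta(n)$, this gives $\gamma_1 \sqrt n \le |B_{1/3}(T)| \le \gamma_2 \sqrt n$ for appropriate absolute constants, and averaging over the conditioning preserves the probability bound. No step here is a genuine obstacle: the structural identification of the chain with the $u_1$-$v_1$ path was already laid out inside the proof of \cref{lem:conn:ust-contig}, and the remaining content is a direct appeal to the standard UST typical-distance estimate \cref{lem:conn:ust-dist}.
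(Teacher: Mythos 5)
Your proof is correct and follows the same route as the paper: identify $B_{1/3}(T)$ with the $u_1$--$v_1$ path, express its length via a distance inside $T_W$, note $W=\Theta(n)$ deterministically, and invoke \cref{lem:conn:ust-dist} for the conditional UST $T_W$. You even silently repair a small slip in the paper's write-up, which states $\dist_T(u_1,v_1)=\dist_{T_W}(u_1,v_1)$ even though $u_1,v_1\notin V_W$; the correct identity is $\dist_T(u_1,v_1)=2+\dist_{T_W}(u_2,v_2)$, exactly as you write.
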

\begin{proof}
  As we discussed in the proof of \cref{lem:conn:ust-contig}, all edges on the path between $u_1$ and $v_1$ are $\frac 13$-balanced.
  Conditioned on $V_W$, $T_W$ being a UST on $V_W$,
  by \cref{lem:conn:ust-dist},
  \begin{align*}
    \bP\left[ \gamma_1 \sqrt W \le \dist_{T_W}(u_1,v_1) \le \gamma_2 \sqrt W \right] \ge 1-\epsilon.
  \end{align*}
  Noting that $\dist_T(u_1,v_1)=\dist_{T_W}(u_1,v_1)$ and $W = \Theta(n)$, we finish the proof.
\end{proof}

\begin{corollary}[\cref{prop:conn:ust-structure-1}, \cref{item:prop:conn:ust-structure-1:i}] \label{coro:conn:ust-structure-i}
  Let $T$ be a UST on $n$ labeled vertices.
  There exist absolute constants $\epsilon,\gamma_1,\gamma_2>0$ such that with probability at least $\epsilon$, \cref{prop:conn:ust-structure-1}, \cref{item:prop:conn:ust-structure-1:i} holds for $T$.
\end{corollary}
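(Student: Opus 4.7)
}
The plan is to transfer \cref{lem:conn:ust-structure-q-i}, which concerns the auxiliary measure $\bQ$, back to the uniform spanning tree measure $\bP$ using the relationship between $\bQ$ and $\bP$ established in the proof of \cref{lem:conn:ust-contig}.

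First, I would recall the key observation from the proof of \cref{lem:conn:ust-contig}: the measure $\bQ$ is not merely contiguous with respect to $\bP$ in an abstract sense; in fact, $\bQ$ coincides with the conditional measure $\bP(\cdot \mid T \in \supp \bQ)$, because every spanning tree in $\supp \bQ$ is produced by exactly two tree tuples. Moreover, the counting argument in that proof shows $|\supp \bQ| = \Theta(n^{n-2})$. Since, by \cref{lem:cayley}, the total number of spanning trees on $n$ labeled vertices is $n^{n-2}$, we conclude that there is an absolute constant $c > 0$ with $\bP(T \in \supp \bQ) \ge c$ for all sufficiently large $n$.

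Next, let $\cA_n$ denote the event that $\gamma_1 \sqrt n \le |B_{1/3}(T)| \le \gamma_2 \sqrt n$ holds for the constants $\gamma_1, \gamma_2$ furnished by \cref{lem:conn:ust-structure-q-i} when applied with, say, $\epsilon_0 = 1/2$. Then $\bQ(\cA_n) \ge 1/2$ for all large $n$. Using the identification $\bQ = \bP(\cdot \mid T \in \supp \bQ)$, I compute
\begin{align*}
  \bP(\cA_n) \ge \bP(\cA_n \cap \{T \in \supp \bQ\}) = \bP(T \in \supp \bQ) \cdot \bQ(\cA_n) \ge \frac{c}{2}.
\end{align*}
Setting $\epsilon := c/2$ yields the desired statement.

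There is essentially no hard step remaining, because the heavy lifting is done in \cref{lem:conn:ust-contig} (reconstruction of $\cD$ from $T(\cD)$ and the Stirling-based count of tree tuples) and in \cref{lem:conn:ust-structure-q-i} (a UST typical distance estimate applied to $T_W$). The only subtlety to double-check is the claim that $\bQ$ is literally the conditional measure of $\bP$ on $\supp \bQ$, since \emph{a priori} \cref{lem:conn:ust-contig} only claims contiguity; I would make this stronger (and true) identity explicit at the start of the proof, because contiguity alone would not directly transfer a \emph{constant} probability lower bound from $\bQ$ to $\bP$.
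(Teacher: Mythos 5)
Your proof is correct and takes essentially the same approach as the paper, whose proof consists of the single line ``By \cref{lem:conn:ust-structure-q-i,lem:conn:ust-contig}''; you simply make the transfer step explicit by invoking the identity $\bQ = \bP(\cdot \mid T \in \supp\bQ)$ and the bound $\bP(T \in \supp\bQ) = \Theta(1)$, both of which are established inside the proof of \cref{lem:conn:ust-contig}. A minor correction to your closing remark: contiguity as defined in the paper does in fact transfer a uniform constant lower bound --- if $\bP(\cA_{n_j}) \to 0$ along some subsequence $(n_j)$, then setting $\cB_n = \cA_n$ for $n \in \{n_j\}$ and $\cB_n = \emptyset$ otherwise yields a full sequence with $\bP(\cB_n)\to 0$ but $\bQ(\cB_n)\not\to 0$, contradicting $\bQ \contig \bP$ --- though your explicit computation avoids this indirection and is arguably the cleaner way to present the step.
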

\begin{proof}
  By \cref{lem:conn:ust-structure-q-i,lem:conn:ust-contig}.
\end{proof}
Using \cref{coro:conn:ust-structure-i}, we can prove that \cref{prop:conn:ust-structure-1} holds for $\bQ$.
\begin{lemma}[\cref{prop:conn:ust-structure-1} for $\bQ$] \label{lem:conn:ust-structure-q-all}
  Let $T$ be sampled from $\bQ$ (\cref{defn:conn:ust-measure-q}).
  There exist absolute constants $\epsilon,\gamma_1,\gamma_2,\gamma_3>0$ such that with probability at least $\epsilon$, both items in \cref{prop:conn:ust-structure-1} hold for $T$.
\end{lemma}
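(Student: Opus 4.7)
The plan is to combine \cref{lem:conn:ust-structure-q-i}, which already handles \cref{item:prop:conn:ust-structure-1:i}, with an independent argument for \cref{item:prop:conn:ust-structure-1:ii} that exploits the independence of the five subtrees $T_{L_0}, T_{L_1}, T_{R_0}, T_{R_1}, T_W$ under $\bQ$.

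First I would pin down the location of $B_{1/3}(T)$ for any $T=T(\cD)$. The edges $(u_0,u_1)$ and $(v_0,v_1)$ are not $\frac{1}{3}$-balanced because they isolate components of size $L_0\le n/3-1$ and $R_0\le n/3-1$ respectively. Any edge interior to $T_{L_0},T_{L_1},T_{R_0}$, or $T_{R_1}$ isolates a component of size less than $n/3$. From $L_0+L_1\ge n/3$ and $R_0+R_1\ge n/3$ we obtain $W\le n/3$, so any off-path edge of $T_W$ isolates a piece of $T_W$ of size at most $W-1<n/3$ and is also not $\frac{1}{3}$-balanced. A direct check then shows that the remaining candidates---namely $(u_1,u_2)$, $(v_1,v_2)$, and every edge on the $u_2$-$v_2$ path in $T_W$---are all $\frac{1}{3}$-balanced, so $B_{1/3}(T)$ is precisely the $u_1$-$v_1$ path in $T$.

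Next, for every $e\in B_{1/3}(T)$, let $T_1$ be the component of $T\setminus e$ containing $u_1$ and $T_2$ the other; by the previous step $T_{L_0}\subseteq T_1$ and $T_{R_0}\subseteq T_2$ as full subtrees. I would show that every $\frac{1}{3}$-balanced edge of $T_{L_0}$ is $\frac{1}{7}$-balanced in $T_1$. Writing $|T_1|=L_0+L_1+1+k$ with $k\in[0,W]$ gives $|T_1|\in[n/3+1,2n/3+1]$; and for $f\in B_{1/3}(T_{L_0})$, cutting $f$ inside $T_1$ produces a sub-piece of $T_{L_0}$ of size $a\in[L_0/3,2L_0/3]$ and a complementary piece of size $|T_1|-a$. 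Since $L_0\ge n/3-n/100$, a routine calculation yields $a\ge L_0/3>|T_1|/7$, and since $|T_1|\ge n/3+1>7L_0/9$ one obtains $|T_1|-a\ge|T_1|-2L_0/3>|T_1|/7$, so $f$ is $\frac{1}{7}$-balanced in $T_1$. The same argument applies symmetrically to $T_{R_0}$ and $T_2$.

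To conclude, I would condition on the skeleton data (the six distinguished vertices, the integers $L_0,L_1,R_0,R_1$, and the partition of $V$), under which $T_{L_0},T_{R_0},T_W$ are independent USTs on vertex sets of size $\Theta(n)$ each. Applying \cref{coro:conn:ust-structure-i} to $T_{L_0}$ and $T_{R_0}$ separately shows that each has $\Omega(\sqrt n)$ many $\frac{1}{3}$-balanced edges with constant probability, which by the previous paragraph translate to $\Omega(\sqrt n)$ many $\frac{1}{7}$-balanced edges in $T_1$ and $T_2$ uniformly over $e\in B_{1/3}(T)$. By independence, both events hold simultaneously with constant probability; combining with \cref{lem:conn:ust-structure-q-i} for \cref{item:prop:conn:ust-structure-1:i} via a union bound yields both items of \cref{prop:conn:ust-structure-1} with constant probability. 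The main bookkeeping obstacle is verifying that $\frac{1}{3}$-balance in $T_{L_0}$ upgrades to $\frac{1}{7}$-balance in every possible $T_1$, but this reduces to the elementary size inequalities above.
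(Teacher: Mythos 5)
Your proposal is correct and follows essentially the same route as the paper: identify $B_{1/3}(T)$ as the $u_1$-$v_1$ path, observe that $T_{L_0}\subseteq T_1$ and $T_{R_0}\subseteq T_2$ for every cut edge on that path, upgrade $\tfrac13$-balance in $T_{L_0}$ (resp.\ $T_{R_0}$) to $\tfrac17$-balance in $T_1$ (resp.\ $T_2$) via the size inequalities, and then exploit the conditional independence of $T_{L_0},T_{R_0},T_W$ given the partition data to combine the constant-probability events. The only cosmetic difference is the final combining step: the paper bounds all three conditional probabilities below by $\epsilon^{1/3}$ and multiplies them using independence, whereas you bring in item~\ref{item:prop:conn:ust-structure-1:i} via \cref{lem:conn:ust-structure-q-i} (which gives probability $1-\epsilon$ for an adjustable $\epsilon$) and close with a union bound; both are valid since the failure probability for the $T_W$ event can be driven below the fixed constant success probability of the $T_{L_0},T_{R_0}$ events.
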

\begin{proof}
  Let $\cE_{L_0}$ (resp.~$\cE_{R_0}$, $\cE_W$) be the event that $\left| B_{1/3}(T_{L_0}) \right|$ (resp.~$\left| B_{1/3}(T_{R_0}) \right|$, $\dist_{T_W}(u_1,v_1)$) is in $[\gamma_1 \sqrt n, \gamma_2 \sqrt n]$.
  Let $\cE = \cE_{L_0} \cap \cE_{R_0} \cap \cE_W$.
  Let $\cA$ be the $\sigma$-algebra generated by $V_{L_0}$, $V_{R_0}$, $V_W$.
  Conditioned on $\cA$, the three subtrees $T_{L_0}$, $T_{R_0}$, $T_W$ are independent and are USTs on the respective vertex sets.
  By \cref{lem:conn:ust-dist}, \cref{coro:conn:ust-structure-i}, and $L_0,R_0,W=\Theta(n)$, there exist $\epsilon,\gamma_1,\gamma_2>0$ such that
  \begin{align*}
    \bP(\cE_{L_0} |\cA), \bP(\cE_{R_0} |\cA), \bP(\cE_W |\cA) \ge \epsilon^{1/3}.
  \end{align*}
  By independence, we have $\bP(\cE | \cA) \ge \epsilon$.
  Therefore $\bP(\cE) \ge \epsilon$.

  Now we show that when $\cE$ happens, both items in \cref{prop:conn:ust-structure-1} hold.
  As we discussed in the proof of \cref{lem:conn:ust-contig}, $B_{1/3}(T)$ is the set of edges on the path between $u_1$ and $v_1$.
  So \cref{item:prop:conn:ust-structure-1:i} holds because $\cE_W$ happens.
  Now consider \cref{item:prop:conn:ust-structure-1:ii}.
  Let $e$ be an edge on the path between $u_1$ and $v_1$.
  Let $T_1$, $T_2$ be the two components of $T\backslash e$, where $u_1\in T_1$, $v_1\in T_2$.
  Then $T_{L_0} \subseteq T_1$, $T_{R_0} \subseteq T_2$.
  Every $\frac 13$-balanced edge of $T_{L_0}$ divides $T_{L_0}$ into components of size at least $\frac{|T_{L_0}|}3$,
  so it is a $\beta$-balanced edge of $T_1$ with $\beta = \frac{|T_{L_0}|}{3 |T_1|} \ge \frac{\frac{n}{3} - \frac{n}{100}}{3 \cdot \frac{2}{3} n} \ge \frac 17$.
  The same discussion holds for $T_{R_0}$ and $T_2$.
  Because $\cE_{L_0}$ and $\cE_{R_0}$ happen, \cref{item:prop:conn:ust-structure-1:ii} holds.
\end{proof}

By \cref{lem:conn:ust-structure-q-all,lem:conn:ust-contig}, we complete the proof of \cref{prop:conn:ust-structure-1}.

\begin{lemma} \label{lem:conn:ust-structure-2}
  Let $T$ be a UST. Then for any $\epsilon>0$, $0<\beta<\frac 12$, there exists $\gamma>0$ such that
  \begin{align*}
    \bP\left[ \left| B_\beta(T) \right| \le \gamma \sqrt n \right] \ge 1-\epsilon.
  \end{align*}
\end{lemma}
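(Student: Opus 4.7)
The plan is to upper bound $\mathbb{E}|B_\beta(T)|$ by $O(\sqrt n)$ and then invoke Markov's inequality. Concretely, we would choose $\gamma = C/\epsilon$ where $C$ is the constant coming from the bound $\mathbb{E}|B_\beta(T)| \le C\sqrt n$, and conclude $\bP(|B_\beta(T)| > \gamma\sqrt n) \le \epsilon$.

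To bound the expectation, we write $|B_\beta(T)| = \sum_{e \in \binom V2} \mathbbm{1}[e\in T \text{ and } e \text{ is } \beta\text{-balanced in } T]$. By the symmetry of the UST under vertex relabellings, every potential edge contributes the same probability, so
\begin{align*}
\mathbb{E}|B_\beta(T)| = \binom{n}{2} \sum_{\beta n \le k \le (1-\beta) n} \bP\bigl[(1,2)\in T \text{ and the component of } 1 \text{ in } T\setminus\{(1,2)\} \text{ has size } k\bigr].
\end{align*}
Using Cayley's formula (\cref{lem:cayley}), the number of spanning trees on $[n]$ containing the edge $(1,2)$ and whose removal separates $[n]$ into parts of sizes $k$ (with vertex $1$) and $n-k$ (with vertex $2$) is
\begin{align*}
\binom{n-2}{k-1}\, k^{k-2}\, (n-k)^{n-k-2},
\end{align*}
obtained by choosing which $k-1$ of the remaining $n-2$ vertices join vertex $1$ and then taking independent spanning trees on each side. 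Dividing by $n^{n-2}$ gives the probability above.

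The main (but still routine) step is to apply Stirling's formula (\cref{lem:stirling}) to estimate the summands. For $k=\alpha n$ with $\alpha \in [\beta,1-\beta]$, a direct computation yields
\begin{align*}
\frac{\binom{n-2}{k-1}\, k^{k-2}\, (n-k)^{n-k-2}}{n^{n-2}} \asymp \frac{\sqrt{n}}{\bigl(k(n-k)\bigr)^{3/2}} \asymp \frac{1}{n^{5/2}},
\end{align*}
with a constant depending only on $\beta$. Summing over the $\Theta(n)$ values of $k$ in $[\beta n,(1-\beta)n]$ gives a contribution of order $n^{-3/2}$, and multiplying by $\binom n2 \asymp n^2$ produces $\mathbb{E}|B_\beta(T)| = O(\sqrt n)$, as needed.

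The only place requiring care is ensuring the Stirling estimates are uniform in $k$ across the range $[\beta n,(1-\beta)n]$; since $\beta$ is a fixed constant, both $k$ and $n-k$ are $\Theta(n)$, so the hidden constants in \cref{lem:stirling} stay bounded. No ingredient beyond \cref{lem:stirling,lem:cayley} and Markov's inequality is needed, and in particular we do not rely on the chain structure of balanced edges (which was established only for $\beta=1/3$).
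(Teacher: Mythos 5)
Your proof is correct and is essentially the same as the paper's: both compute $\mathbb{E}|B_\beta(T)|$ by the first-moment method, counting spanning trees with a $\beta$-balanced edge as $\binom{n-2}{k-1}k^{k-2}(n-k)^{n-k-2}$, applying Stirling to get $\asymp \sqrt n/(k(n-k))^{3/2}\asymp n^{-5/2}$ per term, summing to obtain $\mathbb{E}|B_\beta(T)|\asymp\sqrt n$, and finishing with Markov's inequality. The parenthetical observation that the chain-structure lemma is not needed here is a fair point but doesn't change the argument.
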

\begin{proof}
  The proof is by the first moment method.
  Let $e=(u,v)$ be an unordered pair of vertices.
  For any integer $\beta n \le k \le (1-\beta) n$, the number of spanning trees containing $e$ such that $u$'s side contains exactly $k$ (including $u$) vertices is
  \begin{align*}
    \binom{n-2}{k-1} k^{k-2} (n-k)^{n-k-2}
    \asymp n^{-2} \binom nk k^{k-1} (n-k)^{n-k-1}
    \asymp \frac{n^{n-3/2}}{(k(n-k))^{3/2}}
    \asymp n^{n-9/2}.
  \end{align*}
  So the number of trees $T$ with $e\in B_\beta(T)$ is
  \begin{align*}
    \asymp \sum_{k\in [\beta n, (1-\beta) n]} n^{n-9/2} \asymp n^{n-7/2}.
  \end{align*}
  Therefore,
  \begin{align*}
    \bE \left[B_\beta(T)\right] \asymp \frac{\binom n2 \cdot n^{n-7/2}}{n^{n-2}} \asymp \sqrt n.
  \end{align*}
  The result then follows from Markov's inequality.
\end{proof}

\begin{lemma} \label{lem:conn:ust-structure-3}
  Let $T$ be a UST.
  For $e\in T$, let $s_T(e)$ denote the size of the smaller component of $T\backslash e$.
  Then for any $\epsilon>0$, there exists $\gamma>0$ such that
  \begin{align*}
    \bP\left[ \sum_{e\in T} s_T(e) \le \gamma n^{3/2} \right] \ge 1-\epsilon.
  \end{align*}
\end{lemma}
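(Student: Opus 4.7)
The plan is to follow the first moment method used in the proof of \cref{lem:conn:ust-structure-2}, this time weighting each edge by $s_T(e)$ rather than just by the indicator of being $\beta$-balanced. Fix an unordered pair $e = (u,v)$. Exactly as in \cref{lem:conn:ust-structure-2}, the number of spanning trees of $[n]$ containing $e$ for which $u$'s side of $T \setminus e$ has exactly $k$ vertices is
\begin{align*}
  \binom{n-2}{k-1} k^{k-2} (n-k)^{n-k-2} \asymp \frac{n^{n-3/2}}{(k(n-k))^{3/2}}.
\end{align*}

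Next I would compute the expected contribution of $e$ to $\sum_{e'\in T}s_T(e')$ by summing over the side size $k$, weighted by $\min\{k,n-k\}$ rather than by $\mathbbm{1}_{\beta n \le k \le (1-\beta)n}$. By symmetry, the relevant sum is
\begin{align*}
  \sum_{k=1}^{n-1}\frac{\min\{k,n-k\}}{(k(n-k))^{3/2}} \asymp \sum_{k=1}^{n/2}\frac{k}{(k\cdot n)^{3/2}} = \frac{1}{n^{3/2}}\sum_{k=1}^{n/2}\frac{1}{k^{1/2}} \asymp \frac{1}{n},
\end{align*}
since $\sum_{k\le n/2}k^{-1/2}\asymp n^{1/2}$. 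Dividing by the total count $n^{n-2}$ of spanning trees (\cref{lem:cayley}) and multiplying by the $n^{n-3/2}$ factor gives
\begin{align*}
  \bE[s_T(e)\,\mathbbm{1}_{e\in T}] \asymp n^{1/2}\cdot\frac{1}{n} = n^{-1/2}.
\end{align*}
Summing over the $\binom{n}{2}$ potential edges and using linearity of expectation yields $\bE\bigl[\sum_{e\in T}s_T(e)\bigr] \asymp n^{3/2}$. Markov's inequality then gives the desired high-probability bound, with $\gamma$ chosen as a sufficiently large constant multiple of $\epsilon^{-1}$ times the implicit constant in the expectation estimate.

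The only slightly nonroutine step is verifying that the weighted sum over $k$ converges to $\Theta(1/n)$ rather than something larger; this is driven by the heavy mass of $s_T(e)$ near the leaves (small $k$), where $(k(n-k))^{3/2} \approx n^{3/2}k^{3/2}$ makes the sum harmonic-type. All other ingredients (the explicit count of trees containing a given edge, Stirling via \cref{lem:stirling}, and Markov) are already used in the preceding lemma, so no new technique is required.
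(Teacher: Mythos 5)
Your proposal is correct and follows essentially the same route as the paper: compute $\bE\bigl[\sum_{e\in T}s_T(e)\bigr]$ by the first moment method using the explicit count of spanning trees containing a fixed edge with prescribed side sizes, show it is $\Theta(n^{3/2})$, and conclude by Markov. The only cosmetic difference is that you reuse the $\asymp n^{n-3/2}/(k(n-k))^{3/2}$ estimate from Lemma~\ref{lem:conn:ust-structure-2} directly (and weight by $\min\{k,n-k\}$), whereas the paper re-derives via $\binom nk$ and Stirling before summing; both give $\bE[s_T(e)] \asymp n^{-1/2}$ per pair.
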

\begin{proof}
  The proof is by first moment method.
  Let $e=(u,v)$ be an unordered pair of vertices.
  We define $s_T(e)=0$ if $e\not \in T$.
  Then
  \begin{align*}
    \bE \left[s_T(e)\right] &\asymp \frac 1{n^{n-2}} \cdot \sum_{1\le k\le \frac n2} k \cdot \binom{n-2}{k-1} k^{k-2} (n-k)^{n-k-2} \\
    \nonumber &\asymp \frac 1{n^n} \sum_{1\le k\le \frac n2} \binom nk k^k (n-k)^{n-k-1} \\
    \nonumber &\asymp \sum_{1\le k\le \frac n2} \frac{n^{1/2}}{k^{1/2} (n-k)^{3/2}} \\
    \nonumber &\asymp \sum_{1\le k\le \frac n2} \frac{1}{k^{1/2} n} \\
    \nonumber &\asymp n^{-1/2}.
  \end{align*}
  Therefore,
  \begin{align*}
    \bE\left[ \sum_{e\in T} s_T(e)\right] \asymp \binom n2 \cdot n^{-1/2} \asymp n^{3/2}.
  \end{align*}
  The result then follows from Markov's inequality.
\end{proof}

Combining \cref{prop:conn:ust-structure-1}, \cref{lem:conn:ust-structure-2} and \cref{lem:conn:ust-structure-3}, we get the following corollary.
\begin{corollary} \label{coro:conn:ust-structure-final}
  Let $T$ be a UST on $n$ labeled vertices.
  There exist absolute constants $\epsilon,\gamma_1,\gamma_2,\gamma_3,\gamma_4,\gamma_5>0$ such that the following are true simultaneously.
  \begin{enumerate}[label=(\roman*)]
    \item\label{item:coro:conn:ust-structure-final:i}
    \begin{align*}
      \gamma_1\sqrt n \le \left| B_{1/3}(T) \right| \le \left| B_{1/42}(T) \right| \le \gamma_2 \sqrt n.
    \end{align*}
    \item\label{item:coro:conn:ust-structure-final:ii} For all $e\in B_{1/3}(T)$, if $T_1$ and $T_2$ are the two connected components of $T\backslash e$, then
    \begin{align*}
      \gamma_3 \sqrt n \le \min\{\left| B_{1/7}(T_1) \right|,\left| B_{1/7}(T_2) \right|\} \le \max\{\left| B_{1/14}(T_1) \right|,\left| B_{1/14}(T_2) \right|\} \le \gamma_4 \sqrt n.
    \end{align*}
    \item\label{item:coro:conn:ust-structure-final:iii}
    \begin{align*}
      \sum_{e\in T} s_T(e) \le \gamma_5 n^{3/2}.
    \end{align*}
  \end{enumerate}
\end{corollary}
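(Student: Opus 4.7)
The plan is to combine \cref{prop:conn:ust-structure-1}, \cref{lem:conn:ust-structure-2} and \cref{lem:conn:ust-structure-3} via a union bound, where the only non-trivial observation is that the upper bound in item \ref{item:coro:conn:ust-structure-final:ii} of the corollary reduces to the upper bound in item \ref{item:coro:conn:ust-structure-final:i} through a simple containment of balanced-edge sets.

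Let $\epsilon>0$ be the absolute constant from \cref{prop:conn:ust-structure-1}; on an event $\cE_1$ of probability at least $\epsilon$ we have $|B_{1/3}(T)| \ge \gamma_1\sqrt n$ and $\min\{|B_{1/7}(T_1)|,|B_{1/7}(T_2)|\} \ge \gamma_3\sqrt n$ for every $e\in B_{1/3}(T)$ with $T_1,T_2$ the components of $T\setminus e$. I would then apply \cref{lem:conn:ust-structure-2} with $\beta=1/42$ and failure probability $\epsilon/3$ to obtain an event $\cE_2$ of probability at least $1-\epsilon/3$ on which $|B_{1/42}(T)| \le \gamma_2\sqrt n$ for some $\gamma_2>0$; since trivially $B_{1/3}(T) \subseteq B_{1/42}(T)$, this same $\gamma_2$ upper-bounds $|B_{1/3}(T)|$ as well. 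Similarly, \cref{lem:conn:ust-structure-3} with failure probability $\epsilon/3$ yields an event $\cE_3$ on which $\sum_{e\in T} s_T(e) \le \gamma_5 n^{3/2}$.

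The remaining point is the following deterministic containment, which is the only step requiring an argument: for every tree $T$ on $n$ vertices, every $e\in B_{1/3}(T)$ and either component $T_i$ of $T\setminus e$, one has $B_{1/14}(T_i) \subseteq B_{1/42}(T)$. To verify it, note that $|T_1|,|T_2|\in[n/3,2n/3]$; if $e'\in B_{1/14}(T_1)$ splits $T_1$ into $T_{1a},T_{1b}$ of sizes at least $|T_1|/14 \ge n/42$, then removing $e'$ from the full tree $T$ (rather than from $T_1$) keeps one of $T_{1a},T_{1b}$ intact and merges the other with $T_2$ through the edge $e$, so both components of $T\setminus e'$ have size at least $n/42$, i.e., $e'\in B_{1/42}(T)$.

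Setting $\gamma_4=\gamma_2$, a union bound then gives $\bP(\cE_1\cap \cE_2\cap \cE_3)\ge \epsilon-2(\epsilon/3)=\epsilon/3>0$, and on this intersection all three conclusions \ref{item:coro:conn:ust-structure-final:i}--\ref{item:coro:conn:ust-structure-final:iii} of the corollary hold simultaneously. I do not anticipate any real obstacle: everything is in place and the only thing to double-check is the arithmetic $3\cdot 14 = 42$ driving the containment, which is precisely why the particular constants $1/14$ and $1/42$ (rather than, say, $1/7$ and $1/7$) appear in the statement.
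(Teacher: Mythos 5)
Your proof is correct and follows essentially the same route as the paper: combine \cref{prop:conn:ust-structure-1}, \cref{lem:conn:ust-structure-2}, and \cref{lem:conn:ust-structure-3} by a union bound, then derive the upper bound in item \ref{item:coro:conn:ust-structure-final:ii} deterministically from the upper bound in item \ref{item:coro:conn:ust-structure-final:i} via the containment $B_{1/14}(T_1)\cup B_{1/14}(T_2)\subseteq B_{1/42}(T)$, setting $\gamma_4=\gamma_2$. The only difference is that you spell out the verification of this containment, which the paper states without proof.
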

\begin{proof}
  By \cref{prop:conn:ust-structure-1}, there exist $\epsilon,\gamma_1,\gamma_3>0$ such that with probability at least $\epsilon$, the lower bounds in \cref{coro:conn:ust-structure-final}, \cref{item:coro:conn:ust-structure-final:i,item:coro:conn:ust-structure-final:ii} hold.
  By \cref{lem:conn:ust-structure-2}, there exists $\gamma_2>0$ such that
  \begin{align*}
    \bP\left[\left| B_{1/42}(T) \right| \le \gamma_2 \sqrt n\right] \ge 1-\epsilon/3.
  \end{align*}
  By \cref{lem:conn:ust-structure-3}, there exists $\gamma_5>0$ such that
  \begin{align*}
    \bP\left[\sum_{e\in T} s_T(e) \le \gamma_5 n^{3/2} \right] \ge 1-\epsilon/3.
  \end{align*}
  By union bound, with probability at least $\epsilon/3$, \cref{coro:conn:ust-structure-final}, \cref{item:coro:conn:ust-structure-final:i,item:coro:conn:ust-structure-final:iii}, and the lower bound in \cref{coro:conn:ust-structure-final}, \cref{item:coro:conn:ust-structure-final:ii} hold.
  Now notice that $B_{1/14}(T_1) \cup B_{1/14}(T_2) \subseteq B_{1/42}(T)$, so the upper bound in \cref{coro:conn:ust-structure-final}, \cref{item:coro:conn:ust-structure-final:ii} holds with $\gamma_4=\gamma_2$.
\end{proof}

\subsection{Hard distribution and three-phase problem} \label{sec:conn:init}
We now start the proof of \cref{thm:graph-conn-hard}.
Our input distribution is defined as follows.
\begin{definition}[Hard distribution for graph connectivity] \label{defn:conn:hard-dist}
  Let $\beta_0=\frac 1{21}$.
  We generate the input graph $G$ using the following procedure.
  \begin{enumerate}[label=(\arabic*)]
    \item Let $T$ be a UST on vertex set $V=[n]$.
    \item If $B_{\beta_0}(T)=\emptyset$, return to the previous step.
    Otherwise, let $e_0\sim \Unif(B_{\beta_0}(T))$.
    \item Throw a fair coin $z\sim \Ber(1/2)$.
    If $z=1$, output $G=T$; if $z=0$, output $G=T\backslash e_0$.
  \end{enumerate}
\end{definition}

We define the following three-phase problem, where the algorithm has more power than in the noisy query model.
\begin{definition}[Three-phase problem] \label{defn:conn:three-phase}
  Let $G$ be generated from \cref{defn:conn:hard-dist}.
  Let $c_1,c_2>0$ be constants to be determined later.
  Consider the following three-phase problem.
  \begin{enumerate}[label=\arabic*.]
    \item \label{item:sec:conn:phase-1} Phase 1: The algorithm makes $m_1=c_1\log n$ noisy queries to every unordered pair of vertices $(u,v)\in \binom V2$.
    \item \label{item:sec:conn:phase-2} Phase 2: The oracle reveals some edges and non-edges of $G$. The choice of these edges and non-edges will be described later.
    \item \label{item:sec:conn:phase-3} Phase 3: The algorithm makes up to $m_2=c_2 n^2$ (adaptive) exact queries.
  \end{enumerate}
  The goal of the algorithm is to determine whether the input graph $G$ is connected.
\end{definition}

\begin{lemma} \label{lem:conn:reduction}
  If no algorithm can solve the three-phase problem (\cref{defn:conn:three-phase}) with error probability $\epsilon>0$, then no algorithm can solve the graph connectivity problem with error probability $\epsilon$ using at most $m_1 m_2 = c_1 c_2 n^2 \log n$ noisy queries.
\end{lemma}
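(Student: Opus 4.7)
The plan is to mirror the reduction of \cref{lem:inf:reduction} in the graph setting. Suppose $\cA$ is a noisy-query algorithm that solves Graph Connectivity on inputs $G$ drawn from the hard distribution of \cref{defn:conn:hard-dist} with error probability at most $\epsilon$ and uses in expectation at most $m_1 m_2 = c_1 c_2 n^2 \log n$ noisy queries. I will construct an algorithm $\cA'$ for the three-phase problem on the same $G$, with the same error probability and with expected Phase 3 budget at most $m_2$.

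The construction goes as follows. In Phase 1, $\cA'$ observes and stores the $m_1$ noisy outcomes $y_{e,1}, \ldots, y_{e,m_1}$ for every unordered pair $e \in \binom V 2$. In Phase 2 it records the edges and non-edges revealed by the oracle. In Phase 3, $\cA'$ runs $\cA$ and intercepts each of its queries. It maintains a table that assigns to every potential edge $e$ a tentative value $z_e \in \{0,1,*\}$, initialized to the revealed value of $e$ (if $e$ was revealed in Phase 2) and to $*$ otherwise. For the $k$-th query $\cA$ makes to $e$: if $k \le m_1$, return $y_{e,k}$; otherwise, if $z_e = *$, issue one exact Phase 3 query to $e$, set $z_e$ to the answer, and return a fresh sample $\BSC_p(z_e)$; and if $z_e \in \{0,1\}$, return a fresh sample $\BSC_p(z_e)$ without spending any Phase 3 budget. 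When $\cA$ halts, $\cA'$ outputs $\cA$'s answer.

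Correctness is immediate: conditioned on the true input $G$, every bit that $\cA$ sees inside $\cA'$ is an independent $\BSC_p$ corruption of $\mathbf{1}[e \in G]$ for the queried pair $e$, exactly as in the real noisy-query model, because the Phase 1 samples are by construction such corruptions and the fresh $\BSC_p(z_e)$ bits are drawn independently of each other and of Phase 1. Hence $\cA'$ outputs the correct answer with probability at least $1 - \epsilon$. To bound the budget, observe that the number of exact Phase 3 queries made by $\cA'$ equals the number of distinct pairs $e$ that $\cA$ queries more than $m_1$ times; each such pair contributes at least $m_1 + 1$ to $\cA$'s total query count. Since $\cA$ makes at most $m_1 m_2$ noisy queries in expectation, the expected number of these pairs is at most $m_2$, so $\cA'$ uses at most $m_2$ Phase 3 queries in expectation; a Markov-style truncation at $2 m_2$ Phase 3 queries yields a worst-case budget bound at the cost of constants that can be absorbed into $c_2$ and $\epsilon$.

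The argument is essentially bookkeeping and no genuine obstacle arises. The one point that must be handled carefully is that after a Phase 3 exact query to $e$ we always draw \emph{fresh} independent $\BSC_p(z_e)$ samples rather than reusing any Phase 1 outcomes, and that the information revealed in Phase 2 is used by $\cA'$ only to skip unnecessary exact queries and is never fed back to $\cA$ directly. With these conventions the conditional law of $\cA$'s transcript given $G$ is identical to that in the noisy-query model, and the contrapositive yields the claim: if no algorithm solves the three-phase problem with error probability $\epsilon$, then no algorithm can solve Graph Connectivity with error probability $\epsilon$ using at most $m_1 m_2 = c_1 c_2 n^2 \log n$ noisy queries.
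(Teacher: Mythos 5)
Your proof is correct and follows essentially the same reduction the paper uses for \cref{lem:inf:reduction}, which the paper explicitly says is the template for this lemma: replay the $m_1$ Phase-1 samples for the first $m_1$ queries to any pair, issue one exact Phase-3 query the first time a pair is over-queried, and thereafter feed $\cA$ fresh $\BSC_p$ corruptions of the now-known bit, with the budget bound coming from the fact that each pair queried more than $m_1$ times costs $\cA$ at least $m_1$ queries. The only cosmetic deviations are that you additionally seed $z_e$ from the Phase-2 revelations to avoid some exact queries (harmless, since those revelations are true bits), and that you spell out the Markov truncation that converts the expected Phase-3 budget into a worst-case one; the paper leaves that conversion implicit.
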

The proof is similar to \cref{lem:inf:reduction} and omitted.

\begin{proposition}[Hardness of the three-phase problem] \label{prop:conn:three-phase-hard}
  For some choices of $c_1$, $c_2$, and Phase 2 strategy (\cref{defn:conn:three-phase}), the following is true:
  there exists $\epsilon>0$ such that no algorithm can solve the three-phase problem (\cref{defn:conn:three-phase}) with error probability $\epsilon$.
\end{proposition}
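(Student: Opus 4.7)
The plan is to follow the three-phase blueprint from \cref{sec:conn:overview} and to leverage the UST structural results of \cref{sec:conn:ust}. For Phase 1 and Steps 2a, 2b, I would essentially transcribe the argument from \cref{sec:inf:phase-1,sec:inf:phase-2}, since that analysis is phrased entirely in terms of per-potential-edge counts $a_e$ and does not use that the hidden object is a bit string rather than a graph. With $I=[pm_1 - m_1^{0.6},\,pm_1 + m_1^{0.6}]$ and the same weights $q_k$, the posterior after Step 2b collapses to $\bP^{(2b)}(G) \propto (p_{k_l}/p_{m_1-k_l})^{|E(G)|}$, and the joint law of $(G,\text{observations up to Step 2b})$ equals the one in which each true edge is independently revealed with probability $p_+ = 1 - n^{-c_3\pm o(1)}$ and each non-edge with probability $p_- = o(1)$.

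Step 2c is where graph structure enters. I would condition on the constant-probability event of \cref{coro:conn:ust-structure-final}, which guarantees that $T$ has $\Theta(\sqrt n)$ $\beta_0$-balanced edges with the right two-sided bounds on $|B_{1/42}|$ and $|B_{1/7}|$. Since $p_+ = 1-o(1)$, in expectation only $o(\sqrt n)$ of $T$'s edges are missed by Steps 2a/2b, so by Markov $\Omega(\sqrt n)$ unrevealed $\beta_0$-balanced edges survive with constant probability. Step 2c then picks a uniform such $e^\ast$ and reveals every other edge of $G$, leaving two components $T_1, T_2$ with $|T_i| \ge \beta_0 n$, so that $\bP^{(2)}$ is supported on $\{G_0 := T_1\cup T_2\}\cup\{G_e := G_0\cup\{e\} : e \in E(T_1,T_2)\}$.

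Next I would compute the posterior ratios. They combine three factors: (i) the Step-2b contribution $p_{k_l}/p_{m_1-k_l}$ per extra edge; (ii) the prior ratio from \cref{defn:conn:hard-dist}, which for each candidate involves the UST probability (by \cref{lem:cayley}) and the normalizing weight $1/|B_{\beta_0}(\cdot)|$ coming from the choice of $e_0$; and (iii) the Step-2c conditional probability $1/|B_{\beta_0}(G)\setminus \text{revealed}|$ of picking $e^\ast$. Plugging in Cayley's formula and the uniform estimate $|B_{1/42}(\cdot)| = \Theta(\sqrt n)$ from \cref{coro:conn:ust-structure-final}, I would derive
\[
  \frac{\bP^{(2)}(G_e)}{\bP^{(2)}(G_0)} = \Theta\!\left(\frac{1}{|T_1|\,|T_2|}\right) = \Theta\!\left(\frac{1}{n^2}\right)
\]
for all but an $o(1)$-fraction of $e \in E(T_1,T_2)$, so since $|E(T_1,T_2)| = \Theta(n^2)$ the total posterior mass of connected candidates matches $\bP^{(2)}(G_0)$ up to a constant factor. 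Phase 3 is then a Bayes calculation: a noiseless query to $e \in E(T_1,T_2)$ returns $0$ iff $G \ne G_e$, so for $c_2$ small enough $\sum_{e\in E^{(3)}}\bP^{(2)}(G_e) \le \tfrac 12 \bP^{(2)}(G_0)$, and with constant probability every Phase 3 answer is $0$; conditioning on this keeps the posterior masses of $G_0$ and of $\{G_e : e \notin E^{(3)}\}$ within a constant factor, forcing $\Omega(1)$ error for any deterministic output.

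The main obstacle is the ratio computation in the third paragraph. Step 2c's choice of $e^\ast$ biases toward edges $e$ whose augmented graph $G_e$ has many $\beta_0$-balanced edges, and disentangling this bias from the $|T_1||T_2|$ counting requires tight two-sided control on $|B_\beta(\cdot)|$ for typical candidate trees. This is exactly what \cref{coro:conn:ux-structure-final} supplies, with matching upper and lower bounds on $|B_{1/42}|$ and $|B_{1/7}|$, and the auxiliary bound $\sum_e s_T(e) = O(n^{3/2})$ in item (iii) is what rules out atypical edges dominating the sum; everything downstream — the reduction of \cref{lem:conn:reduction} and the Phase 3 calculation — is a straightforward analogue of the $k$-Threshold argument and the high-influence proof.
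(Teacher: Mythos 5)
Your proposal follows essentially the same path as the paper's proof: Phase 1 and Steps 2a--2b are transcribed from the high-influence argument, Step 2c conditions on the UST structural event and picks a uniform unrevealed balanced edge, the posterior is computed via \cref{eqn:sec:conn:phase-2:step-c:post-g0-2}--\cref{eqn:sec:conn:phase-2:step-c:post-ge-2}, and Phase 3 is a Bayes calculation.

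One point in your third paragraph deserves sharpening, because the phrasing you chose would not directly support the Phase 3 step. You claim $\bP^{(2)}(G_e)/\bP^{(2)}(G_0)=\Theta(1/n^2)$ for all but an $o(1)$-fraction of $e$, and then bound $\sum_{e\in E^{(3)}}\bP^{(2)}(G_e)$ by $\tfrac12\bP^{(2)}(G_0)$. But Phase 3 is adaptive: the algorithm sees $T_1$, $T_2$, and the revealed sets, so it could deliberately query the exceptional edges, and nothing in the ``all but $o(1)$-fraction'' statement caps their individual weights. What the Phase 3 bound actually needs is a \emph{uniform upper} bound $\bP^{(2)}(G_e) = O(n^{-2})$ for every $e\in E(T_1,T_2)$, together with the \emph{aggregate lower} bound $\sum_e \bP^{(2)}(G_e)=\Omega(1)$. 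Relatedly, the ``uniform estimate $|B_{1/42}(\cdot)|=\Theta(\sqrt n)$'' you invoke is not quite what the structural lemmas give for the trees $G_e$: \cref{coro:conn:phase-2:step-c:structure-2} supplies a uniform lower bound $|B_{\beta_0}(G_e)|\ge\gamma_1\sqrt n$ for every $e$ (item~(i)), and a two-sided bound on $\sum_e 1/|B_{\beta_0}(G_e)|$ of order $n^{3/2}$ (item~(iii)), but \emph{not} a uniform upper bound $|B_{\beta_0}(G_e)|\le O(\sqrt n)$ — a generic $G_e$ is not the original tree $T$, and its $\beta_0$-balanced edge count can exceed $\Theta(\sqrt n)$ for atypical $e$. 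Fortunately, the one-sided lower bound on $|B_{\beta_0}(G_e)|$ is exactly what yields the uniform upper bound $\bP^{(2)}(G_e)=O(n^{-2})$, and the Cauchy--Schwarz argument behind item~(iii) (combining $|B_{\beta_0}(G_e)|\le\gamma_2\sqrt n + |B'_{T_1}(e)|+|B'_{T_2}(e)|$ with $\sum_e s_T(e)=O(n^{3/2})$ from \cref{lem:conn:ust-structure-3}) yields the aggregate lower bound. So your structural ingredients are the right ones; only the packaging of the ratio estimate should be adjusted from a two-sided per-edge claim to a one-sided per-edge bound plus a two-sided aggregate bound.
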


\begin{proof}[Proof of \cref{thm:graph-conn-hard}]
  Combining \cref{prop:conn:three-phase-hard} and \cref{lem:conn:reduction}.
\end{proof}

The following sections are devoted to the proof of \cref{prop:conn:three-phase-hard}.

\subsection{Phase 1} \label{sec:conn:phase-1}
In Phase 1, the algorithm makes $m_1=c_1 \log n$ noisy queries to every potential edge $e\in \binom V2$.
Let $a_e$ denote the number of times where a query to $e$ returns $1$.
Then for $e\in G$, $a_e\sim \Bin(m_1, 1-p)$; for $e\not \in G$, $a_e \sim \Bin(m_1, p)$.
For $0\le k\le m_1$, define
\begin{align*}
  p_k = \bP(\Bin(m_1, 1-p) = k) = \binom{m_1}{k} (1-p)^k p^{m_1 - k}.
\end{align*}

Let $I = \left[p m_1 - \log^{0.6}n, p m_1 + \log^{0.6}n\right]$.
\begin{lemma} \label{lem:conn:binom-concentrate}
  Let $x\sim \Bin(m_1, 1-p)$, $y\sim \Bin(m_1, p)$.
  Then
  \begin{align}
    \label{eqn:lem-conn-binom-concentrate:i} \bP(x\in I) &= n^{-c_3 \pm o(1)}, \\
    \label{eqn:lem-conn-binom-concentrate:ii} \bP(y\in I) &= 1-o(1),
  \end{align}
  where $c_3 = c_1 (1-2p) \log \frac{1-p}p$.
\end{lemma}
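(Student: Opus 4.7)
The statement is a direct analogue of \cref{lem:inf:binom-concentrate}, so the plan is to mimic that proof essentially verbatim, invoking \cref{lem:binomial-large-deviation} for the lower tail and a standard concentration bound for the upper tail.

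For part \cref{eqn:lem-conn-binom-concentrate:i}, I first note that every $k \in I$ satisfies $k/m_1 = p \pm o(1)$, since the radius of $I$ is $\log^{0.6} n$ while $m_1 = c_1 \log n$. Applying \cref{lem:binomial-large-deviation} with $q = k/m_1$ and noise parameter $1-p$, I get
\begin{align*}
  \bP(x = k) = \exp\bigl(-(\DKLs(p \parallel 1-p) \pm o(1))\, m_1\bigr).
\end{align*}
A direct computation shows $\DKLs(p \parallel 1-p) = (1-2p)\log\frac{1-p}{p} = \DKL$, so each individual probability is $\exp(-(\DKL \pm o(1)) c_1 \log n) = n^{-c_3 \pm o(1)}$. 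Summing over the $|I| = O(\log^{0.6} n) = n^{o(1)}$ integers in $I$ absorbs harmlessly into the $o(1)$ in the exponent, yielding $\bP(x \in I) = n^{-c_3 \pm o(1)}$.

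For part \cref{eqn:lem-conn-binom-concentrate:ii}, the mean of $y \sim \Bin(m_1, p)$ is exactly $p m_1$, which is the center of $I$. The radius of $I$ is $\log^{0.6} n$, which is much larger than the standard deviation $\sqrt{m_1 p(1-p)} = \Theta(\sqrt{\log n})$. A Chernoff/Chebyshev bound therefore gives $\bP(y \notin I) = o(1)$, as desired.

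The only conceptual point worth flagging is the asymmetry: the binomial in \cref{eqn:lem-conn-binom-concentrate:i} has mean $(1-p) m_1$ but is being evaluated near $p m_1$, which is a large-deviation event governed by the KL divergence $\DKLs(p \parallel 1-p)$; whereas \cref{eqn:lem-conn-binom-concentrate:ii} asks about the same interval around the actual mean $p m_1$, which is a typical event. There is no real obstacle, just bookkeeping to confirm that $\DKLs(p \parallel 1-p)$ simplifies to $(1-2p)\log\frac{1-p}{p}$ so that the exponent matches the stated $c_3$.
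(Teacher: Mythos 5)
Your proof is correct and takes essentially the same approach as the paper: the paper states that the proof is identical to that of \cref{lem:inf:binom-concentrate}, which is precisely the argument you give — apply \cref{lem:binomial-large-deviation} at $q = k/m_1 = p \pm o(1)$ against parameter $1-p$, simplify $\DKLs(p \parallel 1-p)$ to $(1-2p)\log\frac{1-p}{p}$, sum over the $n^{o(1)}$ integers in $I$, and use a Chernoff bound for the second claim.
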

The proof is the same as \cref{lem:inf:binom-concentrate} and omitted.

Let $\bP^{(0)}$ denote the graph distribution in \cref{defn:conn:hard-dist}.
Let $\bP^{(1)}$ denote the posterior distribution of $G$ conditioned on observations in Phase 1.
Let $\cC^{(0)}$ (resp.~$\cC^{(1)}$) deonte the support of $\bP^{(0)}$ (resp.~$\bP^{(1)}$).
Then $\cC^{(1)} = \cC^{(0)}$ and for any graph $H \in \cC^{(0)}$, we have
\begin{align} \label{eqn:conn-phase-1:post}
  \bP^{(1)}(H) &\propto \bP\left((a_e)_{e\in \binom V2}| H\right) \bP^{(0)}(H),\\
  \nonumber &= \left(\prod_{e\in H} p_{a_e}\right)\left(\prod_{e\in H^c} p_{m_1 - a_e}\right) \bP^{(0)}(H).
\end{align}
where $H^c$ denotes the complement $\binom V2\backslash H$.

\subsection{Phase 2} \label{sec:conn:phase-2}
In Phase 2, the oracle reveals some edges and non-edges of $G$ as follows.
\begin{enumerate}[label=2\alph*.]
  \item In Step 2a, the oracle reveals potential edges $e$ with $a_e\not \in I$.
  \item In Step 2b, the oracle reveals every $e\in G$ independently with probability $q_{a_e}$. We choose $q_j = 1-\frac{p_{m_1-j} p_{j_l}}{p_j p_{m_1-j_l}}$ for $j\in I$ where $j_l = p m_1 - m_1^{0.6}$.
  \item In Step 2c, the oracle reveals $n-2$ edges of $G$ as follows. If $G$ is disconnected, reveal all edges of $G$. Otherwise, $G$ is connected and is some tree $T$. If $G$ has a $\beta_0$-balanced edge that is not revealed yet, uniformly randomly choose an edge $e^*$ from all such edges, and reveal all edges of $G\backslash e^*$. If all $\beta_0$-balanced edges of $G$ have been revealed, report failure.
\end{enumerate}

\subsubsection{Step 2a and Step 2b} \label{sec:conn:phase-2:step-ab}

By the same analysis as in \cref{sec:inf:phase-2:step-a,sec:inf:phase-2:step-b}, observations up to Step 2b have the same effect as the following procedure:
\begin{definition}[Alternative observation procedure] \label{defn:conn:alt-obs}
  Let $G$ be  generated as in \cref{defn:conn:hard-dist}.
  \begin{enumerate}[label=(\arabic*)]
    \item Observe every edge $e\in G$ independently with probability
    \begin{align*}
      p_+ &= 1-\sum_{j\in I} p_j (1-q_j)
      = 1 - \frac{p_{j_l}}{p_{m_1-j_l}} \cdot \sum_{j\in I} p_{m_1-j} \\
      \nonumber &= 1-(1\pm o(1)) \frac{p_{j_l}}{p_{m_1-j_l}} = 1-n^{-c_3\pm o(1)}.
    \end{align*}
    \item Observe every non-edge $e\in G^c$ independently with probability
    \begin{align*}
      p_- = \bP(\Bin(m_1,p)\not \in I) = o(1).
    \end{align*}
  \end{enumerate}
\end{definition}

Let $\bP^{(2b)}$ be the posterior distribution of $G$ after Step 2b and $\cC^{(2b)}$ be its support.
Let $E^{(2a)}_+$ (resp.~$E^{(2a)}_-$) denote the set of edges (resp.~non-edges) revealed in Step 2a.
Let $E^{(2b)}_+$ be the set of edges revealed in Step 2b that were not revealed in Step 2a.
Define $E^{(\le 2b)}_+ = E^{(2a)}_+ \cup E^{(2b)}_+$.
Then $\cC^{(2b)}$ is the set of graphs $H\in \cC^{(0)}$ satisfying $E^{(\le 2b)}_+ \subseteq H$ and $E^{(2a)}_-\cap H = \emptyset$.

By the same analysis as in \cref{sec:inf:phase-2:step-a,sec:inf:phase-2:step-b}, the posterior distribution satisfies
\begin{align} \label{eqn:conn-phase-2-step-b:post}
  \bP^{(2b)}(H) \propto \left(\frac{p_{k_l}}{p_{m_1-k_l}}\right)^{|H|-(n-2)} \bP^{(0)}(H).
\end{align}
for $H\in \cC^{(2b)}$.

\subsubsection{Step 2c} \label{sec:conn:phase-2:step-c}
Let $E^{(2c)}_+$ be the set of edges revealed in Step 2c that were not revealed in previous steps.
Let $\bP^{(2c)}$ be the posterior distribution of $G$ and $\cC^{(2c)}$ be the support of $\bP^{(2c)}$.

\begin{lemma} \label{lem:conn:phase-2:step-c:failure}
  Conditioned on $G$ being connected, with probability $\Omega(1)$, Step 2c does not report failure, and $e^*$ is $\frac 13$-balanced.
\end{lemma}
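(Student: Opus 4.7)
The plan is to reduce the analysis of Step 2c to the balanced-edge combinatorics of the underlying UST and then combine the structural estimates from \cref{coro:conn:ust-structure-final} with the alternative observation description in \cref{defn:conn:alt-obs}. Conditioning on $G$ being connected is equivalent to conditioning on the coin flip $z=1$ in \cref{defn:conn:hard-dist}, so $G = T$ where $T$ is a UST conditioned on $B_{\beta_0}(T) \ne \emptyset$. Since $\beta_0 = 1/21$ lies between $1/3$ and $1/42$, the containments $B_{1/3}(T) \subseteq B_{\beta_0}(T) \subseteq B_{1/42}(T)$ hold, and \cref{coro:conn:ust-structure-final}, \cref{item:coro:conn:ust-structure-final:i} guarantees that the event
\[
\cE := \left\{\gamma_1 \sqrt n \le |B_{1/3}(T)| \le |B_{\beta_0}(T)| \le \gamma_2 \sqrt n\right\}
\]
holds with $\Omega(1)$ probability under the unconditional UST law; since $\cE$ implies $B_{\beta_0}(T) \ne \emptyset$, the same $\Omega(1)$ lower bound carries over to the conditional distribution.

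Next, I would condition on $\cE$ and on $T$, and invoke \cref{defn:conn:alt-obs}: each edge of $T$ is unrevealed after Steps 2a--2b independently with probability $1 - p_+ = n^{-c_3 \pm o(1)}$, where $c_3 = c_1(1-2p)\log\tfrac{1-p}{p}$. Choosing $c_1$ small enough that $c_3 < 1/2$, and letting $U$ denote the unrevealed edges of $T$, we get
\[
\bP\!\left[U \cap B_{\beta_0}(T) = \emptyset\right] = p_+^{|B_{\beta_0}(T)|} \le \exp\!\bigl(-(1-p_+)\gamma_1 \sqrt n\bigr) = \exp\!\bigl(-n^{1/2 - c_3 \pm o(1)}\bigr) = o(1),
\]
so Step 2c picks some $e^*$ with probability $1 - o(1)$.

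Finally, for the $1/3$-balancedness of $e^*$, I would use an exchangeability argument. Because each edge of $B_{\beta_0}(T)$ is independently unrevealed with the same probability $1-p_+$, linearity gives, for every $e \in B_{\beta_0}(T)$,
\[
\bP[e^* = e] = (1-p_+) \cdot \bE\!\left[\frac{1}{1 + \Bin(|B_{\beta_0}(T)| - 1,\; 1-p_+)}\right],
\]
which is independent of $e$. Summing this expression separately over $e \in B_{1/3}(T)$ and $e \in B_{\beta_0}(T)$ yields
\[
\bP\!\left[e^* \in B_{1/3}(T) \,\middle|\, U \cap B_{\beta_0}(T) \ne \emptyset\right] = \frac{|B_{1/3}(T)|}{|B_{\beta_0}(T)|} \ge \frac{\gamma_1}{\gamma_2} = \Omega(1).
\]
Multiplying the three $\Omega(1)$ bounds completes the proof.

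The only delicate point I anticipate is the bookkeeping on constants: $c_1$ must be chosen small enough to make $c_3 < 1/2$ in the failure estimate, while still being compatible with the tail bounds used later in Phase 3. The exchangeability step itself is clean precisely because the revelation indicators in Steps 2a--2b become i.i.d.~across edges of $T$ after passing to the alternative observation procedure, so the uniform choice of $e^*$ inside $U \cap B_{\beta_0}(T)$ is equivalent in law to a uniform choice inside $B_{\beta_0}(T)$ itself.
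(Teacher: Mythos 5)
Your proof is correct, and the final step takes a genuinely cleaner route than the paper's. Both arguments open the same way: invoke \cref{coro:conn:ust-structure-final} to get $\gamma_1\sqrt n \le |B_{1/3}(T)| \le |B_{\beta_0}(T)|\le \gamma_2\sqrt n$ with $\Omega(1)$ probability, then use \cref{defn:conn:alt-obs} and the fact that $1-p_+ = n^{-c_3\pm o(1)}$ with $c_3<1/2$ to show the set of unrevealed $\beta_0$-balanced edges is nonempty with probability $1-o(1)$. Where you diverge is in bounding the probability that $e^*$ is $\tfrac13$-balanced: the paper applies Bernstein's inequality twice to show that $|\wt B_{1/3}|\sim\Bin(|B_{1/3}(T)|,1-p_+)$ and $|\wt B_{\beta_0}|\sim\Bin(|B_{\beta_0}(T)|,1-p_+)$ both concentrate around their (same-scale) means, and then bounds the ratio $|\wt B_{1/3}|/|\wt B_{\beta_0}|$; you instead observe that since each edge of $B_{\beta_0}(T)$ is unrevealed independently with the identical probability $1-p_+$, the conditional law of $e^*$ given nonemptiness is exactly uniform on $B_{\beta_0}(T)$, so $\bP[e^*\in B_{1/3}(T)\mid \text{no failure}, T]=|B_{1/3}(T)|/|B_{\beta_0}(T)|$ exactly. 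Your exchangeability argument is exact rather than approximate, needs no tail bounds, and avoids introducing the extra constants $\gamma_3,\gamma_4$; the paper's Bernstein-based route is stylistically of a piece with \cref{coro:conn:phase-2:step-c:structure-2}, which genuinely does need two-sided concentration for $|B_{\beta_0}(G_e)\setminus E^{(\le 2b)}_+|$, so the paper re-uses the same machinery here even though (as you show) it is not strictly necessary for this lemma.
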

\begin{proof}
  Step 2c reports failure when $G$ is connected and all $\beta_0$-balanced edges have been revealed in previous steps.
  Let $T$ be a UST.
  By \cref{coro:conn:ust-structure-final}, for some $\epsilon,\gamma_1,\gamma_2>0$, we have
  \begin{align*}
    \bP\left[\gamma_1 \sqrt n \le \left| B_{1/3}(T) \right| \le \left| B_{\beta_0}(T) \right| \le \gamma_2 \sqrt n\right] > \epsilon.
  \end{align*}
  When constructing the input distribution (\cref{defn:conn:balanced-edge}), conditioned on $G$ being connected, the distribution of $G$ is uniform over all spanning trees with at least one $\beta_0$-balanced edge.
  Let $\cE_1$ be the event that $G$ is connected, has at least $\gamma_1 \sqrt n$ $\frac 13$-balanced edges, and at most $\gamma_2 \sqrt n$ $\beta_0$-balanced edges.
  By the above discussion, $\cP(\cE_1) = \Omega(1)$.
  In \cref{defn:conn:alt-obs}, every edge is observed independently with probability $p_+ = 1-n^{-c_3\pm o(1)}$.
  By choosing $c_1>0$ small enough, we can let $c_3>0$ be arbitrarily small.
  Let $\wt B_{1/3}$ (resp.~$\wt B_{\beta_0}$) be the set of $\frac 13$-balanced (resp.~$\beta_0$-balanced) edges of $G$ not revealed in previous steps.
  Then
  \begin{align*}
    \wt B_{1/3} \subseteq \wt B_{\beta_0}, \qquad
    \left|\wt B_{1/3}\right| \sim \Bin\left(\left| B_{1/3}(G) \right|, 1-p_+\right), \qquad
    \left|\wt B_{\beta_0}\right| \sim \Bin\left(\left| B_{\beta_0}(G) \right|, 1-p_+\right).
  \end{align*}
  Note that
  \begin{align*}
    \bE\left[\Bin\left(\left| B_{1/3}(G) \right|, 1-p_+\right)\right] &= \left| B_{1/3}(G) \right| \cdot (1-p_+) = n^{1/2-c_3 \pm o(1)}, \\
    \bE\left[\Bin\left(\left| B_{\beta_0}(G) \right|, 1-p_+\right)\right] &= \left| B_{\beta_0}(G) \right| \cdot (1-p_+) = n^{1/2-c_3 \pm o(1)}.
  \end{align*}
  By Bernstein's inequality, there exists $\gamma_3,\gamma_4>0$ such that conditioned on $\cE_1$, we have
  \begin{align*}
    \bP\left[ \left|\wt B_{1/3}\right| \ge \gamma_3 \left| B_{1/3}(G) \right| \cdot (1-p_+) \mid \cE_1\right] \ge 1-\exp\left(-n^{1/2-c_3 \pm o(1)}\right),\\
    \bP\left[ \left|\wt B_{\beta_0}\right| \le \gamma_4 \left| B_{\beta_0}(G) \right| \cdot (1-p_+) \mid \cE_1\right] \ge 1-\exp\left(-n^{1/2-c_3 \pm o(1)}\right).
  \end{align*}
  Let $\cE_2$ be the event that $\left|\wt B_{1/3}\right| \ge \gamma_3 \left| B_{1/3}(G) \right| \cdot (1-p_+)$ and $\left|\wt B_{\beta_0}\right| \le \gamma_4 \left| B_{\beta_0}(G) \right| \cdot (1-p_+)$.
  The above discussion shows that $\bP(\cE_2 | \cE_1) = 1-o(1)$.

  When $\cE_1$ and $\cE_2$ both happen, we have
  $
    \frac{\left|\wt B_{1/3}\right|}{\left|\wt B_{\beta_0}\right|} \ge \frac{\gamma_3 \gamma_1}{\gamma_4 \gamma_2} > 0
  $.
  So conditioned on $\cE_1\cap \cE_2$, the probability that $e^*$ is $\frac 13$-balanced is $\Omega(1)$.
  This finishes the proof.
\end{proof}

From now on we condition on the event that Step 2c does not report failure, and $e^*$ is $\frac 13$-balanced.
Let us consider the posterior distribution $\bP^{(2c)}$.
Let $E^{(\le 2c)}_+ = E^{(\le 2b)}_+ \cup E^{(2c)}_+$ be the set of observed edges at the end of Step 2c.
Then $E^{(\le 2c)}_+$ consists of $n-2$ edges, which is a forest with two components $T_1$ and $T_2$, each containing at least $\frac n3$ and at most $\frac{2n}3$ vertices.
The support $\cC^{(2c)}$ is easy to describe.
Let $G_0 = T_1 \cup T_2$ and $G_e = T_1 \cup T_2 \cup \{e\}$ for $e\in E(T_1, T_2)$.
Then
\begin{align*}
  \cC^{(2c)} = \left\{G_0\right\} \cup \left\{G_e : e\in E(T_1,T_2)\setminus E^{(2a)}_-\right\}.
\end{align*}
The posterior distribution $\bP^{(2c)}$ is not simply the distribution $\bP^{(2b)}$ restricted to $\cC^{(2c)}$.
For $H\in \cC^{(2c)}$, we have
\begin{align} \label{eqn:sec:conn:phase-2:step-c:post-step}
  \bP^{(2c)}(H) &\propto \bP\left( E^{(2c)}_+ \mid H, E^{(\le 2b)}_+\right) \bP^{(2b)}(H) \\
  \nonumber &\propto \bP\left( E^{(2c)}_+ \mid H, E^{(\le 2b)}_+\right) \left(\frac{p_{k_l}}{p_{m_1-k_l}}\right)^{|H|-(n-2)} \bP^{(0)}(H).
\end{align}
For $H=G_0$, \cref{eqn:sec:conn:phase-2:step-c:post-step} simplifies to
\begin{align} \label{eqn:sec:conn:phase-2:step-c:post-g0}
  \bP^{(2c)}(G_0) \propto \bP^{(0)}(G_0).
\end{align}
For $H=G_e$, \cref{eqn:sec:conn:phase-2:step-c:post-step} simplifies to
\begin{align} \label{eqn:sec:conn:phase-2:step-c:post-ge}
  \bP^{(2c)}(G_e) \propto \frac 1{\left| B_{\beta_0}(G_e) \backslash E^{(\le 2b)}_+ \right|} \cdot \frac{p_{k_l}}{p_{m_1-k_l}} \bP^{(0)}(G_e).
\end{align}
Note that the $\propto$ symbols in \cref{eqn:sec:conn:phase-2:step-c:post-g0,eqn:sec:conn:phase-2:step-c:post-ge} hide the same factor.

Further simplifying \cref{eqn:sec:conn:phase-2:step-c:post-g0,eqn:sec:conn:phase-2:step-c:post-ge}, we get
\begin{align}
  \label{eqn:sec:conn:phase-2:step-c:post-g0-2} \bP^{(2c)}(G_0) &\propto \sum_{e\in E(T_1,T_2)} \frac 1{\left| B_{\beta_0}(G_e) \right|}, \\
  \label{eqn:sec:conn:phase-2:step-c:post-ge-2} \bP^{(2c)}(G_e) &\propto \frac 1{\left| B_{\beta_0}(G_e) \backslash E^{(\le 2b)}_+ \right|} \cdot \frac{p_{k_l}}{p_{m_1-k_l}}, \qquad \forall e\in E(T_1,T_2) \backslash E^{(2a)}_-.
\end{align}

We now consider the set $B_{\beta_0}(G_e)$ for $e\in E(T_1,T_2)$.
Let $\beta_1 = \frac{\beta_0 n}{|T_1|}$, $\beta_2 = \frac{\beta_0 n}{|T_2|}$.
Then $\frac 1{14}\le \beta_1,\beta_2\le \frac 17$ and $B_{\beta_1}(T_1) \cup B_{\beta_2}(T_2) \subseteq B_{\beta_0}(G_e)$ for all $e\in E(T_1,T_2)$.
For $e_i\in T_i$ ($i=1,2$), let $S_{T_i}(e_i)$ be the set of vertices in the smaller component of $T_i\backslash e_i$. (If the two components have the same size, choose a side arbitrarily.)
For $e=(u_1,u_2)\in E(T_1,T_2)$ (with $u_i\in T_i$, $i=1,2$), an edge $e_i\in T_i\backslash B_{\beta_i}(T_i)$ ($i=1,2$) is in $B_{\beta_0}(G_e)$ if and only if $u_i\in S_{T_i}(e_i)$.
For $i\in \{1,2\}$ and $e=(u_1,u_2)\in E(T_1,T_2)$, define
\begin{align*}
  B'_{T_i}(e) = \{e_i : e_i\in T_i\backslash B_{\beta_i}(T_i), u_i\in S_{T_i}(e_i)\}.
\end{align*}
Then for $e\in E(T_1,T_2)$, we have
\begin{align} \label{eqn:sec:conn:phase-2:step-c:new-balanced}
  B_{\beta_0}(G_e) = \{e\} \cup B_{\beta_1}(T_1) \cup B_{\beta_2}(T_2) \cup B'_{T_1}(e) \cup B'_{T_2}(e).
\end{align}
Note that the union is a disjoint union.

\begin{lemma} \label{lem:conn:phase-2:step-c:structure-1}
  Conditioned on $G$ being connected, there exist constants $\epsilon,\gamma_1,\gamma_2,\gamma_3>0$ such that with probability at least $\epsilon$, the following are true simultaneously.
  \begin{enumerate}[label=(\roman*)]
    \item\label{item:lem:conn:phase-2:step-c:structure-1:i} Step 2c does not report failure and $e^*\in B_{1/3}(G)$.
    \item\label{item:lem:conn:phase-2:step-c:structure-1:ii}
    \begin{align*}
      \gamma_1 \sqrt n \le \left| B_{\beta_1}(T_1) \right| + \left| B_{\beta_2}(T_2) \right| \le \gamma_2 \sqrt n.
    \end{align*}
    \item\label{item:lem:conn:phase-2:step-c:structure-1:iii}
    \begin{align*}
      \sum_{e_1\in T_1\backslash B_{\beta_1}(T_1)} \left| S_{T_1}(e_1) \right| + \sum_{e_2\in T_2\backslash B_{\beta_2}(T_2)} \left| S_{T_2}(e_2) \right| \le \gamma_3 n^{3/2}.
    \end{align*}
  \end{enumerate}
\end{lemma}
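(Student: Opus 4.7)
The plan is to derive all three items from \cref{coro:conn:ust-structure-final} combined with the argument used to prove \cref{lem:conn:phase-2:step-c:failure}. I condition throughout on $G$ being connected, so that $G=T$ is distributed as a UST conditioned on $B_{\beta_0}(T)\neq\emptyset$. Let $\cE_{\mathrm{ust}}$ be the UST-measurable event that all three conclusions of \cref{coro:conn:ust-structure-final} hold for $T$. Since $\cE_{\mathrm{ust}}$ already implies $|B_{\beta_0}(T)|\ge|B_{1/3}(T)|\ge\gamma_1\sqrt n>0$ (no restart), conditioning on $G$ being connected preserves $\bP(\cE_{\mathrm{ust}}\mid G\text{ connected})=\Omega(1)$.

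I would then rerun the Bernstein argument from the proof of \cref{lem:conn:phase-2:step-c:failure} under the additional conditioning on $\cE_{\mathrm{ust}}$. The event called $\cE_1$ in that proof is the pair of bounds $\gamma_1\sqrt n\le|B_{1/3}(T)|$ and $|B_{\beta_0}(T)|\le\gamma_2\sqrt n$, both of which are guaranteed by $\cE_{\mathrm{ust}}$. Hence the same analysis shows that with $\Omega(1)$ further probability over the Phase~1 and Step~2b randomness, Step~2c does not report failure and the revealed edge $e^*$ lies in $B_{1/3}(G)$. Denote this joint event by $\cE$; then $\bP(\cE\mid G\text{ connected})=\Omega(1)$, and item~(i) holds on $\cE$ by construction.

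For item~(ii), the $\tfrac13$-balancedness of $e^*$ on $\cE$ gives $|T_i|\in[n/3,2n/3]$, so $\beta_i:=\beta_0 n/|T_i|\in[1/14,1/7]$ using $\beta_0=1/21$. Because a smaller balance threshold only enlarges the balanced-edge set, $B_{1/7}(T_i)\subseteq B_{\beta_i}(T_i)\subseteq B_{1/14}(T_i)$, so item~(ii) of \cref{coro:conn:ust-structure-final} directly yields both the lower and upper $\sqrt n$ bounds of (ii). For item~(iii), the key step is the pointwise inequality $s_T(e_i)\ge|S_{T_i}(e_i)|$ for every $e_i\in T_i$: cutting $e_i$ from $T$ splits it into one side of $T_i\setminus e_i$ and the other side glued to $T_{3-i}$ via $e^*$, and a case split on which side of $T_i\setminus e_i$ contains the endpoint of $e^*$ in $T_i$ (using $|S_{T_i}(e_i)|\le|T_i|/2$ and $|T_{3-i}|\ge n/3$) shows that both components of $T\setminus e_i$ have size at least $|S_{T_i}(e_i)|$. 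Summing over $e_i\in T_i\setminus B_{\beta_i}(T_i)$ for $i=1,2$ and using $T_1\cup T_2\subseteq T$ bounds the left-hand side of (iii) by $\sum_{e\in T}s_T(e)\le\gamma_5 n^{3/2}$, which is item~(iii) of \cref{coro:conn:ust-structure-final}.

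The main obstacle I anticipate is bookkeeping: tracking conditional probabilities cleanly so that the UST-measurable event $\cE_{\mathrm{ust}}$, the Phase~1/Step~2b Bernstein event, and the uniform choice of $e^*$ in Step~2c all combine under $\bP(\cdot\mid G\text{ connected})$ with positive probability. Once we notice that $\cE_{\mathrm{ust}}$ already subsumes both ``no restart'' and the structural hypotheses used inside the proof of \cref{lem:conn:phase-2:step-c:failure}, the remaining steps are mechanical.
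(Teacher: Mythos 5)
Your proposal is correct and follows essentially the same route as the paper: condition on $\cE_{\mathrm{ust}}$ (the event of \cref{coro:conn:ust-structure-final}), invoke the Bernstein/restart analysis from \cref{lem:conn:phase-2:step-c:failure} to get item~(i) with $\Omega(1)$ probability, then deduce (ii) from the containments $B_{1/7}(T_i)\subseteq B_{\beta_i}(T_i)\subseteq B_{1/14}(T_i)$ and (iii) from $\sum_{e_i} |S_{T_i}(e_i)|\le\sum_{e\in T}s_T(e)$. The only difference is that you spell out the pointwise inequality $s_T(e_i)\ge|S_{T_i}(e_i)|$ behind item~(iii), which the paper uses implicitly (and which in fact holds without needing $|T_{3-i}|\ge n/3$: in either case of your split, both components of $T\setminus e_i$ trivially have size $\ge|S_{T_i}(e_i)|$).
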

\begin{proof}
  Let $T$ be a UST.
  Let $\cE_1$ be the event that all items in \cref{coro:conn:ust-structure-final} hold.
  Then conditioned on $G$ being connected, $\cE_1$ happens with probability $\Omega(1)$.
  In the following, condition on that $G$ is connected and $\cE_1$ happens.

  Let $\cE_2$ be the event that \cref{lem:conn:phase-2:step-c:structure-1}, \cref{item:lem:conn:phase-2:step-c:structure-1:i} holds.
  By the proof of \cref{lem:conn:phase-2:step-c:failure}, conditioned on $\cE_1$, $\cE_2$ happens with probability $\Omega(1)$.
  In the following, condition on that $\cE_1$ and $\cE_2$ both happen.

  By \cref{coro:conn:ust-structure-final}, \cref{item:coro:conn:ust-structure-final:ii}, and because $e^*\in B_{1/3}(T)$,
  \begin{align*}
    \left| B_{\beta_1}(T_1) \right| + \left| B_{\beta_2}(T_2) \right|&\ge \left| B_{1/7}(T_1) \right| + \left| B_{1/7}(T_2) \right| \ge \gamma_3 \sqrt n,\\
    \left| B_{\beta_1}(T_1) \right| + \left| B_{\beta_2}(T_2) \right|&\le \left| B_{1/14}(T_1) \right| + \left| B_{1/14}(T_2) \right| \le \gamma_4 \sqrt n.
  \end{align*}
  Therefore conditioned on $\cE_1$ and $\cE_2$, \cref{lem:conn:phase-2:step-c:structure-1}, \cref{item:lem:conn:phase-2:step-c:structure-1:ii} holds.

  By \cref{coro:conn:ust-structure-final}, \cref{item:coro:conn:ust-structure-final:iii}, we have
  \begin{align*}
    \sum_{e_1\in T_1\backslash B_{\beta_1}(T_1)} \left| S_{T_1}(e_1) \right| + \sum_{e_2\in T_2\backslash B_{\beta_2}(T_2)} \left| S_{T_2}(e_2) \right|
    \le \sum_{e\in T} \left| S_T(e) \right| \le \gamma_5 n^{3/2}.
  \end{align*}
  Therefore conditioned on $\cE_1$ and $\cE_2$, \cref{lem:conn:phase-2:step-c:structure-1}, \cref{item:lem:conn:phase-2:step-c:structure-1:iii} holds.
\end{proof}

\begin{corollary} \label{coro:conn:phase-2:step-c:structure-2}
  Conditioned on $G$ being connected, there exist constants $\epsilon,\gamma_1,\gamma_2,\gamma_3,\gamma_4,\gamma_5>0$ such that with probability at least $\epsilon$, the following are true simultaneously.
  \begin{enumerate}[label=(\roman*)]
    \item\label{item:coro:conn:phase-2:step-c:structure-2:i} For all $e\in E(T_1,T_2)$,
    \begin{align*}
      \left| B_{\beta_0}(G_e) \right| \ge \gamma_1 \sqrt n.
    \end{align*}
    \item\label{item:coro:conn:phase-2:step-c:structure-2:ii} For all $e\in E(T_1,T_2)$,
    \begin{align} \label{eqn:item:coro:conn:phase-2:step-c:structure-2:ii}
      \gamma_2 \left| B_{\beta_0}(G_e) \right| \cdot (1-p_+)\le \left| B_{\beta_0}(G_e) \backslash E^{(\le 2b)}_+ \right| \le \gamma_3 \left| B_{\beta_0}(G_e) \right| \cdot (1-p_+).
    \end{align}
    \item\label{item:coro:conn:phase-2:step-c:structure-2:iii}
    \begin{align*}
      \gamma_4 n^{3/2} \le \sum_{e\in E(T_1,T_2)} \frac 1{\left| B_{\beta_0}(G_e) \right|} \le \gamma_5 n^{3/2}.
    \end{align*}
  \end{enumerate}
\end{corollary}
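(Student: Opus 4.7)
The plan is to build on the high-probability structural event from \cref{lem:conn:phase-2:step-c:structure-1} and translate the observation model in Steps 2a--2b to the clean i.i.d.\ version from \cref{defn:conn:alt-obs}. Throughout, I condition on $G$ being connected and on the event of \cref{lem:conn:phase-2:step-c:structure-1}, which together hold with probability $\Omega(1)$. Under this conditioning, $G$ is a spanning tree of the form $T = T_1 \cup T_2 \cup \{e^*\}$ with $e^*$ a $\frac{1}{3}$-balanced edge, and the two pieces $T_1, T_2$ satisfy the three items of \cref{lem:conn:phase-2:step-c:structure-1}.

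Claim \cref{item:coro:conn:phase-2:step-c:structure-2:i} is immediate from the disjoint decomposition \cref{eqn:sec:conn:phase-2:step-c:new-balanced}: for every $e \in E(T_1, T_2)$, $|B_{\beta_0}(G_e)| \ge |B_{\beta_1}(T_1)| + |B_{\beta_2}(T_2)| \ge \gamma_1 \sqrt n$ by \cref{item:lem:conn:phase-2:step-c:structure-1:ii} of \cref{lem:conn:phase-2:step-c:structure-1}. For \cref{item:coro:conn:phase-2:step-c:structure-2:ii}, \cref{defn:conn:alt-obs} says that, conditioned on $G$, each edge of $G$ is independently absent from $E^{(\le 2b)}_+$ with probability $1 - p_+$. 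Since $G = G_{e^*}$, every edge of $B_{\beta_0}(G_e) \setminus \{e\}$ lies in $T_1 \cup T_2 \subseteq G$ and so is absent independently with probability $1 - p_+$; the single possibly exceptional edge $e$ itself contributes at most one and is negligible. Hence $|B_{\beta_0}(G_e) \setminus E^{(\le 2b)}_+|$ is, up to $\pm 1$, a $\Bin(|B_{\beta_0}(G_e)| - 1, 1 - p_+)$ variable with mean $\ge \gamma_1 \sqrt n \cdot n^{-c_3 \pm o(1)} = n^{1/2 - c_3 \pm o(1)}$. Bernstein's inequality then pins it to within a constant factor of its mean with failure probability $\exp(-n^{1/2 - c_3 \pm o(1)})$. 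A union bound over the $\le n^2$ choices of $e$ gives the stated two-sided estimate as long as $c_3 < 1/2$, which we enforce by taking $c_1$ small enough.

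For \cref{item:coro:conn:phase-2:step-c:structure-2:iii}, the upper bound is immediate from \cref{item:coro:conn:phase-2:step-c:structure-2:i}: since $|T_1|, |T_2| \le 2n/3$,
\begin{align*}
\sum_{e \in E(T_1, T_2)} \frac{1}{|B_{\beta_0}(G_e)|} \le \frac{|E(T_1, T_2)|}{\gamma_1 \sqrt n} = O(n^{3/2}).
\end{align*}
The lower bound is the main obstacle. Using the definition of $B'_{T_i}(e)$ and switching the order of summation, a direct double-counting yields
\begin{align*}
\sum_{e \in E(T_1, T_2)} |B'_{T_1}(e)| = |T_2| \sum_{e_1 \in T_1 \setminus B_{\beta_1}(T_1)} |S_{T_1}(e_1)|,
\end{align*}
together with the symmetric identity for $T_2$. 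Invoking \cref{item:lem:conn:phase-2:step-c:structure-1:iii} of \cref{lem:conn:phase-2:step-c:structure-1} then gives $\sum_e (|B'_{T_1}(e)| + |B'_{T_2}(e)|) \le 2 \gamma_3 n^{5/2}$. Since $|E(T_1, T_2)| = |T_1| \cdot |T_2| \ge n^2/9$, Markov's inequality shows that for at least half of the edges $e \in E(T_1, T_2)$ we have $|B'_{T_1}(e)| + |B'_{T_2}(e)| = O(\sqrt n)$; combined with $|B_{\beta_1}(T_1)| + |B_{\beta_2}(T_2)| \le \gamma_2 \sqrt n$ from \cref{item:lem:conn:phase-2:step-c:structure-1:ii} of \cref{lem:conn:phase-2:step-c:structure-1} and the decomposition \cref{eqn:sec:conn:phase-2:step-c:new-balanced}, these edges satisfy $|B_{\beta_0}(G_e)| = O(\sqrt n)$, so their contribution to the sum is $\Omega(|E(T_1, T_2)|/\sqrt n) = \Omega(n^{3/2})$. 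The true difficulty is this lower bound: without the global $O(n^{3/2})$ budget from \cref{item:lem:conn:phase-2:step-c:structure-1:iii} of \cref{lem:conn:phase-2:step-c:structure-1}, a small number of ``bad'' spanning tree edges whose removal produces very unbalanced sides could in principle drive $|B_{\beta_0}(G_e)|$ far above $\sqrt n$ for most $e$, and the averaged lower bound would fail.
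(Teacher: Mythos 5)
Your proof follows the paper's approach closely: condition on the event of \cref{lem:conn:phase-2:step-c:structure-1}, deduce \cref{item:coro:conn:phase-2:step-c:structure-2:i} from the decomposition \cref{eqn:sec:conn:phase-2:step-c:new-balanced} together with \cref{item:lem:conn:phase-2:step-c:structure-1:ii} of \cref{lem:conn:phase-2:step-c:structure-1}, get \cref{item:coro:conn:phase-2:step-c:structure-2:ii} from Bernstein plus a union bound over all $e$, and get the upper bound in \cref{item:coro:conn:phase-2:step-c:structure-2:iii} directly from \cref{item:coro:conn:phase-2:step-c:structure-2:i}. The one place you deviate is the lower bound in \cref{item:coro:conn:phase-2:step-c:structure-2:iii}: the paper first upper-bounds $\sum_{e}|B_{\beta_0}(G_e)|$ by $O(n^{5/2})$ via the same double-counting you use, and then applies Cauchy--Schwarz to convert that into a lower bound on $\sum_e 1/|B_{\beta_0}(G_e)|$, whereas you use Markov's inequality to show that a constant fraction of edges $e$ have $|B_{\beta_0}(G_e)| = O(\sqrt n)$ and restrict the sum to those. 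Both are one-line ways to turn an upper bound on $\sum a_e$ into a lower bound on $\sum 1/a_e$, so this is a cosmetic rather than structural difference; the argument is otherwise the same.
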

\begin{proof}
  Let $T$ be a UST. Let $\cE$ be the event that all items in \cref{lem:conn:phase-2:step-c:structure-1} hold.
  Then conditioned $G$ being connected, $\cE$ happens with probability $\Omega(1)$.
  In the following, condition on that $G$ is connected and $\cE$ happens.

  By \cref{lem:conn:phase-2:step-c:structure-1}, \cref{item:lem:conn:phase-2:step-c:structure-1:ii} and \cref{eqn:sec:conn:phase-2:step-c:new-balanced}, we have
  \begin{align*}
    \left| B_{\beta_0}(G_e) \right| \ge \left| B_{\beta_1}(T_1) \right| + \left| B_{\beta_2}(T_2) \right| \ge \gamma_1 \sqrt n.
  \end{align*}
  So \cref{coro:conn:phase-2:step-c:structure-2}, \cref{item:coro:conn:phase-2:step-c:structure-2:i} holds.
  This implies the upper bound in \cref{coro:conn:phase-2:step-c:structure-2}, \cref{item:coro:conn:phase-2:step-c:structure-2:iii} as
  \begin{align*}
    \sum_{e\in E(T_1,T_2)} \frac 1{\left| B_{\beta_0}(G_e) \right|} \le n^2 \cdot \frac 1{\gamma_1 \sqrt n} = \gamma_1^{-1} n^{3/2}.
  \end{align*}

  By \cref{coro:conn:phase-2:step-c:structure-2}, \cref{item:coro:conn:phase-2:step-c:structure-2:i} and Bernstein's inequality, for every $e\in E(T_1,T_2)$, with probability $1-\exp\left(-n^{1/2-c_3\pm o(1)}\right)$, \cref{eqn:item:coro:conn:phase-2:step-c:structure-2:ii} holds.
  By union bound, with probability $1-o(1)$, \cref{eqn:item:coro:conn:phase-2:step-c:structure-2:ii} holds for all $e\in E(T_1,T_2)$.
  This proves \cref{coro:conn:phase-2:step-c:structure-2}, \cref{item:coro:conn:phase-2:step-c:structure-2:ii}.

  It remains to prove the lower bound in \cref{coro:conn:phase-2:step-c:structure-2}, \cref{item:coro:conn:phase-2:step-c:structure-2:iii}.
  By \cref{lem:conn:phase-2:step-c:structure-1}, \cref{item:lem:conn:phase-2:step-c:structure-1:ii,item:lem:conn:phase-2:step-c:structure-1:iii}, and \cref{eqn:sec:conn:phase-2:step-c:new-balanced}, we have
  \begin{align*}
    &~\sum_{e\in E(T_1,T_2)} \left| B_{\beta_0}(G_e) \right| \\
    \nonumber =&~ \sum_{e\in E(T_1,T_2)} \left(1 + \left| B_{\beta_1}(T_1) \right| + \left| B_{\beta_2}(T_2) \right| + \left| B'_{T_1}(e) \right| + \left| B'_{T_2}(e) \right|\right) \\
    \nonumber \le&~ \sum_{e\in E(T_1,T_2)} \left( 1 + \gamma_2 \sqrt n + \left| B'_{T_1}(e) \right| + \left| B'_{T_2}(e) \right|\right) \\
    \nonumber \le&~ (1+\gamma_2) \sqrt n |T_1||T_2| + \sum_{e_1\in T_1\backslash B_{\beta_1}(T_1)} \left| S_{T_1}(e_1) \right| \cdot |T_2|
    + \sum_{e_2\in T_2\backslash B_{\beta_2}(T_2)} \left| S_{T_2}(e_2) \right| \cdot |T_1| \\
    \nonumber \le&~ (1 + \gamma_2+\gamma_3) n^{5/2}.
  \end{align*}
  On the other hand, by Cauchy-Schwarz inequality,
  \begin{align*}
    \left( \sum_{e\in E(T_1,T_2)} \frac 1{\left| B_{\beta_0}(G_e) \right|} \right)\left( \sum_{e\in E(T_1,T_2)} \left| B_{\beta_0}(G_e) \right| \right)
    \ge (|T_1||T_2|)^2 \ge \frac 29 n^2.
  \end{align*}
  So
  \begin{align*}
    \sum_{e\in E(T_1,T_2)} \frac 1{\left| B_{\beta_0}(G_e) \right|} \ge \frac 2{9(\gamma_2+\gamma_3)} n^{3/2}.
  \end{align*}
\end{proof}

Now we are able to further utilize \cref{eqn:sec:conn:phase-2:step-c:post-g0-2} and \cref{eqn:sec:conn:phase-2:step-c:post-ge-2}.
Write
\begin{align*}
  Z^{(2c)} = \sum_{e\in E(T_1,T_2)} \frac 1{\left| B_{\beta_0}(G_e) \right|} + \sum_{e\in E(T_1,T_2) \backslash E^{(2a)}_-} \frac 1{\left| B_{\beta_0}(G_e) \backslash E^{(\le 2b)}_+ \right|} \cdot \frac{p_{k_l}}{p_{m_1-k_l}}.
\end{align*}
Conditioned on that all items in \cref{coro:conn:phase-2:step-c:structure-2} hold, for all $e\in E(T_1,T_2) \backslash E^{(2a)}_-$, we have
\begin{align*}
  \frac 1{\left|B_{\beta_0}(G_e) \backslash E^{(\le 2b)}_+\right|} \cdot \frac{p_{k_l}}{p_{m_1-k_l}}
  \asymp \frac 1{\left|B_{\beta_0}(G_e)\right|(1-p_+)} \cdot \frac{p_{k_l}}{p_{m_1-k_l}}
  \asymp\frac 1{\left|B_{\beta_0}(G_e)\right|},
\end{align*}
where the second step holds because
\begin{align*}
  \frac{p_{m_1-k_l}}{p_{k_l}}\cdot (1-p_+)
  = \frac{p_{m_1-k_l}}{p_{k_l}} \cdot \sum_{k\in I} p_k(1-q_k)
  = \sum_{k\in I} p_{m_1-k} = 1-p_- = 1-o(1).
\end{align*}
Therefore,
\begin{align*}
  Z^{(2c)} &\asymp \sum_{e\in E(T_1,T_2)} \frac 1{\left| B_{\beta_0}(G_e) \right|} \asymp n^{3/2},\\
  \bP^{(2c)}(G_0) &= \frac 1{Z^{(2c)}} \sum_{e\in E(T_1,T_2)} \frac 1{\left| B_{\beta_0}(G_e) \right|} \asymp 1.
\end{align*}

By Bernstein's inequality, for any $\delta_1>0$, we have
\begin{align*}
  \bP\left[\left| E^{(2a)}_- \right| \ge \delta_1 n^2\right] = o(1).
\end{align*}
So for small enough $\delta_1>0$, with probability $1-o(1)$, we have
\begin{align*}
  &~\sum_{e\in E(T_1,T_2) \backslash E^{(2a)}_-} \bP^{(2c)}(G_e) \\
  \nonumber =&~ \frac 1{Z^{(2c)}} \sum_{e\in E(T_1,T_2) \backslash E^{(2a)}_-} \frac 1{\left|B_{\beta_0}(G_e) \backslash E^{(\le 2b)}_+\right|} \cdot \frac{p_{k_l}}{p_{m_1-k_l}} \\
  \nonumber \asymp &~ n^{-3/2} \sum_{e\in E(T_1,T_2) \backslash E^{(2a)}_-} \frac 1{\left|B_{\beta_0}(G_e)\right|} \\
  \nonumber \asymp &~ n^{-3/2} \left(\sum_{e\in E(T_1,T_2)} \frac 1{\left|B_{\beta_0}(G_e)\right|} - \sum_{e\in E^{(2a)}_-} \frac 1{\left|B_{\beta_0}(G_e)\right|} \right) \\
  \nonumber \asymp &~ n^{-3/2} \left(n^{3/2} - \delta_1 n^2 \cdot n^{-1/2} \right) \\
  \nonumber \asymp &~ 1,
\end{align*}
where in the second-to-last step we used \cref{coro:conn:phase-2:step-c:structure-2}, \cref{item:coro:conn:phase-2:step-c:structure-2:i,item:coro:conn:phase-2:step-c:structure-2:iii}, and $\left| E^{(2a)}_- \right| \le \delta_1 n^2$.

Summarizing the above, at the end of Step 2c, with probability $\Omega(1)$, we have
\begin{align*}
  \bP^{(2c)}(\text{disconnected}) &= \bP^{(2c)}(G_0) = \Theta(1), \\
  \bP^{(2c)}(\text{connected}) &= \sum_{e\in E(T_1,T_2) \backslash E^{(2a)}_-} \bP^{(2c)}(G_e) = \Theta(1).
\end{align*}

\subsection{Phase 3} \label{sec:conn:phase-3}
In Phase 3, the algorithm makes $c_2 n^2$ adaptive exact queries.
We show that for $c_2>0$ small enough, with probability $\Omega(1)$, the algorithm will not be able to return the correct answer.

Let $E^{(3)}$ be the set of edges queried in Phase 3.
We can w.l.o.g.~assume that $E^{(3)} \subseteq E(T_1,T_2) \backslash E^{(2a)}_-$, because only queries in this set are useful.
Conditioned on $G$ being connected, the probability that $E^{(3)}$ hits the edge $e^*$ is
\begin{align*}
  \frac{\sum_{e\in E^{(3)}} \bP^{(2c)}(G_e)}{\sum_{e\in E(T_1,T_2) \backslash E^{(2a)}_-} \bP^{(2c)}(G_e)}
  \le c_2 n^2 \cdot \Theta(n^{-2}) \asymp c_2,
\end{align*}
which is $1-\Omega(1)$ for $c_2>0$ small enough.
Therefore, for small enough $c_2$, with probability $\Omega(1)$, $E^{(3)}$ does not hit the edge $e^*$.

Conditioned on $E^{(3)}$ does not hit $e^*$, let $\bP^{(3)}$ denote the posterior distribution of the original graph $G$ given all observations.
We have
\begin{align*}
  \bP^{(3)}(\text{disconnected}) &= \bP^{(3)}(G_0),\\
  \bP^{(3)}(\text{connected}) &= \sum_{e\in E(T_1,T_2) \backslash \left(E^{(2a)}_-\cup E^{(3)}\right)} \bP^{(3)}(G_e),
\end{align*}
where
\begin{align*}
  \bP^{(2c)}(G_0) &=\frac 1{Z^{(3)}} \sum_{e\in E(T_1,T_2)} \frac 1{\left| B_{\beta_0}(G_e) \right|}, \\
  \bP^{(2c)}(G_e) &=\frac 1{Z^{(3)}} \frac 1{\left| B_{\beta_0}(G_e) \backslash E^{(\le 2b)}_+ \right|} \cdot \frac{p_{k_l}}{p_{m_1-k_l}},\\
  Z^{(3)} &= \sum_{e\in E(T_1,T_2)} \frac 1{\left| B_{\beta_0}(G_e) \right|} + \sum_{e\in E(T_1,T_2) \backslash \left(E^{(2a)}_-\cup E^{(3)}\right)} \frac 1{\left| B_{\beta_0}(G_e) \backslash E^{(\le 2b)}_+ \right|} \cdot \frac{p_{k_l}}{p_{m_1-k_l}}.
\end{align*}

By the same discussion as in the end of Phase 2, Step 2c, for $c_2>0$ small enough, with probability $\Omega(1)$, we have
\begin{align*}
  \bP^{(3)}(\text{disconnected}) &= \Theta(1), \\
  \bP^{(3)}(\text{connected}) &= \Theta(1).
\end{align*}
In this case, any return value would lead to an error probability of $\Omega(1)$.

This concludes the proof of \cref{prop:conn:three-phase-hard}.

\subsection{\texorpdfstring{$s$-$t$}{s-t} Connectivity} \label{sec:conn:s-t}
In this section we modify the proof of \cref{thm:graph-conn-hard} to show hardness of $s$-$t$ Connectivity.
Recall the $s$-$t$ Connectivity problem, where the input is an unknown undirected graph on $n$ labeled vertices, and a pair of vertices $s,t \in V$. An algorithm can make noisy queries to edge membership and the goal is to determine whether $s$ and $t$ are in the same connected component of $G$.

\begin{proposition}[Hardness of $s$-$t$ Connectivity] \label{prop:s-t-conn-hard}
  Any algorithm that solves the $s$-$t$ Connectivity problem with $\frac 13$ error probability uses $\Omega(n^2\log n)$ noisy queries in expectation.
\end{proposition}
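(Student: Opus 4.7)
The plan is to adapt the three-phase framework from the proof of \cref{thm:graph-conn-hard} by using a modified hard distribution in which the critical edge $e_0$ is guaranteed to separate the two fixed query vertices. Fix $s,t\in V$ arbitrarily. Define the hard distribution for $s$-$t$ Connectivity as follows: sample a UST $T$ on $V$; if the unique $T$-path between $s$ and $t$ contains no $\beta_0$-balanced edge, resample; otherwise let $e_0$ be uniform among the $\beta_0$-balanced edges on this path, and set $G=T$ or $G=T\setminus\{e_0\}$ with equal probability. By construction, $e_0$ is the unique $s$-$t$-separating edge of $T$, so ``$G$ is $s$-$t$ connected'' coincides with ``$G$ is connected'', and any algorithm for $s$-$t$ Connectivity on this distribution must in effect solve Graph Connectivity.

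The first step is to verify that the rejection step succeeds with positive probability. I would extend the measure $\bQ$ from \cref{defn:conn:ust-measure-q} by additionally prescribing $s\in V_{L_0}$ and $t\in V_{R_0}$; repeating the counting in the proof of \cref{lem:conn:ust-contig} only introduces a factor of $\frac{(L_0-1)(R_0-1)}{(n-6)(n-7)}=\Theta(1)$ relative to the original count (since $L_0,R_0=\Theta(n)$), so the modified $\bQ$ remains contiguous to the UST measure. Under this modified $\bQ$ the entire chain $B_{1/3}(T)$ lies on the $s$-$t$ path, so $T$ has $\Omega(\sqrt n)$ $\beta_0$-balanced edges separating $s$ from $t$ with $\Omega(1)$ probability. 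In particular, the structural conclusions of \cref{coro:conn:ust-structure-final,coro:conn:phase-2:step-c:structure-2} continue to hold, with the refinement that $|B_{\beta_0}(T)\cap \mathrm{path}_T(s,t)|=\Theta(\sqrt n)$.

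With this structural transfer, the three-phase argument of \cref{sec:conn:phase-1,sec:conn:phase-2,sec:conn:phase-3} applies with only routine modifications. In Phase 2c, the oracle uniformly selects an $s$-$t$-separating $\beta_0$-balanced edge $e^*$ among those not yet revealed and discloses the remaining $n-2$ edges of $T$, leaving the algorithm with two subtrees $T_1\ni s$ and $T_2\ni t$ plus one uncertain edge in $E(T_1,T_2)$. The posterior computations of \cref{sec:conn:phase-2:step-c,sec:conn:phase-3} go through with $|B_{\beta_0}(G_e)|$ replaced by $|B_{\beta_0}(G_e)\cap \mathrm{path}_{G_e}(s,t)|$, which has the same $\Theta(\sqrt n)$ magnitude by the structural extension above; consequently the ratios $\bP^{(2c)}(G_e)/\bP^{(2c)}(G_0)=\Theta(1/n^2)$ and the conclusion $\bP^{(2c)}(\text{connected})=\Theta(1)=\bP^{(2c)}(\text{disconnected})$ are unchanged, and Phase 3's $c_2 n^2$ adaptive queries hit the uncertain edge with probability $O(c_2)$.

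The main obstacle is the structural adaptation of \cref{sec:conn:ust} to the two-fixed-vertex setting: one must track the constraint that $s$ and $t$ are separated by a balanced edge through each UST structure lemma and through the posterior calculations. As indicated above, this amounts to a constant-factor bookkeeping exercise (insert two distinguished vertices into the tree-tuple parameterization and replace $B_{\beta_0}(T)$ by $B_{\beta_0}(T)\cap\mathrm{path}_T(s,t)$ throughout) rather than a conceptual change; once this is completed, the rest of the proof of \cref{thm:graph-conn-hard} transfers verbatim and yields the desired $\Omega(n^2\log n)$ lower bound.
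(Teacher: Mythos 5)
Your approach is genuinely different from the paper's, and it contains a gap that is not the ``routine modification'' you claim. The paper keeps the hard distribution for $G$ from \cref{defn:conn:hard-dist} unchanged and simply chooses $s,t$ uniformly at random, independently of $G$. Then in Phase 2, conditioned on Step 2c not reporting failure, the event $\{s\in T_1,\ t\in T_2\}$ has probability $\Omega(1)$ (since $|T_1|,|T_2|\ge n/3$ and $s,t$ are independent of $G$), and on that event $s$-$t$ Connectivity coincides with Graph Connectivity, so the entire analysis of \cref{thm:graph-conn-hard} is reused verbatim with no new structural work.

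The gap in your version is the claim that $|B_{\beta_0}(G_e)\cap \mathrm{path}_{G_e}(s,t)| = \Theta(\sqrt n)$ uniformly over $e\in E(T_1,T_2)$. The modified $\bQ$ argument (correctly) shows $|B_{\beta_0}(T)\cap\mathrm{path}_T(s,t)|=\Theta(\sqrt n)$ for the sampled tree $T$, but this is a statement about a single graph, not about all the candidate graphs $G_e$ that appear in the posterior. For a specific $e=(u_1,u_2)\in E(T_1,T_2)$ with $u_1\in T_1, u_2\in T_2$, we have $\mathrm{path}_{G_e}(s,t)=\mathrm{path}_{T_1}(s,u_1)\cup\{e\}\cup\mathrm{path}_{T_2}(u_2,t)$. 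If $u_1$ is close to $s$ in $T_1$ (e.g.\ $u_1=s$) and $u_2$ is close to $t$ in $T_2$, this path can miss essentially the entire balanced chain of $T_1$ and $T_2$, so $|B_{\beta_0}(G_e)\cap\mathrm{path}_{G_e}(s,t)|$ can be as small as $O(1)$. This breaks the analogue of \cref{coro:conn:phase-2:step-c:structure-2}, \cref{item:coro:conn:phase-2:step-c:structure-2:i} (the uniform $\Omega(\sqrt n)$ lower bound over all $e$), and with it the uniform estimate $\bP^{(2c)}(G_e)/\bP^{(2c)}(G_0)=\Theta(1/n^2)$: edges $e$ incident to vertices near $s$ and $t$ carry posterior weight as large as $\Theta(n^{-3/2})$, and the Phase 3 adversary would concentrate its queries there. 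One could try to salvage this by arguing that the total posterior mass on such ``close'' edges is still $o(1)$ or controllable in $c_2$, but that requires a genuinely new second-moment/level-set analysis of the quantity $|B_{\beta_0}(G_e)\cap\mathrm{path}_{G_e}(s,t)|$ across $e$, which is a real piece of work and not the bookkeeping exercise you describe. Given that the paper's random-$s,t$ reduction sidesteps the issue entirely and reuses \cref{coro:conn:phase-2:step-c:structure-2} unchanged, that is the route you should take.
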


\begin{proof}
  As discussed in \cref{sec:conn:overview}, the error probability in the proposition statement can be replaced with any $0<\epsilon<\frac 12$, and the expected number of queries can be replaced with worst-case number of queries.

  We design an input distribution for $s$-$t$ Connectivity by generating $G$ from \cref{defn:conn:hard-dist}, and choosing $s,t$ i.i.d.~$\sim \Unif(V)$.

  Then we run the same proof as \cref{thm:graph-conn-hard}.
  That is, we define a three-phase problem for $s$-$t$ Connectivity, where the oracle uses the same strategy in Phase 2 as in Graph Connectivity.
  Because $s$ and $t$ are independent with $G$, in the end of Phase 2, conditioned on Step 2c does not report failure, with probability $\Omega(1)$, $s\in T_1$ and $t\in T_2$.
  In this case, $s$-$t$ Connectivity is equivalent to Graph Connectivity.
  In the proof of Graph Connectivity, we have shown that with probability $\Omega(1)$ (over the randomness of the graph and Phase 1 and 2), any algorithm that uses at most $c_2 n^2$ queries in Phase 3 has $\Omega(1)$ error probability.
  This implies that the same holds for $s$-$t$ Connectivity.
\end{proof}

\section{Threshold and Counting}
In this section we present our proof for \cref{thm:k-threshold} and \cref{thm:counting}.
We use $\Threshold{n}{k}$ to denote $k$-Threshold problem with input length $n$.

\subsection{Lower bound for \texorpdfstring{$k$}{k}-Threshold}
In this section we prove the lower bound part of \cref{thm:k-threshold}. That is, solving $\Threshold{n}{k}$ for $k\le 2n-1$ requires at least $(1-o(1)) \frac{n \log \frac k\delta}{\DKL}$ noisy queries in expectation.
The $k=o(n)$ case has been proved in \cite{wang2024noisy} (see also \cref{sec:th-small} for our alternative and simpler proof).

\begin{restatable}{theorem}{ThresholdSmallK}\label{thm:th:small-k}
For $k = o(n)$, solving $\Threshold{n}{k}$ with $
\delta = o(1)$ error probability requires
\[
(1-o(1))\frac{n \log \frac{k}{\delta}}{\DKL}
\]
noisy queries in expectation,
even when the input is uniformly chosen from $\binom{[n]}{k}$ with probability $1/2$ and uniformly chosen from $\binom{[n]}{k-1}$ with probability $1/2$. %
\end{restatable}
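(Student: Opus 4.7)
The plan is to follow the three-phase framework of Section~4. Apply the analog of \cref{lem:inf:reduction} to reduce $k$-Threshold under the hard distribution of the theorem to a three-phase problem with $m_1 = (1-\epsilon) \frac{\log(k/\delta)}{\DKL}$ Phase-1 noisy queries per bit and $m_2 = (1-\epsilon) n$ Phase-3 adaptive exact queries, where $\epsilon = \log\log(k/\delta)/\log(k/\delta) = o(1)$; this gives $m_1 m_2 = (1-o(1)) \frac{n \log(k/\delta)}{\DKL}$. Phases~1 and~2 then follow \cref{sec:inf:phase-1,sec:inf:phase-2} almost verbatim: using \cref{lem:binomial-large-deviation} choose the interval $I$ around $p m_1$ so that $\bP(a_i \in I \mid x_i = 0) = 1 - o(1)$ while $\bP(a_i \in I \mid x_i = 1) = (\delta/k)^{1 - \epsilon \pm o(1)}$; Step~2a reveals the bits with $a_i \notin I$ and Step~2b subsamples the remaining $1$-bits with the same weights $q_k$ as in \cref{sec:inf:phase-2:step-b}. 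The net effect is to independently reveal each true $1$-bit with probability $p_+ = 1 - (\delta/k)^{1 - \epsilon \pm o(1)}$ and each true $0$-bit with probability $p_- = o(1)$; moreover, the Step~2b balancing makes $\bP(x_i = 1 \mid i \in U, a_i)$ independent of $a_i$, so all bits in the unrevealed set $U$ are exchangeable and adaptive Phase-3 strategies do no better than uniform random querying.

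The Phase~3 hardness argument concentrates on the event $\{r_1 = k-1\}$, where $r_1 := |R_1|$ is the number of revealed $1$-bits. Since the prior on $|A|$ is $1/2$--$1/2$ on $\{k-1,k\}$, a direct Bayes calculation (with $r_1 \sim \Bin(|A|, p_+)$) gives
\[
\frac{\bP(|A| = k \mid r_1 = k - 1)}{\bP(|A| = k - 1 \mid r_1 = k - 1)} \;\asymp\; k(1 - p_+) \;=\; k^{\epsilon \pm o(1)} \delta^{1 - \epsilon \pm o(1)},
\]
which tends to $0$, so the Bayes-optimal output on this event is $|A| = k - 1$. Under $|A| = k$ with $r_1 = k - 1$, exactly one unrevealed $1$ sits in $U$, uniformly distributed by the balancing; the algorithm's $m_2 = (1 - \epsilon) n$ Phase-3 queries miss it with probability $1 - m_2/|U| = \Theta(\epsilon)$ since $|U| = (1 - o(1)) n$. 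Therefore the error probability is at least
\[
\tfrac{1}{2} \cdot \bP(r_1 = k - 1 \mid |A| = k) \cdot \Theta(\epsilon) \;\asymp\; \epsilon \cdot k^\epsilon \delta^{1-\epsilon} \;=\; \delta \cdot \epsilon \cdot (k/\delta)^\epsilon.
\]
For our choice of $\epsilon$, $(k/\delta)^\epsilon = \log(k/\delta)$ and this quantity equals $\delta \log\log(k/\delta) \gg \delta$, contradicting the assumed error $\leq \delta$.

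The main obstacle is calibrating $\epsilon$ so that simultaneously $\epsilon \to 0$ (giving $m_1 m_2 = (1 - o(1)) \frac{n \log(k/\delta)}{\DKL}$) and $\epsilon \cdot (k/\delta)^\epsilon \to \infty$ (making the Phase 3 error exceed $\delta$); the slow decay $\epsilon = \log\log(k/\delta)/\log(k/\delta)$ sits exactly at this boundary and meets both requirements. A secondary subtlety is verifying that the Step~2b balancing truly neutralizes adaptive Phase~3 strategies, but this is an immediate consequence of the $a_i$-independence of the unrevealed posterior established in \cref{sec:inf:phase-2:step-b}. The contribution from the single event $r_1 = k - 1$ already suffices for the lower bound, so one need not carefully track the other ``close'' events $r_1 = k - j$ for $j \geq 2$ (which would only strengthen the bound).
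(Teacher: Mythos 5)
Your high-level plan matches the paper's: a three-phase reduction, a Phase 2 that ``cleans up'' the posterior by revealing most of $A$, and a Phase 3 argument that the algorithm will likely miss the one potentially hidden $1$-bit. But there is a fundamental calibration error in setting $\epsilon = \log\log(k/\delta)/\log(k/\delta) = o(1)$ directly, and this breaks the argument in two places.

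First, the estimate $\bP(a_i \in I \mid x_i = 1) = (\delta/k)^{1-\epsilon \pm o(1)}$ hides an error term that is much larger than your chosen $\epsilon$. The interval $I$ must have width $\omega(\sqrt{m_1})$ (say $m_1^{0.6}$) so that $\bP(a_i \in I \mid x_i = 0) = 1-o(1)$. But then the per-point probability $p_j$ varies over $I$ by a factor of $\exp\bigl(\Theta(m_1^{0.6})\bigr)$, and after the Step~2b reweighting (which pegs everything to the endpoint $j_l$) the quantity $1-p_+$ picks up a multiplicative factor of $\exp\bigl(\pm\Theta\bigl((\log(k/\delta))^{0.6}\bigr)\bigr)$. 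Your choice of $\epsilon$ contributes a factor of only $(k/\delta)^\epsilon = \log(k/\delta)$, which is vastly smaller. Consequently the error lower bound $\frac12 \cdot \bP(r_1 = k-1 \mid |A|=k) \cdot \Theta(\epsilon)$ works out to $\delta \log\log(k/\delta) \cdot e^{-\Theta((\log(k/\delta))^{0.6})} \ll \delta$, which is useless. Second, when $k = o(n)$ slowly (say $k = n/\log\log n$ with $\log(k/\delta) \approx \log n$), you have $k/n = 1/\log\log n \gg \epsilon$, so $|U| \le n - k + o(n)$ can fall \emph{below} $m_2 = (1-\epsilon)n$; the algorithm can then exactly query every unrevealed coordinate in Phase 3 and always answer correctly, killing the lower bound. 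Both problems arise because the estimates you are using (interval-based Phase-2 reduction, $k = o(n)$) are only accurate up to $1\pm o(1)$ factors in places where you need precision finer than $\epsilon$.

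The paper's proof (\cref{sec:th-small}) avoids this entirely: it fixes $\epsilon_1, \epsilon_2 > 0$ as \emph{absolute constants}, proves for every such fixed pair that no algorithm solves the three-phase problem with error $\delta$ (\cref{prop:th:three-phase-hard}), and only then takes a limit over $\epsilon_1, \epsilon_2 \to 0$ to conclude the $(1-o(1))$ bound. With fixed $\epsilon_1$, the polynomial gap $\delta^{\epsilon_1}$ comfortably absorbs the $\exp(\pm O((\log(k/\delta))^{0.6}))$ slop, and with fixed $\epsilon_2$, one has $k/n = o(1) < \epsilon_2$ eventually so $|U| > m_2$. The paper also adds Step~2c (reveal exactly $k-1$ elements of $A$, randomly withholding at most one), which deterministically places the analysis in the ``one hidden $1$-bit'' regime you are trying to condition on via $\{r_1 = k-1\}$; your conditioning is morally similar but you would still have to control the event's probability against the same slop. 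To salvage your approach you would need to both fix the constant-vs-vanishing issue (easiest fix: use fixed constants and diagonalize at the end) and formally set up the Step~2c-style free disclosure.
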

In our lower bound proof, we will use \cref{thm:th:small-k} with $k=O(n/\log n)$.

We first prove the case where $k=(n+1)/2$.
\begin{lemma}
\label{lem:th:maj}
  Solving $\Threshold{2k-1}{k}$ with $
\delta = o(1)$ error probability requires
\[
(1-o(1))\frac{2k \log \frac{k}{\delta}}{\DKL}
\]
noisy queries in expectation, even when the input is uniformly chosen from $\binom{[2k-1]}{k}$ with probability $1/2$ and uniformly chosen from $\binom{[2k-1]}{k-1}$ with probability $1/2$.
\end{lemma}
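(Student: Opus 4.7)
The plan is to reduce the majority lower bound to the small-$k$ case of \cref{thm:th:small-k} by padding the array with a uniformly random set of artificial $1$'s. Set $k_0 = \lfloor k/\log k \rfloor$, $L_0 = k - k_0$, and $n_0 = 2k - 1 - L_0 = k - 1 + k_0$, so that $k_0 + L_0 = k$, $n_0 + L_0 = 2k - 1$, and $k_0 = (1+o(1))\,n_0/\log n_0 = o(n_0)$. Suppose for contradiction there is an algorithm $A$ solving $\Threshold{2k-1}{k}$ on the stated hard distribution with error $\delta$ and expected query cost at most $(1-\epsilon)\tfrac{2k\log(k/\delta)}{\DKL}$ for some constant $\epsilon > 0$. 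The goal is to construct an algorithm $B$ for $\Threshold{n_0}{k_0}$ that violates \cref{thm:th:small-k}.

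On input $x \in \{0,1\}^{n_0}$, $B$ samples $T \in \binom{[2k-1]}{L_0}$ uniformly, embeds $x$ into $[2k-1] \setminus T$ via the order-preserving bijection, and fills $T$ with $1$'s. It then simulates $A$ on the resulting length-$(2k-1)$ string $y$: queries landing outside $T$ are forwarded to $x$ as genuine noisy queries, while queries inside $T$ are answered by an independent $\Ber(1-p)$ coin. A short multinomial calculation shows $y$ is distributed exactly as the majority hard distribution---uniform on $\binom{[2k-1]}{k}$ and on $\binom{[2k-1]}{k-1}$ with equal weight---so $B$ inherits $A$'s error probability $\delta$.

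For the query count, first symmetrize $A$ by averaging over uniformly random position permutations and over bitwise complementation paired with output flipping; both operations preserve the hard distribution and $A$'s complexity/correctness, so WLOG $A$ is invariant under both. Let $M'$ be the number of $A$'s queries landing outside $T$; these are precisely the ones where $B$ pays a real noisy query. Conditioning on $y$, Bayes gives that $T$ is uniform over $\binom{\supp(y)}{L_0}$, so $\bE[\,\text{queries to } T \mid y\,] = (L_0/|y|_1)\,Q_1(y)$, where $Q_1(y)$ is the expected total of $A$'s queries in $\supp(y)$. Combining $|y|_1 \le k$ with the complementation identity $\bE[Q_1(y)] = \bE[Q_A]/2$ yields
\[
\bE[M'] \;\le\; \bE[Q_A]\,\Bigl(1 - \frac{L_0}{2k}\Bigr) \;=\; \bE[Q_A]\,\frac{n_0+1}{2k}.
\]
Since $\delta = o(1)$ and $\log k_0 = (1-o(1))\log k$, we have $(n_0+1)/(2k) = (1+o(1))\,n_0/(2k-1)$ and $\log(k/\delta) = (1+o(1))\log(k_0/\delta)$, so the bound becomes $\bE[M'] \le (1-\epsilon+o(1))\tfrac{n_0 \log(k_0/\delta)}{\DKL}$, contradicting \cref{thm:th:small-k} for $k$ large enough.

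The hard part is exactly the ratio bound $\bE[M'] \le (1+o(1))(n_0/n)\,\bE[Q_A]$: a naive appeal to permutation symmetry only gives $\bE[N_i] = \bE[Q_A]/n$ per position, which is insufficient because the event $\{i \in T\}$ is conditionally forced by $y_i = 1$, making $N_i$ and $\mathbbm{1}_{i \in T}$ dependent. The remedy is the complementation symmetry, which equates the averaged per-position query counts on $1$-bits and $0$-bits; together with $|y|_1 \le k$, this exactly offsets the inflation caused by $T$ being biased toward the (fewer) $1$-positions.
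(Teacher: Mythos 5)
Your proof is correct and uses the same core reduction as the paper's: pad the small-$k$ instance with $L_0$ artificial $1$'s placed at a uniformly random set $T$, simulate the majority solver $\cA$ while paying only for queries outside $T$, and appeal to \cref{thm:th:small-k}. The one genuine difference is how the ``roughly half the queries land in $T$'' bound is established. The paper splits into two cases according to whether $\cA$ queries $1$'s or $0$'s more, handling the second case by the complement-and-flip trick and writing the first case as ``queries to actual $0$'s $\le Q_{\cA}/2$ plus a negligible number of queries to actual $1$'s.'' You instead symmetrize $\cA$ under complementation (available precisely because $\Threshold{2k-1}{k}$ is majority and the hard distribution is complementation-invariant), which forces the per-$0$ and per-$1$ query rates to coincide, and then compute $\bE[\text{queries to }T]$ directly via the Bayesian observation that, conditioned on the realized string $y$, the set $T$ is uniform on $\binom{\supp(y)}{L_0}$ and independent of the query counts. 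This avoids the case split and makes explicit the symmetry the paper invokes informally (``by symmetry, a random $1$ is queried at most $\ldots$''); it also correctly flags the subtlety that a naive per-position permutation average fails because $\{i\in T\}$ and $N_i$ are coupled through $y_i$. The trade-off is that complementation symmetrization is special to the balanced case, so for general $k$ in \cref{thm:th:large-k} the paper's case split is unavoidable and the flip move becomes the reduction to \cref{lem:th:maj} rather than an internal symmetrization. Both routes give $\bE[M']\le \bE[Q_{\cA}]\bigl(1-\tfrac{L_0}{2k}\bigr)$ and close with the same asymptotic bookkeeping, so the proofs are equivalent in substance, with yours being a slightly cleaner packaging of the same idea.
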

\begin{proof}
Suppose for the sake of contradiction that we have an algorithm $\cA$ that solves $\Threshold{2k-1}{k}$ with error probability $\delta = o(1)$ and uses only $(1-\epsilon)\frac{2k \log \frac{k}{\delta}}{\DKL}$ noisy queries in expectation, for some absolute constant $\epsilon>0$. Let $\cD$ be a distribution of inputs where with $1/2$ probability the input is chosen uniformly from $\binom{[2k-1]}{k}$, and with $1/2$ probability the input is chosen uniformly from $\binom{[2k-1]}{k-1}$. Now we consider two cases, depending on whether the expected number of queries $\cA$ make on indices with $1$'s is larger or not. In either case, we will use $\cA$ to obtain an algorithm more efficient than the lower bound in \cref{thm:th:small-k}, thus reaching a contradiction.

First, suppose $\cA$ makes more queries in expectation on indices with $1$'s under input distribution $\cD$. Let $k' = \Theta(k / \log k)$ and let $n = k - 1 + k'$. Consider an instance of $\Threshold{n}{k'}$ where the input is uniformly chosen from $\binom{[n]}{k'}$ with probability $1/2$ and uniformly chosen from $\binom{[n]}{k'-1}$ with probability $1/2$. By \cref{thm:th:small-k}, this instance requires $(1-o(1)) \frac{n \log \frac{k'}{\delta}}{\DKL}=(1-o(1)) \frac{k \log \frac{k}{\delta}}{\DKL}$ queries in expectation. We will design an algorithm $\cB$ solving such an instance utilizing $\cA$.

When $\cB$ gets the input, it first adds $2k-1-n$ $1$'s to the input, and then randomly shuffle the indices. Then $\cB$ sends this input to $\cA$. Whenever $\cA$ makes a query to a $1$ that is artificially added, $\cB$ simulates a noisy query using random bits; when $\cA$ makes a query to an actual input, $\cB$ makes a query as well and pass the result to $\cA$. When $\cA$ returns a result, $\cB$ returns the same result. It is not difficult to verify that the input distribution for $\cA$ is exactly $\cD$. Also, whenever the input to $\cA$ has at least $k$ $1$'s, the input to $\cB$ has $k'$ $1$'s, and vice versa, so the correct output of $\cA$ is the same as  the correct output of $\cB$. Therefore, $\cB$ is correct whenever $\cA$ is correct, which happens with probability $\delta$.

Next, we analyze the expected number of queries $\cB$ makes, which consist of two parts:
\begin{itemize}
  \item The number of queries $\cA$ makes to an actual $0$ in the input: Because we are in the case where $\cA$ makes more queries in expectation on indices with $1$'s in the input than indices with $0$'s, the expected number of this type of queries is at most half of the expected total number of queries $\cA$ makes. Thus, the number of queries in this case is at most $(1-\epsilon)\frac{k \log \frac{k}{\delta}}{\DKL}$.
  \item The number of queries $\cA$ makes to an actual $1$ in the input: Each actual $1$ in the input to $\cB$ is later permuted to a random position in the sequence. By symmetry, a random $1$ in the input to $\cA$ under the input distribution $\cD$ is queried at most $\frac{1}{k-1} \cdot (1-\epsilon)\frac{2k \log \frac{k}{\delta}}{\DKL}$ times in expectation. The number of actual $1$'s is $k' = \Theta(k / \log k)$, so the expected number of queries $\cA$ makes on them is $\frac{k'}{k-1} \cdot (1-\epsilon)\frac{2k \log \frac{k}{\delta}}{\DKL} = o(1) \cdot \frac{k \log \frac{k}{\delta}}{\DKL}$.
\end{itemize}
As a result, the total number of queries $\cB$ makes is $(1-\epsilon)\frac{k \log \frac{k}{\delta}}{\DKL}$, which contradicts \cref{thm:th:small-k}.

For the second case where $\cA$ makes more (or equal number of) queries in expectation on indices with $0$'s under input distribution $\cD$. The only difference is that, when $\cB$ sends the input to $\cA$, it has to flip the roles of $0$'s and $1$'s. Additionally, it has to flip the result $\cA$ returns. We omit the details.
\end{proof}

Given \cref{thm:th:small-k} and \cref{lem:th:maj}, we are ready to prove the lower bound for general $k$.
\begin{theorem}
\label{thm:th:large-k}
  Solving $\Threshold{n}{k}$ with $
\delta = o(1)$ error probability requires
\[
(1-o(1))\frac{n \log \frac{k}{\delta}}{\DKL}
\]
noisy queries in expectation for $n/\log n \le k \le n / 2$.
\end{theorem}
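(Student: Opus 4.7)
The strategy is to derive the bound via two reductions---one to the majority lower bound \cref{lem:th:maj} (padding with artificial zeros) and one to the small-$k$ lower bound \cref{thm:th:small-k} (padding with artificial ones)---and then combine them. Assume for contradiction that $\cA$ solves $\Threshold{n}{k}$ with error $\delta$ using at most $Q = (1-\epsilon)\tfrac{n\log(k/\delta)}{\DKL}$ queries in expectation for some absolute constant $\epsilon>0$. Let $\cD_1$ be the distribution that is uniform over $\binom{[n]}{k}$ with probability $1/2$ and uniform over $\binom{[n]}{k-1}$ with probability $1/2$, and let $Q_0$ and $Q_1$ be the expected numbers of queries $\cA$ makes on $0$- and $1$-positions respectively when its input is drawn from $\cD_1$, so that $Q_0+Q_1\le Q$.

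\paragraph{Two reductions.} Following the padding construction in the proof of \cref{lem:th:maj}, I build an algorithm $\cB$ for $\Threshold{2k-1}{k}$: given $b$ from the hard distribution of \cref{lem:th:maj}, $\cB$ appends $L=n-2k+1$ artificial zeros, applies a uniformly random permutation, and then simulates $\cA$ (simulating queries to padded positions locally using $\BSC_p(0)$). The input to $\cA$ is then distributed as $\cD_1$. Every $1$-position of $\cA$'s input is original, and by symmetry each $0$-position is original with probability $p_0\in\{(k-1)/(n-k),\,k/(n-k+1)\}$; taking $p_0$ to be the larger of the two, $p_0=(1+o(1))\tfrac{k}{n-k}$ and a direct calculation gives $(1-p_0)(n-k)\ge n-2k$. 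Thus $\cB$'s expected query cost is at most $Q_1+p_0Q_0$, and by \cref{lem:th:maj},
\[
  Q_1+p_0Q_0 \;\ge\; (1-o(1))\tfrac{2k\log(k/\delta)}{\DKL}. \qquad \text{(I)}
\]
Symmetrically, set $k'=\lceil k/\log k\rceil$ and $n'=n-k+k'$, and build $\cC$ for $\Threshold{n'}{k'}$ by padding a \cref{thm:th:small-k}-instance with $k-k'$ artificial ones and shuffling. Since $k\ge n/\log n$, one checks $k'=o(k)$, $k'/n'\le 2/\log k=o(1)$ so that \cref{thm:th:small-k} applies, and $\log(k'/\delta)=(1-o(1))\log(k/\delta)$ because $\delta=o(1)$. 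Now every $0$-position of $\cA$'s input is original and each $1$-position is original with probability $p_1\le k'/k=o(1)$, so
\[
  Q_0+p_1Q_1 \;\ge\; (1-o(1))\tfrac{n'\log(k'/\delta)}{\DKL} \;=\; (1-o(1))\tfrac{(n-k)\log(k/\delta)}{\DKL}. \qquad \text{(II)}
\]

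\paragraph{Combining and main obstacle.} Summing (I) and (II) with weights $1-p_1$ and $1-p_0$ respectively, the coefficients of both $Q_0$ and $Q_1$ on the left collapse to $1-p_0p_1=1-o(1)$, while on the right $(1-p_1)\cdot 2k+(1-p_0)(n-k)\ge 2k(1-o(1))+(n-2k)=(1-o(1))n$. Hence $(1-o(1))(Q_0+Q_1)\ge(1-o(1))\tfrac{n\log(k/\delta)}{\DKL}$, which yields $Q\ge Q_0+Q_1\ge(1-o(1))\tfrac{n\log(k/\delta)}{\DKL}$ and contradicts the assumed upper bound on $Q$. The main obstacle is the careful bookkeeping of the $(1\pm o(1))$ factors uniformly over the entire range $n/\log n\le k\le n/2$---most delicately near the boundary $k\to n/2$, where $1-p_0\to 0$ makes the zero-padding reduction deliver vanishingly little savings on $Q_0$, but in that same regime $2k=(1-o(1))n$ so that (I) alone already yields the desired bound; a minor technicality is handling the two slightly different values of $p_0$ (and $p_1$) arising from the two halves of $\cD_1$, which is resolved by using the larger of the two as a uniform upper bound.
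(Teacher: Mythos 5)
Your proof is correct and rests on the same two padding reductions as the paper's argument: padding with artificial zeros to reach the Majority lower bound (\cref{lem:th:maj}), and padding with artificial ones to reach the small-$k$ Threshold lower bound (\cref{thm:th:small-k}). The difference lies entirely in the combination step. The paper splits into two cases according to whether the per-position query rate $q_0$ on zeros or $q_1$ on ones is larger, and then invokes exactly one of the two reductions, using the assumption $q_0\le q_1$ (resp.\ $q_0>q_1$) to argue that the queries spent on the padded positions are a small fraction of the total. You instead derive both inequalities
\[
  Q_1 + p_0 Q_0 \ge (1-o(1))\tfrac{2k\log(k/\delta)}{\DKL}, \qquad
  Q_0 + p_1 Q_1 \ge (1-o(1))\tfrac{(n-k)\log(k/\delta)}{\DKL},
\]
simultaneously and take the weighted combination $(1-p_1)\cdot\text{(I)} + (1-p_0)\cdot\text{(II)}$. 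The algebraic identity $(1-p_1) + (1-p_0)p_1 = (1-p_1)p_0 + (1-p_0) = 1-p_0p_1$ makes the left side collapse to $(1-p_0p_1)(Q_0+Q_1) = (1-o(1))Q$, while the right side sums to $(1-o(1))\frac{\log(k/\delta)}{\DKL}\bigl[2k(1-p_1) + (n-k)(1-p_0)\bigr]\ge (1-o(1))\frac{n\log(k/\delta)}{\DKL}$ using $(1-p_0)(n-k)\ge n-2k$ and $p_1=o(1)$. This buys a unified argument that avoids the case split and makes explicit why the two reductions are complementary: they cover the padded positions on opposite sides, and the weights are chosen so that the coefficients of $Q_0$ and $Q_1$ become equal. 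One mild caveat: the two conditional versions of $p_0$ (and $p_1$) differ between the two halves of $\cD_1$; you correctly handle this by taking the maximum as a uniform upper bound, and both needed facts ($p_0\le 1$ and $(1-p_0)(n-k)\ge n-2k$) hold for either value. Your choice $k'=\lceil k/\log k\rceil$ differs from the paper's $k'=\Theta(n/\log n)$, but both satisfy $k'=o(n')$ and $\log k'=(1-o(1))\log k$ in the range $n/\log n\le k\le n/2$, so either works.
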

\begin{proof}
  The high-level proof strategy is similar to that of \cref{lem:th:maj}, by reducing from a hard instance to $\Threshold{n}{k}$. However, the difference is that we need to reduce from two different cases depending on whether the average number of queries per $1$ or per $0$ is larger. In \cref{lem:th:maj} we did not have to do it because we can simply flip all the $0$'s and $1$'s in the input and retain the same problem as $n = 2k - 1$.

  Let $\cD$ be the input distribution where with $1/2$ probability the input is chosen uniformly from $\binom{[n]}{k}$ and with $1/2$ probability the input is chosen uniformly from $\binom{[n]}{k - 1}$. Let $\cA$ be an algorithm solving $\Threshold{n}{k}$ under input distribution $\cD$ using $(1-\epsilon)\frac{n \log \frac{k}{\delta}}{\DKL}$ noisy queries in expectation, for some absolute constant $\epsilon>0$. Let $q_0$ denote the expected number of queries $\cA$ makes on a random index with input value $0$ (under input distribution $\cD$), and let $q_1$ denote the expected number of queries $\cA$ makes on a random index with input value $1$ (under input distribution $\cD$). Let $Q$ be the expected number of queries $\cA$ makes.

  Consider the following two cases: $q_0\le q_1$ and $q_0>q_1$.

  \paragraph{Case $q_0 \le q_1$.}

  Let $k' = \Theta(n / \log n) \le k$ and let $n' = n - k + k'$. Note that $n' \in [n / 2, n]$, so we have $k' = \Theta(n' / \log n')$. By \cref{thm:th:small-k}, solving $\Threshold{n'}{k'}$ under input distribution where with $1/2$ probability the input is uniformly from $\binom{[n']}{k'}$ and with $1/2$ probability the input is uniformly from $\binom{[n']}{k' - 1}$ with error probability $\delta$ requires $(1-o(1))\frac{n' \log \frac{k'}{\delta}}{\DKL} = (1-o(1))\frac{n' \log \frac{n}{\delta}}{\DKL}$ noisy queries in expectation.

  Given an instance of $\Threshold{n'}{k'}$, we add $n - n'$ artificial $1$'s to the input, and then randomly permute the input, and feed it to $\cA$. If $\cA$ queries an actual input, we also make an actual query; if $\cA$ queries an artificial input, we can simulate a query without making an actual query. Finally, we use the result returned by $\cA$ as our answer. It is not difficult to verify that this algorithm is correct with error probability $\delta$, and the input distribution to $\cA$ is $\cD$.

  Let us analyze the expected number of queries $Q'$ used by the algorithm, which can be expressed as follows:
  \[
  \frac{1}{2} \cdot \left(q_1 (k' - 1) + q_0 (n - k + 1) \right) + \frac{1}{2} \cdot \left(q_1 k' + q_0 (n - k) \right).
  \]
  Also, notice that the number of queries $Q$ made by $\cA$ under input $\cD$ is
  \[
  \frac{1}{2} \cdot \left(q_1 (k - 1) + q_0 (n - k + 1) \right) + \frac{1}{2} \cdot \left(q_1 k + q_0 (n - k) \right).
  \]
  As $q_0 \le q_1$, the above implies that $q_1 \ge Q / n$. Furthermore, we have that $Q - Q' = q_1 \cdot (k - k') = q_1 \cdot (n - n') \ge \frac{n - n'}{n} \cdot Q$. Therefore,
  \[
  Q' \le \frac{n'}{n} \cdot Q \le (1-\epsilon)\frac{n' \log \frac{k}{\delta}}{\DKL} \le  (1-\epsilon)\frac{n' \log \frac{n}{\delta}}{\DKL},
  \]
  which contradicts \cref{thm:th:small-k}.

  \paragraph{Case $q_0 > q_1$.}
  By \cref{lem:th:maj}, solving $\Threshold{2k-1}{k}$ under input distribution where with $1/2$ probability the input is uniformly from $\binom{[2k-1]}{k}$ and with $1/2$ probability the input is uniformly from $\binom{[2k-1]}{k - 1}$ with error probability $\delta$ requires $(1-o(1))\frac{2k \log \frac{k}{\delta}}{\DKL} = (1-o(1))\frac{2k \log \frac{n}{\delta}}{\DKL}$ noisy queries in expectation.

  Given such an input to $\Threshold{2k-1}{k}$, we add $n - 2k+1$ artificial $0$'s to the input, and then randomly permute the input, and feed it to $\cA$. Similar as before, we make an actual query if $\cA$ queries an actual input, and we simulate a query otherwise. It is not difficult to verify that this algorithm is correct with error probability $\delta$, and the input distribution to $\cA$ is $\cD$.

  The expected number of actual queries used by the algorithm can be similarly analyzed as the previous case, which can be upper bounded by
  $(1-\epsilon)\frac{2k \log \frac{n}{\delta}}{\DKL},
  $
  contradicting \cref{lem:th:maj}.
\end{proof}

\subsection{Upper bound for \texorpdfstring{$k$}{k}-Threshold}
In this section we prove our upper bound for $k$-Threshold, stated as follows.

\begin{theorem}
\label{thm:threshold-upper}
  Given a sequence $a \in \{0, 1\}^n$ and an integer $1 \le k \le n$, there is an algorithm that can output $\min\{k, \lVert a \rVert_1\}$ with error probability $\delta = o(1)$ using
  \[
  (1+o(1))\frac{n \log \frac{k}{\delta}}{\DKL}
  \]
  noisy queries in expectation.
\end{theorem}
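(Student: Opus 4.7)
The plan is to process the $n$ input bits one at a time using an asymmetric \textsc{Check-Bit} subroutine based on a sequential probability ratio test (SPRT). For each bit $a_i$, I would run a biased integer random walk starting at $0$ whose step is $+1$ if the noisy query returns $1$ and $-1$ otherwise, stopping the first time the walk exits the interval $(-T_-, T_+)$; declare $a_i=1$ if it exits through $T_+$ and $a_i=0$ if it exits through $-T_-$. The algorithm maintains a counter $c$ of bits declared $1$; as soon as $c$ reaches $k$ it halts and returns $k$, and otherwise it returns $c$ after all $n$ bits have been processed. The two thresholds are chosen asymmetrically as $T_+ = \lceil \log(n/\delta)/\log\tfrac{1-p}{p}\rceil$ (a high bar for declaring $1$) and $T_- = \lceil \log(k/\delta)/\log\tfrac{1-p}{p}\rceil$ (a lower bar for declaring $0$).

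Standard SPRT / gambler's-ruin bounds then yield per-bit false-positive probability at most $\delta/n$, per-bit false-negative probability at most $\delta/k$, and expected per-bit query counts of $(1+o(1))\log(n/\delta)/\DKL$ on a $1$-bit and $(1+o(1))\log(k/\delta)/\DKL$ on a $0$-bit, where the $(1+o(1))$ factor absorbs the $O(1)$ overshoot of the $\pm 1$ walk. I would prove correctness by a case split on $\lVert a\rVert_1$. If $\lVert a\rVert_1 < k$, a union bound over all bits gives error at most $(n-\lVert a\rVert_1)\cdot\delta/n + \lVert a\rVert_1\cdot\delta/k \le 2\delta$. If $\lVert a\rVert_1 \ge k$, the output is wrong only when the counter never reaches $k$, which in the tight case $\lVert a\rVert_1=k$ requires at least one false negative and therefore happens with probability at most $k\cdot\delta/k=\delta$. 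For the query cost, summing the per-bit expectations (whether we stop early at $c=k$ or process every bit) gives in either case
\begin{align*}
\bE[\text{number of queries}] \le \frac{1+o(1)}{\DKL}\bigl(n\log(k/\delta) + k\log(n/k)\bigr),
\end{align*}
and the inequality $r\log(1/r)\le 1/e$ on $[0,1]$ applied with $r=k/n$ shows $k\log(n/k)\le n/e$, so this residual term is $o(n\log(k/\delta))$ whenever $\delta=o(1)$, yielding the target $(1+o(1))\tfrac{n\log(k/\delta)}{\DKL}$.

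The main obstacle I anticipate is preserving the tight leading constant all the way through the SPRT analysis. The classical Wald identities give both the per-bit error bound and the expected-time bound with only an additive $O(1)$ slack from the overshoot, so the work is in organizing the case analysis carefully and verifying that the residual $k\log(n/k)/\DKL$ term together with the cumulative overshoot contributions are indeed $o(n\log(k/\delta)/\DKL)$ uniformly in $1\le k\le n$ under $\delta=o(1)$. A secondary subtlety is accounting for the expected number of processed bits when $\lVert a\rVert_1 \gg k$: here I would use the stopping rule $c=k$ to argue that in expectation at most $k + o(1)$ $1$-bits and at most $n-k$ $0$-bits are processed, so that the same $k\log(n/\delta) + (n-k)\log(k/\delta)$ estimate applies.
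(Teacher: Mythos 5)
Your proposal is essentially the paper's proof: the same asymmetric SPRT (a $\pm 1$ random walk with thresholds $T_+\approx\log(n/\delta)/\log\frac{1-p}{p}$ and $T_-\approx\log(k/\delta)/\log\frac{1-p}{p}$, giving false-positive rate $\delta/n$ and false-negative rate $\delta/k$ via the gambler's-ruin bound), the same counter that halts at $k$, the same union-bound correctness argument, and the same $\min\{\lVert a\rVert_1,k\}\log(n/\delta)+(n-\lVert a\rVert_1)\log(k/\delta)$ accounting. The one small improvement over the paper's write-up is your final step: instead of the paper's case split ($k\ge n/\log n$ versus $k\le n/\log n$), you rewrite the bound as $n\log(k/\delta)+k\log(n/k)$ and kill the residual term uniformly via $r\log(1/r)\le 1/e$, which is slightly cleaner.
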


\subsubsection{Preliminaries}
The following lemmas are standard.

\begin{lemma}[e.g., \cite{gu2023optimal, wang2024noisy}]
\label{lem:check-bit}
    For a bit $B$, there is an algorithm $\textup{\textsc{Check-Bit}}(B, \delta)$ that can return the value of the bit with error probability $\le \delta$ using
    \[
    (1+o_{1/\delta}(1)) \frac{\log \frac{1}{\delta}}{\DKL}
    \]
    noisy queries in expectation.
\end{lemma}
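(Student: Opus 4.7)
The plan is to invoke a sequential probability ratio test (SPRT), which in this $\pm 1$-step setting reduces to a biased random walk argument. The algorithm repeatedly queries $B$ and maintains the tally $S_t = \sum_{i=1}^{t}(2X_i - 1)$, where $X_i\in\{0,1\}$ is the $i$-th noisy query result. Under $B=1$, $S_t$ is a random walk with bias $+(1-2p)$ (stepping $+1$ with probability $1-p$ and $-1$ with probability $p$); under $B=0$ the bias is $-(1-2p)$. I fix a threshold $T = \bigl\lceil \log(1/\delta)/\log\tfrac{1-p}{p}\bigr\rceil$ and stop at $\tau = \min\{t : |S_t| = T\}$, outputting $1$ if $S_\tau = T$ and $0$ if $S_\tau = -T$. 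Because the steps are $\pm 1$, there is no overshoot, so $|S_\tau| = T$ deterministically on termination.

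For the correctness bound, I would use the exponential martingale $M_t = \left(\frac{p}{1-p}\right)^{S_t}$, which is a martingale under $B=1$ since $(1-p)\cdot\frac{p}{1-p} + p\cdot\frac{1-p}{p} = 1$. Applying optional stopping at $\tau$ gives
\[
1 = \mathbb{E}[M_\tau] = P_-\left(\frac{p}{1-p}\right)^{-T} + (1-P_-)\left(\frac{p}{1-p}\right)^{T},
\]
where $P_- := \Pr_{B=1}[S_\tau = -T]$. Solving yields $P_- = \frac{r^T}{1+r^T} \le r^T$ with $r = p/(1-p)$, and by the choice of $T$ this is at most $\delta$. The $B=0$ case is symmetric, so the overall error probability is at most $\delta$.

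For the expected query count, I would apply optional stopping to the linear martingale $S_t - (1-2p)t$ under $B=1$, obtaining $\mathbb{E}[\tau] = \mathbb{E}[S_\tau]/(1-2p) \le T/(1-2p)$, where the inequality uses $\mathbb{E}[S_\tau] = T(1 - 2P_-) \le T$. Substituting $T = \bigl(1 + O(1/\log(1/\delta))\bigr)\log(1/\delta)/\log\tfrac{1-p}{p}$ yields
\[
\mathbb{E}[\tau] \le \frac{T}{1-2p} = \bigl(1 + o_{1/\delta}(1)\bigr)\frac{\log(1/\delta)}{\DKL},
\]
as claimed. The only real subtlety is verifying the hypotheses of optional stopping, namely $\mathbb{E}[\tau]<\infty$; this follows from a routine geometric-tail bound (the positive drift ensures the walk hits $\pm T$ within $O(T)$ steps with probability bounded away from $0$, giving exponential tails on $\tau$). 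There is no substantive obstacle; the only bookkeeping is absorbing the additive $O(1)$ from the ceiling in $T$ and the $(1-2P_-)$ factor into the $(1+o_{1/\delta}(1))$ term.
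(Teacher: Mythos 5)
Your proof is correct. The paper itself cites \cref{lem:check-bit} without proof, but it proves the closely related asymmetric version (\cref{lem:asymmetric-check-bit}) by exactly this method: a $\pm1$ random walk on the log-likelihood ratio with absorbing barriers, with the barrier-hitting probability from \cref{lem:monkey-at-cliff} controlling error and the expected hitting time from \cref{lem:expected-hitting-time} controlling query count. Your argument is the symmetric specialization $a=b=T$, and the only stylistic difference is that you re-derive those two random-walk facts from scratch via the exponential martingale $M_t=(p/(1-p))^{S_t}$ and the linear martingale $S_t-(1-2p)t$ together with optional stopping, whereas the paper cites them from \cite{feller}; this makes your version self-contained and even gives the exact two-sided exit probability $P_-=r^T/(1+r^T)$ rather than the one-sided upper bound $r^T$, though both suffice.
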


\begin{lemma}[e.g., \cite{feller}]
\label{lem:monkey-at-cliff}
Consider a biased random walk on $\bZ$ starting at $0$. At each time step, the walk adds $1$ to the current value with probability $p < 1/2$, and adds $-1$ to the current value with probability $1-p$. Then the probability that the random walk ever reaches some integer $x \ge 0$ is $(p/(1-p))^x$.
\end{lemma}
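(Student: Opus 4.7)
The plan is to give the classical gambler-ruin argument for this biased-random-walk fact. Let $r_x$ denote the probability that the walk (started at $0$) ever reaches the integer $x \ge 0$, and write $q = p/(1-p) \in (0,1)$. My target is to show $r_x = q^x$.

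The first step is a reduction to the case $x = 1$. By the strong Markov property applied at the first hitting time of $1$, and by the translation invariance of the increments, the event ``ever reach $x$'' (for $x \ge 1$) factors as ``ever reach $1$, and then from $1$ ever reach $2$, and so on''. Each of the $x$ successive hitting events has the same probability $r_1$ as reaching $1$ from $0$, so $r_x = r_1^x$. For $x = 0$ the identity $r_0 = 1 = q^0$ is trivial.

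The second step is to compute $r := r_1$ by a one-step recursion. Conditioning on the first increment: with probability $p$ the walk immediately reaches $1$, and with probability $1-p$ it steps to $-1$, from which the probability of ever reaching $1$ equals the probability of ever rising by $2$ from the current position, i.e.\ $r_2 = r^2$. Hence
\[
r = p + (1-p) r^2,
\]
equivalently $(1-p) r^2 - r + p = 0$. The roots are $r = 1$ and $r = p/(1-p) = q$.

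The main (small) obstacle is therefore to rule out $r = 1$, i.e.\ to argue that in the $p < 1/2$ regime the walk fails to reach $1$ with positive probability. This is where one invokes the negative drift: by the strong law of large numbers applied to the i.i.d.\ increments (each with mean $2p - 1 < 0$), the walk tends to $-\infty$ almost surely, so in particular $\sup_{t \ge 0} S_t < \infty$ a.s., which forces $r < 1$. Selecting the smaller root then gives $r = q$, and combined with $r_x = r^x$ yields $r_x = (p/(1-p))^x$, as claimed.
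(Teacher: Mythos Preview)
Your proof is correct and is exactly the classical gambler's-ruin argument; the paper does not give its own proof of this lemma, merely citing \cite{feller} as a reference for this standard fact.
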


\begin{lemma}[e.g., \cite{feller}]
\label{lem:expected-hitting-time}
Consider a biased random walk on $\bZ$ starting at $0$. At each time step, the walk adds $1$ to the current value with probability $p < 1/2$, and adds $-1$ to the current value with probability $1-p$. Then the expected number of steps needed to first reach some integer $-x \le 0$ is $\frac{x}{1-2p}$.
\end{lemma}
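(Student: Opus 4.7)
The plan is to reduce the claim about hitting $-x$ to the special case $x=1$ using the strong Markov property, and then compute the expected first-passage time to $-1$ by a one-step conditioning argument.

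For the reduction, let $T_x$ denote the first time the walk reaches $-x$, and decompose $T_x = \tau_1 + \tau_2 + \cdots + \tau_x$, where $\tau_j$ is the additional time needed to move from $-(j-1)$ to $-j$ after first hitting $-(j-1)$. By translation invariance and the strong Markov property applied at the successive hitting times of $-1, -2, \ldots, -(x-1)$, the increments $\tau_1,\ldots,\tau_x$ are i.i.d.~copies of $\tau_1 = T_1$. Hence $\bE[T_x] = x \cdot \bE[T_1]$, and it suffices to prove $\bE[T_1] = \frac{1}{1-2p}$.

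To compute $\mu := \bE[T_1]$, I would first argue that $\mu < \infty$: since $p < 1/2$, standard large-deviation bounds give an exponential tail for $T_1$ (for instance, $\bP(T_1 > k)$ decays geometrically in $k$ because the walk has strictly negative drift). Then I condition on the first step. With probability $1-p$ the walk immediately reaches $-1$, contributing expected time $1$. With probability $p$ the walk moves to $+1$; from $+1$ it must first return to $0$ (expected time $\mu$ by translation invariance and strong Markov) and then reach $-1$ (another expected $\mu$ steps). This yields the equation
\begin{equation*}
\mu \;=\; (1-p)\cdot 1 + p\cdot(1 + 2\mu) \;=\; 1 + 2p\mu,
\end{equation*}
so $(1-2p)\mu = 1$ and therefore $\mu = \frac{1}{1-2p}$. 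Combined with the reduction above, this gives $\bE[T_x] = \frac{x}{1-2p}$.

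The only subtlety worth flagging is the finiteness of $\mu$, which is required to legitimately rearrange the one-step equation; as an alternative route one could instead invoke Wald's identity directly on $S_{T_x} = -x$ with step mean $2p-1$, which gives $-x = (2p-1)\,\bE[T_x]$ and the same conclusion, provided $\bE[T_x]<\infty$ is established first. Either way, the finiteness follows from the negative drift, so there is no real obstacle.
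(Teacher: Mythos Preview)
Your argument is correct and entirely standard: decompose via the strong Markov property into $x$ i.i.d.\ first-passage times to $-1$, then compute $\bE[T_1]$ by a first-step analysis (or, equivalently, apply Wald's identity). The only point of comparison is that the paper does not actually prove this lemma at all; it is stated as a classical fact with a reference to Feller, so there is no ``paper's own proof'' to compare against. Your write-up would serve perfectly well as a self-contained justification.
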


Our algorithm for $k$-Threshold and Counting uses the following asymmetric version of $\textsc{Check-Bit}$.
\begin{lemma}
\label{lem:asymmetric-check-bit}
    For a bit $B$, there is an algorithm $\textup{\textsc{Asymmetric-Check-Bit}}(B, \delta_0, \delta_1)$ such that:
    \begin{itemize}
        \item If the actual value of $B$ is $0$, then the algorithm returns the value of the bit with error probability $\le \delta_0$
        using
        \[
        (1+o_{1/\delta_1}(1)) \frac{\log \frac{1}{\delta_1}}{\DKL}
        \]
        noisy queries in expectation.
        \item If the actual value of $B$ is $1$, then the algorithm returns the value of the bit with error probability $\le \delta_1$
        using
        \[
        (1+o_{1/\delta_0}(1)) \frac{\log \frac{1}{\delta_0}}{\DKL}
        \]
        noisy queries in expectation.
    \end{itemize}
\end{lemma}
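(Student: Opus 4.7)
The plan is to implement $\textsc{Asymmetric-Check-Bit}$ as a Wald sequential probability ratio test, realized as a biased random walk on $\bZ$. Write $\mu = \log\frac{1-p}{p}$ and set two asymmetric thresholds $N_1 = \lceil \log(1/\delta_0)/\mu \rceil$ and $N_0 = \lceil \log(1/\delta_1)/\mu \rceil$. The algorithm will maintain a counter $Y$, initialized to $0$, and repeatedly make one noisy query to $B$: increment $Y$ by $1$ if the query returns $1$, decrement $Y$ by $1$ otherwise. It stops and outputs $1$ the first time $Y$ hits $+N_1$, and stops and outputs $0$ the first time $Y$ hits $-N_0$. Under true value $B = 0$, $Y$ performs a biased walk with up-probability $p < 1/2$; under $B = 1$, the roles of $0$ and $1$ are swapped and the up-probability is $1-p > 1/2$.

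To bound the error probability under $B = 0$, observe that erring means $Y$ ever reaches $+N_1$, and since dropping the absorbing lower barrier $-N_0$ can only inflate this probability, \cref{lem:monkey-at-cliff} bounds it directly by $(p/(1-p))^{N_1} = e^{-\mu N_1} \le \delta_0$. The case $B = 1$ follows by the obvious symmetry (swap $p \leftrightarrow 1-p$ and $\delta_0 \leftrightarrow \delta_1$), yielding error probability at most $\delta_1$.

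For the expected query count under $B = 0$, dropping the upper absorbing barrier $+N_1$ only stochastically increases the stopping time, so \cref{lem:expected-hitting-time} gives
\[
\bE[\tau \mid B = 0] \;\le\; \frac{N_0}{1-2p} \;=\; \frac{\log(1/\delta_1)}{(1-2p)\mu} + O\!\left(\frac{1}{1-2p}\right),
\]
which, using $(1-2p)\mu = \DKL$, equals $(1 + o_{1/\delta_1}(1))\,\log(1/\delta_1)/\DKL$. The $B = 1$ bound follows by the same symmetric argument, producing $(1 + o_{1/\delta_0}(1))\,\log(1/\delta_0)/\DKL$ noisy queries in expectation. The analysis is essentially self-contained once the two provided random-walk lemmas are in hand, so I do not anticipate any genuine obstacle; the only point requiring care is the bookkeeping of which threshold controls what, namely that the output-$1$ threshold $N_1$ controls the $B = 0$ error while the output-$0$ threshold $N_0$ controls the $B = 0$ query budget, and symmetrically for $B = 1$.
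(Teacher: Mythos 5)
Your proposal is correct and is essentially the same argument as the paper's proof: both run a two-sided biased random walk (Wald SPRT) on the signed query count with asymmetric absorbing thresholds $\lceil \log(1/\delta_1)/\log((1-p)/p)\rceil$ and $\lceil \log(1/\delta_0)/\log((1-p)/p)\rceil$, bound the error probabilities via \cref{lem:monkey-at-cliff} by dropping the favorable barrier, and bound the expected stopping time via \cref{lem:expected-hitting-time} by dropping the unfavorable barrier. Your $N_0, N_1$ are exactly the paper's $a, b$ with the same bookkeeping of which threshold controls which error/budget.
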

\begin{proof}
Let $a$ and $b$ be two integer parameters to be set later. The algorithm works as follows: we keep querying the input bit, and keep track of the number of queries that returns $1$ (denoted by $q_1$) and the number of queries that returns $0$ (denoted by $q_0$). We stop once $q_1 - q_0 = -a$, in which case we declare the bit to be $0$, or $q_1 - q_0 = b$, in which case we declare the bit to be $1$.

If the input bit is $1$, then the probability that the algorithm returns $0$ can be upper bounded by
$
\left(\frac{p}{1-p}\right)^{a}
$
using \cref{lem:monkey-at-cliff},
so by setting $a = \left\lceil \frac{\log(1/\delta_1)}{\log((1-p) / p)} \right\rceil$, this probability is upper bounded by $\delta_1$.

On the other hand, if the input bit is $0$, then the probability that the algorithm returns $1$ can be upper bounded by
$
\left(\frac{p}{1-p}\right)^{b}
$
using \cref{lem:monkey-at-cliff},
so by setting $b = \left\lceil \frac{\log(1/\delta_0)}{\log((1-p) / p)} \right\rceil $, this probability is upper bounded by $\delta_0$.

Now we consider the expected running time of the algorithm. First suppose the input bit is $1$, then the algorithm can be viewed as a random walk on integers starting at $0$, and each time it adds $1$ with probability $1-p$ and subtracts $1$ with probability $p$. It stops once the random walk reaches $-a$ or $b$. This stopping time is upper bounded by the first time it reaches $b$, and the expected number of steps required for it to first reach $b$ is $\frac{b}{1-2p} = (1+o_{1/\delta_0}(1)) \frac{\log(1/\delta_0)}{\DKL}$ by \cref{lem:expected-hitting-time}. Similarly, if the input bit is $0$, then the expected number of queries used by the algorithm is upper bounded by $(1+o_{1/\delta_1}(1)) \frac{\log(1/\delta_1)}{\DKL}$.
\end{proof}

\subsubsection{The proof}
We are now ready to state our proof of \cref{thm:threshold-upper}.

Our algorithm is extremely simple and is outlined in \cref{algo:threshold}. It repeatedly calls the subroutine $\textsc{Asymmetric-Check-Bit}(a_i, \delta / 2n, \delta / 2k)$ from \cref{lem:asymmetric-check-bit} for every $i \in [n]$, and uses the returned value as the guess for $a_i$. If at any point, the current number of $a_i$ whose guess is $1$ reaches $k$, the algorithm terminates early and return $k$. Otherwise, the algorithm returns the number of $a_i$ whose guess is $1$ at the end.
\begin{algorithm}[ht]
\caption{}
\label{algo:threshold}
\begin{algorithmic}[1]
\Procedure{Threshold-Count}{$\{a_1, \ldots, a_n\}, k, \delta$}
\State $cnt = 0$
\For{$i = 1 \to n$}
\State $cnt = cnt + \textsc{Asymmetric-Check-Bit}(a_i, \delta / 2n, \delta / 2k)$
\If{$cnt \ge k$}
\Return{$k$}
\EndIf
\EndFor
\State \Return{$cnt$}
\EndProcedure
\end{algorithmic}
\end{algorithm}
\paragraph{Error probability.} We first analyze the error probability of the algorithm. First, suppose $\lVert a \rVert_1 \ge k$. Let $S \subseteq [n]$ be an arbitrary size-$k$ set where $a_i = 1$ for $i \in S$. Then by the guarantee of \textsc{Asymmetric-Check-Bit}, for every $i \in S$, the probability that the guess for $a_i$ is not $1$ is $\le \delta / 2k$. Therefore, by union bound, the guesses for $a_i$ for all $i \in S$ are $1$ with probability $\ge 1-\delta/2$. Therefore, the count will be at least $k$ so the algorithm will return $k$ as the correct answer with error probability $\le \delta / 2 \le \delta$.

If $\lVert a \rVert_1 < k$, then by union bound, the probability that all guesses are correct is $\ge 1 - \frac{\delta}{2k} \cdot \lVert a \rVert_1 - \frac{\delta}{2n} \cdot (1 - \lVert a \rVert_1) \ge 1-\delta$, so the returned count of the algorithm is also correct with probability $\ge 1- \delta$.

\paragraph{Expected number of queries.} We first consider the number of $a_i$ with $a_i = 1$ that we pass to \textsc{Asymmetric-Check-Bit}. One trivial upper bound is $\lVert a \rVert_1$. Also, the expected number of $a_i = 1$ we need to pass to \textsc{Asymmetric-Check-Bit} before $cnt$ is incremented by $1$ is $\le \frac{1}{1-\delta/2k}$, so another upper bound is $\frac{k}{1-\delta/2k} \le (1+o(1)) k$ (as $\delta = o(1)$). On the other hand, the number of $a_i$ where $a_i = 0$ that we pass to \textsc{Asymmetric-Check-Bit} is upper bounded by $n - \lVert a \rVert_1$. Note that whether we pass some $a_i$ to \textsc{Asymmetric-Check-Bit} only depends on the queries we make to $a_{i'}$ for $i' < i$, so even given that we pass some $a_i$ to \textsc{Asymmetric-Check-Bit}, we can still use the bounds from \cref{lem:asymmetric-check-bit} to bound the expected number of queries we make to $a_i$. Therefore, the expected number of queries can be upper bounded by
\[
(1+o(1)) \left(\min\left\{\lVert a \rVert_1, k\right\} \cdot  \frac{\log\frac{n}{\delta}}{\DKL} + (n - \lVert a \rVert_1) \cdot \frac{\log \frac{k}{\delta}}{\DKL}\right).
\]
Then we consider two cases depending on how large $k$ is.

\paragraph{Case $k \ge n/\log n$.}
  In this case, $\log(n) = (1+o(1)) \log k$, so the expected number of queries can be upper bounded by
  \begin{align*}
& (1+o(1)) \left(\min\left\{\lVert a \rVert_1, k\right\} \cdot  \frac{\log \frac{k}{\delta}}{\DKL} + (n - \lVert a \rVert_1) \cdot \frac{\log \frac{k}{\delta}}{\DKL}\right) \\
= & (1+o(1)) \frac{n \log \frac{k}{\delta}}{\DKL}.
\end{align*}

\paragraph{Case $k \le n/\log n$.}
In this case, $k \log \frac{n}{\delta} = o(n \log \frac{k}{\delta})$, so the expected number of queries can be upper bounded by
\begin{align*}
& (1+o(1)) \left(k \cdot  \frac{\log\frac{n}{\delta}}{\DKL} + n \cdot \frac{\log \frac{k}{\delta}}{\DKL}\right) \\
= & (1+o(1)) \frac{n \log \frac{k}{\delta}}{\DKL}.
\end{align*}

\subsection{Bounds for Counting}

Let us first prove a one-sided upper bound for Counting.
\begin{theorem}
\label{thm:counting-upper-1}
  Given a sequence $a \in \{0, 1\}^n$, there is an algorithm that can output $\lVert a \rVert_1$ with error probability $\delta = o(1)$ using
  \[
  (1+o(1))\frac{n \log \frac{\lVert a \rVert_1 + 1}{\delta}}{\DKL}
  \]
  noisy queries in expectation.
\end{theorem}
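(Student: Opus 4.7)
The plan is to adapt the $k$-Threshold algorithm of \cref{thm:threshold-upper} so that it works without knowing the true count $K=\lVert a\rVert_1$ in advance. The essential observation is that the \textsc{Asymmetric-Check-Bit} procedure applied to bit $a_i$ is just a biased random walk $(cnt_i(t))_{t\ge 0}$ on $\mathbb Z$ whose only dependence on the parameter $k$ is the $0$-threshold $-a(k)$; the $1$-threshold $b$ and the underlying coin flips are independent of $k$. So if we dynamically grow $k$, we never waste queries---each bit's walk simply continues from where it stopped. Concretely, set $b=\lceil\log(8n/\delta)/\log\frac{1-p}p\rceil$ and $a(k)=\lceil\log(8k/\delta)/\log\frac{1-p}p\rceil$, start with $k=1$ and every bit active with $cnt_i=0$, and loop: for each active bit, query $a_i$ until $cnt_i=b$ (classify as $1$) or $cnt_i=-a(k)$ (classify as $0$); let $N$ be the number of $1$-classified bits; if $N<k$, output $N$ and halt; otherwise set $k\gets 2k$, re-mark every $0$-classified bit as active (its walk continues from $-a(k/2)$), and iterate.

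\textbf{Correctness.} Let $k^\star=2^{\lceil\log_2(K+1)\rceil}\in(K,2(K+1)]$, and let $N_k$ be the (random, monotone non-decreasing) number of walks that reach $b$ before $-a(k)$, so the algorithm halts at the smallest power of two $k$ with $N_k<k$ and outputs $N_k$. Define the good event $\mathcal E$ as (a)~no $a_i=0$ walk ever reaches $b$, and (b)~every $a_i=1$ walk reaches $b$ before $-a(k^\star)$. By \cref{lem:monkey-at-cliff} and a union bound, $\mathbb P(\mathcal E)\ge 1-\delta/2$; on $\mathcal E$ we have $N_{k^\star}=K<k^\star$, so the algorithm halts at some $k\le k^\star$, and when it halts at $k^\star$ the output is $K$. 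The remaining failure mode is premature halting at some $k<k^\star$ with $N_k<K$; under (a), $K-N_k$ is stochastically dominated by $\mathrm{Bin}(K,(p/(1-p))^{a(k)})$ with mean $\le K\delta/(8k)$, so Markov's inequality gives $\mathbb P(N_k<k)\le K\delta/(8k(K-k+1))$. The elementary bound $k(K-k+1)\ge K$ for integers $k\in[1,K]$, together with a geometric-series summation over powers of two, yields total premature-halting probability $O(\delta)$; appropriate constants in $a(\cdot),b$ then give overall error $\le\delta$.

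\textbf{Query complexity and main obstacle.} By \cref{lem:expected-hitting-time}, an $a_i=1$ bit uses $(1+o(1))b/(1-2p)=(1+o(1))\log(n/\delta)/\DKL$ expected queries to reach $b$, and an $a_i=0$ bit uses $(1+o(1))a(k_{\text{stop}})/(1-2p)$ expected queries to reach $-a(k_{\text{stop}})$. On $\mathcal E$, $k_{\text{stop}}\le k^\star\le 2(K+1)$, so the per-$a_i=0$-bit cost is $(1+o(1))\log((K+1)/\delta)/\DKL$; summing over all $n$ bits and running the case split ``$K\ge n/\log n$ vs.\ $K<n/\log n$'' from the proof of \cref{thm:threshold-upper} gives the claimed $(1+o(1))n\log((K+1)/\delta)/\DKL$ for the good-event contribution. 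The main obstacle will be bounding the contribution of the $O(\delta)$-probability bad event $\bar{\mathcal E}$, on which $k_{\text{stop}}$ could in principle grow up to $\Theta(n)$. The plan there is to sharpen the tail: for $j\ge\log_2 k^\star$, the event $\{k_{\text{stop}}\ge 2^{j+1}\}$ forces $\mathrm{Bin}(n-K,(p/(1-p))^b)\ge 2^{j-1}-K$, whose probability decays super-exponentially by a Chernoff bound; combining this with the independence of the $n$ walks should suffice to show $\mathbb E[\sum_i T_i\,\mathbf 1_{\bar{\mathcal E}}]=o(n\log((K+1)/\delta)/\DKL)$, preserving the $(1+o(1))$ factor.
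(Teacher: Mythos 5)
Your algorithm and the overall strategy — treat each \textsc{Asymmetric-Check-Bit} as a biased random walk, let the zero-side absorbing boundary creep downward as the running count $k$ grows, and reuse queries across stages — are the same idea as the paper's \cref{algo:counting}. The bookkeeping differs (round-robin passes with a doubling $k$ rather than the paper's argmax selection with $k$ incremented one at a time), but this is cosmetic. Your correctness argument is essentially right: under event (a) the false-negative count $K - N_k$ is dominated by $\Bin(K,(p/(1-p))^{a(k)})$, and the Markov bound $\bP(N_k<k)\le K\delta/\bigl(8k(K-k+1)\bigr)$ does sum to $O(\delta)$ over powers of two once you observe that the terms with $2^j\le K/2$ form a geometric series and there is at most one dyadic $k$ in $(K/2,K]$.

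The genuine gap is precisely where you flag it, but the issue is more than a Chernoff tail. For a fixed $i$ with $a_i=0$, you want $\bE[T_i\,\mathbf 1_{\bar{\mathcal E}}]$, and $T_i$ is (up to the small chance of hitting $b$) the hitting time of walk $i$ to $-a(k_{\text{stop}})$. But $k_{\text{stop}}$ depends on walk $i$: if walk $i$ itself is a false positive, it raises $N_k$, delaying termination. So $\{k_{\text{stop}}\ge 2^{j+1}\}$ and the trajectory of walk $i$ are correlated, and you cannot simply multiply $\bP(k_{\text{stop}}\ge 2^{j+1})$ by an unconditional hitting-time expectation — you would be conditioning walk $i$ on an event that biases it. The missing ingredient is a decoupling step: replace $k_{\text{stop}}$ by an upper bound that is a function of the \emph{other} walks only. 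This is exactly what the paper does by introducing the event $F_j$ ("exactly $j$ false positives among $S_0\setminus\{i\}$"): $F_j$ is independent of walk $i$, has $\bP(F_j)\le\delta^j$, and under $F_j$ the stopping parameter can never exceed $k^*+j$, so one can bound $\bE[T_i\mid F_j]$ by the expected hitting time to $-\Theta\bigl(\log((k^*+j+1)/\delta)\bigr)$ and sum. In your framework the analogous move is to define $\tilde k^{(-i)}$ as the stopping power of two computed with walk $i$ deleted (or, equivalently, always counted as a false positive); one then has $k_{\text{stop}}\le 2\tilde k^{(-i)}$, $\tilde k^{(-i)}$ is independent of walk $i$, and your Chernoff bound applies to $\tilde k^{(-i)}$. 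Without some such decoupling, the sentence "combining this with the independence of the $n$ walks should suffice" does not constitute a proof, and this is the one substantive idea your write-up is missing relative to the paper's.
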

The difference between \cref{thm:counting-upper-1} and the upper bound part of \cref{thm:counting} is that we have $\lVert a \rVert_1 + 1$ rather than $\min\{\lVert a \rVert_1 + 1, n-\lVert a \rVert_1 + 1\}$ inside the log term.
\begin{proof}

Outlined in \cref{algo:counting}, the algorithm for counting is an adaptation of \cref{algo:threshold}. Notice that in \cref{algo:threshold}, when calling $\textsc{Asymmetric-Check-Bit}$, we have $\delta_0 = \delta/2n$ and $\delta_1 = \delta / 2k$. The value $\delta_0$ is fixed regardless of the value of $k$, and the value $\delta_1$ depends on $k$. The algorithm for counting in some sense is simulating \cref{algo:threshold}, but dynamically adjusting $\delta_1$ based on the current estimate of $k$, which is the number of input bits that are believed to be $1$'s.

For simplicity, throughout the analysis, we use $k^* = \lVert a \rVert_1$ to denote the desired answer. Let $S_0 \subseteq [n]$ be indices $i$ with $a_i = 0$ and let $S_1 = [n] \setminus S_0$.

\begin{algorithm}[ht]
\caption{}
\label{algo:counting}
\begin{algorithmic}[1]
\Procedure{Counting}{$\{a_1, \ldots, a_n\}, \delta$}
\State $k \gets 0$
\State $c \gets \{0\}^n$
\State $Active \gets [n]$
\While{True}
\State $i^* \gets \argmax_{i \in Active} c_i$
\If{$c_{i^*} \le -\frac{\log (6(k+1) / \delta)}{\log((1-p)/p)}$}
\Return{$k$}
\EndIf
\If{$\textsc{Query}(a_{i^*}) = 1$}
\State $c_{i^*} = c_{i^*} + 1$
\If{$c_{i^*} \ge \frac{\log (6 n / \delta)}{\log((1-p)/p)}$}
\State $Active \gets Active \setminus \{i^*\}$
\State $k \gets k + 1$
\EndIf
\Else
\State $c_{i^*} = c_{i^*} - 1$
\EndIf
\EndWhile
\EndProcedure
\end{algorithmic}
\end{algorithm}

\paragraph{Error probability.} For any $i$, we can view the value of $c_i$ as a random walk, i.e., after every query to $a_i$, we either adds $1$ to $c_i$ or subtracts $1$ from $c_i$. If $c_i = 1$, then the probability that we add $1$ to $c_i$ after every step is $1-p$, and the probability that we subtract $1$ is $p$. If $c_i = 0$, then the probability that we add $1$ to $c_i$ after every step is $p$, and the probability that we subtract $1$ is $1-p$. Conceptually, we can view $c_i$ as an infinite random walk, and the algorithm only utilizes some prefix of it.

For $i \in S_1$, by \cref{lem:monkey-at-cliff}, the probability that $c_i$ ever reaches $-\left\lceil \frac{\log (6 / \delta)}{\log((1-p)/p)} \right\rceil$ is at most
\[
\left( \frac{p}{1-p}\right)^{\frac{\log (6 / \delta)}{\log((1-p)/p)} } = \delta / 6.
\]
Furthermore, if some $c_i$ never reaches $-\left\lceil \frac{\log (6 / \delta)}{\log((1-p)/p)} \right\rceil$, then $i$ will eventually be removed from $Active$ and contributes towards $k$. Since the random walks $c_i$'s are independent for different $i$'s, the probability that there are $\lceil k^* /2 \rceil + 1$ many $i \in S_1$ where  $c_i$ reaches $-\left\lceil \frac{\log (6 / \delta)}{\log((1-p)/p)} \right\rceil$ can be upper bounded by
\[
\binom{k^*}{\lceil k^* /2 \rceil + 1} \cdot (\delta / 6)^{\lceil k^* /2 \rceil + 1} \le 2^{k^*} (\delta / 6)^{ k^* /2  + 1} \le (1.5\delta)^{k^* /2} \cdot (\delta / 6) \le \delta / 6,
\]
where the last step holds because $\delta = o(1)$ and hence we can assume $1.5 \delta \le 1$.
Thus, up to $\delta / 6$ error probability, the number of $i \in S_1$ that contributes towards $k$ is at least $k^* - (\lceil k^* /2 \rceil + 1)$, which means the final value of $k$ is $\ge \max\{0, k^* - (\lceil k^* /2 \rceil + 1)\} = \max\{0, \lfloor k^* / 2 \rfloor - 1\}$. This further implies that $k + 1 \ge k^* / 3$. Let $E_1$ be the event $k + 1 \ge k^* / 3$ and as analyzed above, $\bP(\neg E_1) \le \delta / 6$.

Next, we consider the event $E_2$ where all $c_i$ for $i \in S_1$ never reaches $-\left\lceil \frac{\log (6 (k^* / 3) / \delta)}{\log((1-p)/p)} \right\rceil$. By \cref{lem:monkey-at-cliff}, the probability that each $c_i$ for $i \in S_1$ reaches $-\left\lceil \frac{\log (6 (k^* / 3) / \delta)}{\log((1-p)/p)} \right\rceil$ is upper bounded by
\[
\left( \frac{p}{1-p}\right)^{\frac{\log (6 (k^* / 3) / \delta)}{\log((1-p)/p)}} = \frac{\delta}{2 k^*}.
\]
Therefore, by union bound, $\bP(\neg E_2) \le \delta /2$.

Next, let $E_3$ be the event where all $c_i$ for $i \in S_0$ never reaches $\left\lceil \frac{\log (6 n / \delta)}{\log((1-p)/p)} \right\rceil$. By \cref{lem:monkey-at-cliff}, the probability that each $c_i$ for $i \in S_0$ reaches $\left\lceil \frac{\log (6 n / \delta)}{\log((1-p)/p)} \right\rceil$ is upper bounded by
\[
\left( \frac{p}{1-p}\right)^{\frac{\log (6 n / \delta)}{\log((1-p)/p)}} = \frac{\delta}{6n}.
\]
Therefore, by union bound, $\bP(\neg E_3) \le \delta /6$.

Assume $E_1, E_2, E_3$ all happen. Then we know that $k + 1 \ge k^* / 3$, and so all $i \in S_1$ will eventually be removed from $Active$ and contribute towards $k$. Also, all $i \in S_0$ will not contribute towards $k$. Therefore, the returned value of $k$ will be equal to $k^*$. Hence, the error probability of the algorithm is upper bounded by $\bP(\neg E_1 \vee \neg E_2 \vee \neg E_3) \le \delta / 6 + \delta / 2 + \delta / 6 \le \delta$.

\paragraph{Expected number of queries.} Next, we analyze the expected number of queries used by the algorithm.

First, for every $i \in S_1$, the expected number of times we query $a_i$ is upper bounded by the expected number of steps it takes for the random walk $c_i$ takes to reach $\left\lceil \frac{\log (6 n / \delta)}{\log((1-p)/p)} \right\rceil$. By \cref{lem:expected-hitting-time}, it is bounded by
\[
\frac{\left\lceil \frac{\log (6 n / \delta)}{\log((1-p)/p)} \right\rceil}{1 - 2p} = (1+o(1)) \frac{\log \frac{n}{\delta}}{\DKL}.
\]

Then we consider the expected number of times we query $a_i$ for $i \in S_0$. Fix any $i \in S_0$. Let $F_j$ be the event where for exactly $j$ distinct $i' \in S_0 \setminus \{i\}$, the infinite random walk $c_j$ ever reaches $\left\lceil \frac{\log (6 n / \delta)}{\log((1-p)/p)} \right\rceil$. By analysis in the error probability part, $\bP(F_j) \le \binom{n}{j} \cdot (\delta / 6 n)^j \le \delta^j$. Note that $F_j$ is independent with the random walk $c_i$. If $F_j$ holds, then the number of times we query $a_i$ is upper bounded by the expected number of steps it takes for the random walk $c_i$ first hits $\left\lceil \frac{\log (6 (k^* + j + 1)) / \delta)}{\log((1-p)/p)} \right\rceil$ (because once $c_i$ hits this value, $k$ can never be larger than $k^* + j$ under $F_j$, so we will not query $a_i$ again). By \cref{lem:expected-hitting-time}, this expectation is
\[
\frac{\left\lceil \frac{\log (6 (k^* + j + 1)) / \delta)}{\log((1-p)/p)} \right\rceil}{1 - 2p} = (1+o(1)) \frac{\log \frac{k^* + j + 1}{\delta}}{\DKL}.
\]
Let $Q_i$ be the number of times we query $a_i$. Then we have
\begin{align*}
  \bE(Q_i) &= \sum_{j \ge 0} \bE(Q_i | F_j) \cdot \bP(F_j)\\
  &\le \frac{1+o(1)}{\DKL} \sum_{j \ge 0}  \delta^j \cdot \log \left(\frac{k^* + j + 1}{\delta} \right)\\
  &\le \frac{1+o(1)}{\DKL} \sum_{j \ge 0}  \delta^j \cdot \log \left(\frac{(k^* + 1) (j + 1)}{\delta} \right)\\
  &= \frac{1+o(1)}{\DKL} \left( \sum_{j \ge 0}  \delta^j \cdot \log \left(\frac{k^* + 1}{\delta} \right) + \sum_{j \ge 0} \delta^j \cdot \log(j+1)\right) \\
  &\le \frac{1+o(1)}{\DKL} \left( \frac{1}{1-\delta} \cdot \log \left(\frac{k^* + 1}{\delta} \right) + O(1)\right) \\
  &\le (1+o(1)) \frac{\log \frac{k^*+1}{\delta}}{\DKL}.
\end{align*}

Summing up everything, the overall expected number of queries is
\[(1+o(1)) \left(k^* \cdot  \frac{\log \frac{n}{\delta}}{\DKL} + (n - k^*) \cdot \frac{\log\frac{k^*+1}{\delta}}{\DKL}\right).
\]
Similar to the proof of \cref{thm:threshold-upper}, we consider two cases depending on how large $k^*$ is.

\paragraph{Case $k^* \ge n/\log n$.}
In this case, $\log(n) = (1+o(1)) \log (k^*+1)$, so the expected number of queries can be bounded by
  \begin{align*}
    & (1+o(1)) \left(k^* \cdot  \frac{\log \frac{k^*+1}{\delta}}{\DKL} + (n - k^*) \cdot \frac{\log \frac{k^*+1}{\delta}}{\DKL}\right)\\
    =& (1+o(1)) \frac{n \log \frac{k^*+1}{\delta}}{\DKL}.
  \end{align*}

\paragraph{Case $k^* \le n/\log n$.}
In this case, $k^* \log \frac{n}{\delta} = o(n \log \frac{k^*+1}{\delta})$, so the expected number of queries can be bounded by
   \begin{align*}
    & (1+o(1)) \left(k^* \cdot  \frac{\log \frac{n}{\delta}}{\DKL} + n \cdot \frac{\log \frac{k^*+1}{\delta}}{\DKL}\right)\\
    =& (1+o(1)) \frac{n \log \frac{k^*+1}{\delta}}{\DKL}.
  \end{align*}
\end{proof}

Finally, we prove \cref{thm:counting}, which we recall below:

\Counting*
\begin{proof}

Suppose there is an algorithm for Counting with
\[
(1-\epsilon)\frac{n \log \frac{\min\{\lVert a \rVert_1, n - \lVert a \rVert_1\} + 1}{\delta}}{\DKL}
\]
noisy queries in expectation, for some absolute constant $c>0$. First, suppose $\lVert a \rVert_1 \le n / 2$. Recall that in the lower bound for $k$-Threshold, the hard distribution for $k$-Threshold is to distinguish whether the input contains $k$ or $k-1$ $1$'s. Thus, by running the assumed algorithm for Counting, the running time would be
\[
(1-\epsilon)\frac{n \log \frac{k + 1}{\delta}}{\DKL},
\]
which contradicts the lower bound for $k$-Threshold.

If $\lVert a \rVert_1 > n / 2$, we can simply first flip all the bits in the $k$-Threshold instance, and then use the same argument.

For the upper bound, given \cref{thm:counting-upper-1}, it suffices to first estimate whether $\lVert a \rVert_1$ is bigger than $n/2$ or smaller. When it is smaller, we can directly run \cref{thm:counting-upper-1}; otherwise, we first flip all input bits and then run \cref{thm:counting-upper-1}.

More precisely, we randomly sample $n^{0.99}$ input elements with replacements, use \cref{lem:check-bit} to estimate them with error probability $1/n^{100}$. If the fraction of elements in the sample whose estimates are $1$ is $\le \frac{1}{2}$, we directly call \cref{thm:counting-upper-1} with error bound $\delta$ and return the result; otherwise, we flip all input bits, call \cref{thm:counting-upper-1} with error bound $\delta$, and return $n$ minus the result.

Regardless of whether we flip the input bits, the output is always correct assuming the returned result of \cref{thm:counting-upper-1} is correct. Therefore, the error probability of the algorithm is at most $\delta$.

Next, we analyze the expected running time of the algorithm. Let $E_1$ be the event that the fraction of sampled elements that are $1$ is within $n^{-0.01}$ of the fraction of all elements that are $1$. By Chernoff bound,
\[
\bP(\neg E_1) \le 2 e^{-2 \left(n^{-0.01} \right)^2 \cdot n^{0.99}} \le O\left(\frac{1}{n^{99}}\right).
\]
Let $E_2$ be the event that the returned result of \cref{lem:check-bit} for all sampled elements are correct. By union bound,
\[
\bP(\neg E_2) \le \frac{1}{n^{99}}.
\]
If $E_1$ and $E_2$ both hold, the expected number of queries of \cref{thm:counting-upper-1} can be bounded as
\[
(1+o(1)) \cdot \frac{n}{\DKL} \cdot
\begin{cases}
   \log \frac{\lVert a \rVert_1 + 1}{\delta},  & \text{if } \lVert a\rVert_1 \le n / 2 - n^{0.99},\\
   \log \frac{n + 1}{\delta},  & \text{if } n / 2 - n^{0.99} < \lVert a\rVert_1 \le n / 2 + n^{0.99},\\
   \log \frac{n - \lVert a \rVert_1 + 1}{\delta},  & \text{if } \lVert a\rVert_1 > n / 2 + n^{0.99}.
\end{cases}
\]
Regardless of which case it is, the expected number of queries is always $(1+o(1))\frac{n \log \frac{\min\{\lVert a \rVert_1, n - \lVert a \rVert_1\} + 1}{\delta}}{\DKL}$. Even if $E_1$ and $E_2$ do not both hold, the expected number of queries of \cref{thm:counting-upper-1} can be bounded as
\[
(1+o(1)) \cdot \frac{n \log \frac{n + 1}{\delta}}{\DKL}.
\]

Therefore, the overall expected number of queries can be bounded as
\begin{align*}
& n^{0.99} \cdot (1+o(1)) \cdot \frac{\log \frac{1}{n^{100}}}{\DKL} + (1+o(1))\frac{n \log \frac{\min\{\lVert a \rVert_1, n - \lVert a \rVert_1\} + 1}{\delta}}{\DKL} \\
& + \bP(\neg E_1 \vee \neg E_2) (1+o(1))\frac{n \log \frac{n + 1}{\delta}}{\DKL}\\
= & (1+o(1))\frac{n \log \frac{\min\{\lVert a \rVert_1, n - \lVert a \rVert_1\} + 1}{\delta}}{\DKL}
\end{align*}
as desired.
\end{proof}

\ifdefined\isarxiv
\section*{Acknowledgments}
We thank Ziao Wang for pointing out an error in the statement of \cref{open:k-selection} in a previous version of the paper.
\else
\fi

\bibliographystyle{alpha}
\bibliography{ref}

\appendix
\section{A simpler proof of hardness of Threshold for small \texorpdfstring{$k$}{k}}
\label{sec:th-small}
In this section we provide a simpler proof of \cref{thm:th:small-k}. Let us recall the theorem statement.

\ThresholdSmallK*

Our proof uses the three-phase framework which we also used in the proof of \cref{thm:graph-conn-hard}. It is a refinement of \cite{feige1994computing}'s two-phase framework, which were used to prove that computing $\Threshold{n}{k}$ requires $\Omega\left(n \log \frac k\delta\right)$ noisy queries.
Here we add a phase where the oracle can send extra information to the algorithm, allowing for a more precise analysis obtaining the exact constant.

\paragraph{Comparison with \cite{wang2024noisy}'s proof.}
Let us briefly compare our proof with \cite{wang2024noisy}'s proof. Both proofs are based on \cite{feige1994computing}'s two-phase framework.
\cite{wang2024noisy}'s proof is divided into three cases: $\log(1/\delta)\log\log(1/\delta) < \log k \le \frac{\log(1/\delta)}{\log\log(1/\delta)}$, $\log k > \log(1/\delta)\log\log(1/\delta)$, $\log k \le \frac{\log(1/\delta)}{\log\log(1/\delta)}$.
The first two cases are handled using the two-phase framework, and the last case is proved using Le Cam's two point method.
In comparison, our proof is simpler and handles all $k=o(n)$ and $\delta = o(1)$ uniformly. We achieve this simplification by carefully designing the information that the oracle reveals to the algorithm for free.

\subsection{Three-phase problem} \label{sec:th-small:three-phase}
Let us now describe the three-phase problem. Let $\epsilon_1,\epsilon_2>0$ be two absolute constants (i.e., they do not grow with $n$).
\begin{enumerate}
  \item In Phase 1, the algorithm makes $m_1 = (1-\epsilon_1) \frac{\log \frac k\delta}{\DKL}$ queries to every element.
  \item In Phase 2, the oracle reveals some elements to the algorithm.
  \item In Phase 3, the algorithm makes $m_2 = (1-\epsilon_2) n$ adaptive exact queries.
\end{enumerate}
The goal of the algorithm is to distinguish whether there are at least $k+1$ ones among the elements.
\begin{lemma} \label{lem:th:reduction}
  If no algorithm can solve the three-phase problem with error probability $\delta>0$, then no algorithm can solve $\Threshold{n}{k}$ with error probability $\delta$ using at most $(1-\epsilon_1)(1-\epsilon_2) \frac{n \log \frac k\delta}{\DKL}$ noisy queries.
\end{lemma}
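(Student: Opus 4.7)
The argument will mirror \cref{lem:inf:reduction} (and \cref{lem:conn:reduction}) almost verbatim. The plan is to proceed by contrapositive: assume a noisy query algorithm $\cA$ that outputs $\Threshold{n}{k}(a)$ with error probability at most $\delta$ and makes at most $m_1 m_2 = (1-\epsilon_1)(1-\epsilon_2)\frac{n\log\frac{k}{\delta}}{\DKL}$ noisy queries in expectation, and convert $\cA$ into an algorithm $\cA'$ for the three-phase problem with the same error probability $\delta$, thereby contradicting the hypothesis.

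The construction of $\cA'$ would use the Phase~1 noisy observations to answer the first $m_1$ queries that $\cA$ makes to each coordinate, and would invoke Phase~3 exact queries only for any further queries. Concretely, $\cA'$ maintains, for each coordinate $i \in [n]$, a value $\hat a_i \in \{0,1,*\}$ initialized to $*$ together with a query counter. When $\cA$ issues its $c$-th query to coordinate $i$: if $c \le m_1$, $\cA'$ returns the $c$-th Phase~1 noisy observation of coordinate $i$; otherwise, if $\hat a_i=*$, $\cA'$ makes an exact Phase~3 query to $i$ and stores the result as $\hat a_i$, and in either sub-case returns a fresh independent sample $\BSC_p(\hat a_i)$ to $\cA$. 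When $\cA$ halts, $\cA'$ outputs whatever $\cA$ outputs.

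Two things then need to be verified. For correctness, conditioned on the hidden input $a$, every answer returned to $\cA$ is independently distributed as $\BSC_p(a_i)$ for the queried coordinate $i$, which exactly matches the noisy query model; hence $\cA'$ inherits $\cA$'s error probability $\le \delta$. For the Phase~3 budget, $\cA'$ expends at most one exact query per coordinate that $\cA$ queries strictly more than $m_1$ times. Since $\cA$'s expected total query count is at most $m_1 m_2$ and each such ``heavy'' coordinate absorbs at least $m_1$ queries, the expected number of heavy coordinates---and hence of Phase~3 exact queries used by $\cA'$---is at most $m_2$, as required. I do not anticipate any substantive obstacle: this is a mechanical simulation argument and the only subtlety is to use fresh independent noise in each simulated $\BSC_p(\hat a_i)$ so that the joint distribution of answers delivered to $\cA$ is faithful to the noisy query model.
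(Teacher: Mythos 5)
Your proposal matches the paper's intended argument, which it explicitly states is ``similar to \cref{lem:inf:reduction} and omitted'': simulate the noisy-query algorithm by feeding it Phase~1 observations for the first $m_1$ queries to each coordinate, then make one Phase~3 exact query per ``heavy'' coordinate and serve fresh $\BSC_p$ samples thereafter, with the budget argument following from the fact that each heavy coordinate absorbs at least $m_1$ of the at most $m_1 m_2$ total queries. The construction and bookkeeping are identical to the paper's proof of \cref{lem:inf:reduction}, so there is nothing further to add.
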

The proof is similar to \cref{lem:inf:reduction} and omitted.

By \cref{lem:th:reduction}, to prove \cref{thm:th:small-k}, it suffices to prove hardness of the three-phase problem.
\begin{proposition} \label{prop:th:three-phase-hard}
  For any absolute constants $\epsilon_1,\epsilon_2>0$, no algorithm can solve the three-phase problem for $\Threshold{n}{k}$ with error probability $\delta$ for $k=o(n)$ and $\delta=o(1)$, where the input is uniformly chosen from $\binom{[n]}k$ with probability $1/2$ and uniformly chosen from $\binom{[n]}{k-1}$ with probability $1/2$.
\end{proposition}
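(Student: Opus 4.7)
The plan is to instantiate the three-phase framework exactly as in \cref{sec:influence:overview} and \cref{sec:conn:overview}, tailored to the binary hypothesis test $H_{k-1}$ (input uniform over $\binom{[n]}{k-1}$) vs.\ $H_k$ (input uniform over $\binom{[n]}{k}$). Phase 1 and Steps 2a, 2b will run exactly as in \cref{sec:inf:phase-1,sec:inf:phase-2:step-a,sec:inf:phase-2:step-b}: with $m_1=(1-\epsilon_1)\log(k/\delta)/\DKL$ and $I$ an interval of length $2\log^{0.6}n$ centered at $pm_1$, \cref{lem:binomial-large-deviation} gives that a true $1$-bit falls in $I$ with probability $(k/\delta)^{-(1-\epsilon_1)+o(1)}$. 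The Step 2b subsampling with $q_j = 1 - p_{m_1-j}p_{k_l}/(p_j p_{m_1-k_l})$ realizes the alternative observation of \cref{defn:inf:alt-obs}: each true $1$-bit is revealed independently with probability $\pi_+ = 1 - (k/\delta)^{-(1-\epsilon_1)+o(1)}$ and each true $0$-bit independently with probability $\pi_- = o(1)$. Throughout, let $j := |R_+^{(2b)}|$ denote the number of $1$'s revealed through 2a and 2b, and $\mu := k(1-\pi_+) = \delta^{1-\epsilon_1+o(1)}k^{\epsilon_1+o(1)}$.

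The novel ingredient is Step 2c, analogous in spirit to Step 2c of \cref{sec:conn:phase-2:step-c}. The oracle uses its knowledge of the true count $N$: under $H_{k-1}$ it reveals the positions of all remaining $1$'s; under $H_k$, if at least one $1$ is still unrevealed after 2a+2b, it uniformly samples such an index $j^\star$ and reveals the positions of all \emph{other} remaining $1$'s. In the remaining ``bad'' scenario under $H_k$ (of probability $\pi_+^k$) the algorithm sees $|R_+|=k$ and wins trivially, so this outcome contributes nothing to $\sum_\omega\min(P(H_{k-1},\omega),P(H_k,\omega))$. In the complementary ``good'' scenario the algorithm has observed a revealed $1$-set $R_+$ of size exactly $k-1$, a revealed $0$-set $R_-$ with $|R_-|=o(n)$ by Bernstein, and an unrevealed set $U=[n]\setminus(R_+\cup R_-)$ that contains zero $1$'s under $H_{k-1}$ and exactly one $1$ uniformly distributed over $U$ under $H_k$. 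Summing over $j^\star\in U$ and applying $\binom{n}{k-1}/\binom{n}{k}=k/(n-k+1)$ yields the posterior ratio
\[
\frac{P(H_k\mid\text{obs})}{P(H_{k-1}\mid\text{obs})} \;=\; \frac{|U|\,k\,(1-\pi_+)}{(n-k+1)(k-j)(1-\pi_-)} \;=\; (1\pm o(1))\,\frac{\mu}{k-j}.
\]

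Phase 3 reduces to ``find the unique $1$ in $U$''. The algorithm makes $m_2=(1-\epsilon_2)n$ adaptive exact queries, which may be confined to $U$. Outcomes where some query returns $1$ are incompatible with $H_{k-1}$, so they contribute $0$ to the Bayes-error sum; the relevant outcomes are ``all Phase 3 queries return $0$'', whose conditional probability under $H_k$ is $\gamma := (|U|-m_2)/|U| = \epsilon_2\pm o(1)$ since $|U|=(1-o(1))n$. Restricting $\sum_\omega\min$ to good $\omega_2$ paired with this all-zero $\omega_3$ and using $\sum_{\omega_2\text{ good}}P(\omega_2\mid H_{k-1})=1$,
\[
\bE[\text{Bayes err}] \;\ge\; \tfrac{1}{2}\,\bE_{\omega_2\sim H_{k-1}}\!\Bigl[\min\!\Bigl(1,\tfrac{\mu\epsilon_2}{k-j}\Bigr)\Bigr](1\pm o(1)),
\]
where under $H_{k-1}$ one has $k-j = 1+(k-1-j)$ with $k-1-j\sim\Bin(k-1,1-\pi_+)$.

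The main obstacle is verifying that the right-hand side strictly exceeds $\delta$ across both regimes of $\mu$. When $\mu\ge 1$, Bernstein gives $k-j\asymp\mu$ with high probability, so $\min(1,\mu\epsilon_2/(k-j))=\Theta(\epsilon_2)$ and the Bayes error is $\Omega(\epsilon_2)=\omega(\delta)$. When $\mu<1$, $P(k-j=1\mid H_{k-1}) = \pi_+^{k-1}\ge 1-\mu$, so the expectation is dominated by the term $\mu\epsilon_2$ and the Bayes error is $\Omega(\mu\epsilon_2)$; since $k\ge 1$ forces $\mu\ge\delta^{1-\epsilon_1+o(1)}$, which is $\omega(\delta)$ for $\delta=o(1)$ and any fixed $\epsilon_1>0$, this again exceeds $\delta$. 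Either way, the three-phase algorithm must err with probability strictly more than $\delta$, establishing \cref{prop:th:three-phase-hard}. The remaining routine pieces—Bernstein concentration of $|R_-|$ and of $k-j$, and tracking the $(1\pm o(1))$ factors in the posterior normalization—follow the template already developed in \cref{sec:conn:phase-2:step-ab,sec:conn:phase-2:step-c}.
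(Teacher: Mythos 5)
Your proposal follows the paper's proof of \cref{prop:th:three-phase-hard} almost verbatim: same Phase 1, same Steps 2a/2b yielding the alternative observation procedure, the same Step 2c that reveals all but one $1$ (uniformly chosen among the still-unrevealed ones) under $H_k$ and all $1$'s under $H_{k-1}$, and the same posterior likelihood ratio $(1\pm o(1))\,\mu/(k-j)$ where $\mu=k(1-\pi_+)$ and $j=|A_+^{(\le 2b)}|$. The only real difference is presentational: you fold Phases 2 and 3 into a single Bayes-error lower bound $\frac{1}{2}\bE_{H_{k-1}}[\min(1,\mu\epsilon_2/(k-j))]$ and case-split on $\mu\lessgtr 1$, whereas the paper isolates the likelihood-ratio event $\cE=\{d\bP^{(2c)}_k/d\bP^{(2c)}_{k-1}\le\delta^{-\epsilon_1/2}\}$ and runs a two-stage (Phase~2 then Phase~3) contradiction argument; both are correct. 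One small fix worth making: in your $\mu<1$ branch, the bound $\pi_+^{k-1}\ge 1-\mu$ degenerates as $\mu\to 1^-$ (so ``$\Omega(\mu\epsilon_2)$'' does not immediately follow from it); instead use $\pi_+^{k-1}\ge e^{-(1+o(1))\mu}\ge e^{-1-o(1)}$, which holds uniformly for $\mu<1$ and still gives Bayes error $\Omega(\mu\epsilon_2)\ge\Omega(\epsilon_2\,\delta^{1-\epsilon_1+o(1)})=\omega(\delta)$.
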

\cref{thm:th:small-k} follows by combining \cref{lem:th:reduction} and \cref{prop:th:three-phase-hard}.

The rest of the section is devoted to the proof of \cref{prop:th:three-phase-hard}.

\subsection{Phase 1} \label{sec:th-small:phase-1}
Define $A = \{i\in [n]: a_i = 1\}$ where $a$ is the input bit string.
Then $A \in \binom{[n]}{k} \cup \binom{[n]}{k-1}$, and the goal is to distinguish whether $|A|=k$ or $k-1$.

In Phase 1, the algorithm makes $m_1 = (1-\epsilon_1) \frac{\log \frac k\delta}{\DKL}$ queries to every element $i\in [n]$.
Let $a_i$ denote the number of times where a query to $i$ returns $1$.
Then for $i\in A$, $a_i \sim \Bin(m_1, 1-p)$; for $i\not \in A$, $a_i\sim \Bin(m_1, p)$.
For $0\le j\le m_1$, define
\begin{align*}
  p_j = \bP(\Bin(m_1,1-p)=j) = \binom{m_1}j (1-p)^j p^{m_1-j}.
\end{align*}
Let $I = \left[p m_1 - m_1^{0.6}, p m_1 + m_1^{0.6}\right]$.
\begin{lemma} \label{lem:th:binom-concentrate}
  Let $x\sim \Bin(m_1, 1-p)$, $y\sim \Bin(m_1, p)$.
  Then
  \begin{align}
    \label{eqn:lem-th-binom-concentrate:i} \bP(x\in I) &= (\delta/k)^{1-\epsilon_1 \pm o(1)}, \\
    \label{eqn:lem-th-binom-concentrate:ii} \bP(y\in I) &= 1-o(1).
  \end{align}
\end{lemma}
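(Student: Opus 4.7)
}
The plan is to mirror the argument of Lemma \ref{lem:inf:binom-concentrate}, since the two statements are structurally identical up to the substitution of the parameter in $m_1$. The only data that changes is the value $\DKL \cdot m_1$, which here equals $(1-\epsilon_1) \log(k/\delta)$ rather than $c_1 \DKL \log n$, so the final exponent becomes $(1-\epsilon_1 \pm o(1))$ in base $\delta/k$ instead of $c_1 \DKL \pm o(1)$ in base $n$.

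For \cref{eqn:lem-th-binom-concentrate:i}, I would first note that every $j \in I$ has $j = (p \pm o(1)) m_1$ (using $|I| = 2 m_1^{0.6}+1$ and $m_1 \to \infty$, which holds because $\delta = o(1)$). Applying \cref{lem:binomial-large-deviation} with $q = 1-p$ but evaluated at the atypical value $j/m_1 \to p$, one obtains
\begin{align*}
  \bP(\Bin(m_1, 1-p) = j) = \exp(-(\DKLs(p \parallel 1-p) \pm o(1)) m_1) = \exp(-(\DKL \pm o(1)) m_1),
\end{align*}
where the last equality uses $\DKLs(p \parallel 1-p) = (1-2p)\log\frac{1-p}{p} = \DKL$. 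Substituting $m_1 = (1-\epsilon_1)\log(k/\delta)/\DKL$ gives $\bP(\Bin(m_1,1-p)=j) = (\delta/k)^{1-\epsilon_1 \pm o(1)}$. Finally, summing over the at most $|I| = \mathrm{poly}\log(k/\delta)$ integers $j \in I$ is absorbed by the $\pm o(1)$ in the exponent, yielding \cref{eqn:lem-th-binom-concentrate:i}.

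For \cref{eqn:lem-th-binom-concentrate:ii}, the interval $I$ is centered at the mean $pm_1$ of $y \sim \Bin(m_1,p)$ with half-width $m_1^{0.6}$, so by a standard Chernoff/Bernstein bound the tail probability $\bP(|y - pm_1| > m_1^{0.6})$ is $\exp(-\Omega(m_1^{0.2})) = o(1)$ since $m_1 \to \infty$.

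The argument is essentially mechanical, and no step poses a real obstacle; the only minor care point is verifying that $m_1 \to \infty$ (hence that the $\pm o(1)$ slack in \cref{lem:binomial-large-deviation} is valid and that the Chernoff bound is non-trivial), which follows directly from the assumption $\delta = o(1)$. I would simply state that the proof is identical to that of \cref{lem:inf:binom-concentrate} with $n^{c_1 \DKL}$ replaced by $(k/\delta)^{1-\epsilon_1}$.
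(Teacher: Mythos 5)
Your proposal is correct and takes the same approach as the paper, which simply cites \cref{lem:binomial-large-deviation} for part (i) and the Chernoff bound for part (ii), exactly as you do (and as in \cref{lem:inf:binom-concentrate}). One small cosmetic point: \cref{lem:binomial-large-deviation} as stated restricts to $0<p<\tfrac12$, whereas you invoke it with the success probability $1-p>\tfrac12$; either pass to the complementary binomial $m_1-j\sim\Bin(m_1,p)$ with $m_1-j=((1-p)\pm o(1))m_1$ and use $\DKLs(1-p\parallel p)=\DKLs(p\parallel 1-p)=\DKL$, or note that the lemma's proof never actually uses the $p<\tfrac12$ hypothesis (the paper's own proof of \cref{lem:inf:binom-concentrate} makes the same silent extension).
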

\begin{proof}
  \cref{eqn:lem-th-binom-concentrate:i} is by \cref{lem:binomial-large-deviation}.
  \cref{eqn:lem-th-binom-concentrate:ii} is by Chernoff bound.
\end{proof}

Let $\bP^{(0)}$ denote the prior distribution of $A$ and $\bP^{(1)}$ denote the posterior distribution of $A$ conditioned on observations in Phase 1.
Let $\cC^{(0)}$ (resp.~$\cC^{(1)}$) denote the support of $\bP^{(0)}$ (resp.~$\bP^{(1)}$).
Then $\cC^{(1)} = \cC^{(0)} = \binom{[n]}{k-1} \cup \binom{[n]}k$ and for any set $B \in \cC^{(0)}$ we have
\begin{align*}
  \bP^{(1)}(B) &\propto \bP\left((a_i)_{i\in [n]} | B\right) \bP^{(0)}(B) \\
  \nonumber &= \left(\prod_{i\in B} p_{a_i}\right) \left(\prod_{i\in B^c} p_{m_1-a_i}\right) \bP^{(0)}(B).
\end{align*}

\subsection{Phase 2} \label{sec:th-small:phase-2}
In Phase 2, the oracle reveals some elements in $A$ and not in $A$ as follows.
\begin{enumerate}[label=2\alph*.]
  \item In Step 2a, the oracle reveals elements $i$ with $a_i\not \in I$.
  \item In Step 2b, the oracle reveals every $i\in A$ independently with probability $q_{a_i}$. We choose $q_j = 1-\frac{p_{m_1-j} p_{j_l}}{p_j p_{m_1-j_l}}$ for $j\in I$ where $j_l = p m_1 - m_1^{0.6}$.
  \item In Step 2c, the oracle reveals $k-1$ elements of $A$ as follows. If $|A|=k-1$, reveal all elements of $A$. Otherwise, $|A|=k$. If $A$ contains an element that is not revealed yet, uniformly randomly choose an element $i^*$ from all such elements and reveal all elements in $A\backslash i^*$. If all elements of $A$ have been revealed, report failure.
\end{enumerate}

\paragraph{Step 2a and Step 2b.}
By the same analysis as in \cref{sec:inf:phase-2:step-a,sec:inf:phase-2:step-b}, observations up to Step 2b have the same effect as the following procedure:
\begin{enumerate}[label=(\arabic*)]
  \item Observe every element $i\in A$ independently with probability
  \begin{align*}
    p_+ &= 1-\sum_{j\in I} p_j (1-q_j)
    = 1 - \frac{p_{j_l}}{p_{m_1-j_l}} \cdot \sum_{j\in I} p_{m_1-j} \\
    \nonumber &= 1-(1\pm o(1)) \frac{p_{j_l}}{p_{m_1-j_l}} = 1-(\delta/k)^{1-\epsilon_1\pm o(1)}.
  \end{align*}
  \item Observe every element $i\in A^c$ independently with probability
  \begin{align*}
    p_- = \bP(\Bin(m_1,p)\not \in I) = o(1).
  \end{align*}
\end{enumerate}

Let $A_+^{(2a)}$ (resp.~$A_-^{(2a)}$) denote the set of elements in $I$ (resp.~not in $I$) revealed in Step 2a.
Let $A_+^{(2b)}$ be the set of elements in $I$ revealed in Step 2b, and $A_+^{(\le 2b)} = A_+^{(2a)} \cup A_+^{(2b)}$.
Then the oracle reports failure in Step 2c if and only if $\left|A_+^{(\le 2b)}\right| = k$.

Let $\bP^{(2b)}$ denote the posterior distribution of $A$ after Step 2b and $\cC^{(2b)}$ be its support.
By the same analysis as in \cref{sec:inf:phase-2:step-a,sec:inf:phase-2:step-b}, the posterior probability after Step 2b satisfies
\begin{align*}
  \bP^{(2b)}(B) \propto \left(\frac{p_{j_l}}{p_{m_1-j_l}}\right)^{|B|-(k-1)} \bP^{(0)}(B)
\end{align*}
for $B\in \cC^{(2b)}$.
In particular, $\left( A_+^{(\le 2b)}, A_-^{(2a)}, |A|\right)$ is a sufficient statistic for $A$ at the end of Step 2b.

We note that $\left|A_+^{(\le 2b)}\right| < k$ with probability $\ge \delta^{1-\epsilon_1\pm o(1)}$ because
\begin{align} \label{eqn:th:step-2b:zero-prob}
  \bP\left( \Bin(k, 1-p_+) > 0 \right) = 1-p_+^k
  \ge 1-\exp\left( k(1-p_+)\right) \ge \delta^{1-\epsilon_1\pm o(1)}.
\end{align}

\paragraph{Step 2c.}
Let $A_+^{(2c)}$ be the set of elements in $I$ revealed in Step 2c but not in previous steps, and $A_+^{(\le 2c)} = A_+^{(\le 2b)} \cup A_+^{(2c)}$.
Let $\bP^{(2c)}$ be the posterior distribution of $A$ and $\cC^{(2c)}$ be the support of $\bP^{(2c)}$.
Then
\begin{align*}
  \cC^{(2c)}&=\left\{ A : A\in \binom{[n]}{k-1} \cup \binom{[n]}k, A_+^{(\le 2c)} \subseteq A \subseteq [n]\backslash A_-^{(2a)}\right\} \\
  \nonumber &= \left\{A_+^{(\le 2c)}\right\} \cup \left\{ A_+^{(\le 2c)} \cup \{i\}: i\in [n]\backslash \left(A_+^{(\le 2c)} \cup A_-^{(2a)}\right)\right\}.
\end{align*}
For simplicity of notation, define $A_0 = A_+^{(\le 2c)}$ and $A_i = A_+^{(\le 2c)} \cup \{i\}$ for $i\in [n]\backslash \left(A_+^{(\le 2c)} \cup A_-^{(2a)}\right)$.
Then
\begin{align*}
  \bP^{(2c)}(A_0) &\propto \bP^{(0)}(A_0),\\
  \bP^{(2c)}(A_i) &\propto \frac{1}{k-\left| A_+^{(\le 2b)} \right|} \cdot \frac{p_{j_l}}{p_{m_1-j_l}} \cdot \bP^{(0)}(A_i).
\end{align*}
Recall that $\bP^{(0)}(B) = \frac 12 \cdot \frac 1{\binom {n}{|B|}}$ for $|B|\in \{k-1,k\}$.
Let $\bP^{(2c)}_k$ (resp.~$\bP^{(2c)}_{k-1}$) denote the probability measure of observations at the end of Step 2c conditioned on $|A|=k$ (resp.~$|A|=k-1$).
Summing over $i$, we have
\begin{align*}
  \frac{d \bP^{(2c)}_k}{d \bP^{(2c)}_{k-1}} =\left| [n]\backslash \left(A_+^{(\le 2c)} \cup A_-^{(2a)}\right) \right|\cdot \frac{1}{k-\left| A_+^{(\le 2b)} \right|} \cdot \frac{p_{j_l}}{p_{m_1-j_l}} \cdot \frac{\binom n{k-1}}{\binom nk}.
\end{align*}
\begin{lemma} \label{lem:th:step-2c}
  \begin{align*}
    \bP^{(2c)}_k \left(\frac{d \bP^{(2c)}_k}{d \bP^{(2c)}_{k-1}} \le \delta^{-\epsilon_1/2}\right) \ge \delta^{1-\epsilon_1 \pm o(1)}.
  \end{align*}
\end{lemma}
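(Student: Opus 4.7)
The plan is to first simplify the density ratio $R := d\bP^{(2c)}_k/d\bP^{(2c)}_{k-1}$ into a clean closed form, and then use concentration of $\Bin(k, 1-p_+)$ under $\bP^{(2c)}_k$ to handle the resulting denominator in two regimes.

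For the simplification, I would combine three ingredients in the given formula: (a) $\binom{n}{k-1}/\binom{n}{k} = k/(n-k+1) = (1+o(1))k/n$, using $k=o(n)$; (b) the bound $|[n]\setminus(A_+^{(\le 2c)}\cup A_-^{(2a)})| = n - (k-1) - |A_-^{(2a)}| = (1-o(1))n$ on a probability-$(1-o(1))$ event, since $|A_-^{(2a)}|$ is stochastically dominated by $\Bin(n,p_-)$ with $p_-=o(1)$ so that Chernoff gives $|A_-^{(2a)}|=o(n)$ w.h.p.; and (c) the identity $p_{j_l}/p_{m_1-j_l} = (1+o(1))(1-p_+)$, which is immediate from the expression for $p_+$ in the Step 2a/2b analysis. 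Substituting these, the ratio collapses to the pleasant form
\[
R = (1\pm o(1))\cdot\frac{k(1-p_+)}{k-|A_+^{(\le 2b)}|}.
\]

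Next, under $\bP^{(2c)}_k$ the alternative observation procedure of Step 2b makes $k - |A_+^{(\le 2b)}| \sim \Bin(k,1-p_+)$ with mean $\mu := k(1-p_+) = k^{\epsilon_1\pm o(1)}\,\delta^{1-\epsilon_1\pm o(1)}$, and this is independent of $|A_-^{(2a)}|$. I would split into two regimes. If $\mu\le 1$, the non-failure event $\{k-|A_+^{(\le 2b)}|\ge 1\}$ alone forces $R\le(1+o(1))\mu\le 2\le\delta^{-\epsilon_1/2}$, and has probability $1-p_+^k \ge \mu/2 \ge \delta^{1-\epsilon_1\pm o(1)}$ (using $1-(1-x)^k\ge kx/2$ for $kx\le 1$ together with $k^{\epsilon_1}\ge 1$). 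If $\mu>1$, a Chernoff bound gives $\bP(\Bin(k,1-p_+)\ge\mu/2)\ge 1-e^{-\mu/8}=\Omega(1)\ge\delta^{1-\epsilon_1\pm o(1)}$, and on this event $R\le 2(1+o(1))\le\delta^{-\epsilon_1/2}$. Either way, intersecting with the independent auxiliary event $\{|A_-^{(2a)}|=o(n)\}$ (which has probability $1-o(1)$) costs only a multiplicative $(1-o(1))$ factor, preserving the bound.

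The computations above are fairly mechanical; the main obstacle I expect is simply the $\pm o(1)$ bookkeeping in both the simplification and the case split, making sure the desired probability lower bound $\delta^{1-\epsilon_1\pm o(1)}$ is preserved across the intersections---especially in the small-$\mu$ regime where both the good event and the $o(1)$ correction factors live at comparable scales.
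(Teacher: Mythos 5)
Your proof is correct and follows essentially the same strategy as the paper's: simplify the Radon--Nikodym derivative to $(1\pm o(1))\,k(1-p_+)/(k-|A_+^{(\le 2b)}|)$ using $k=o(n)$, the $|A_-^{(2a)}|=o(n)$ event, and $p_{j_l}/p_{m_1-j_l}=(1\pm o(1))(1-p_+)$; then exploit that $k-|A_+^{(\le 2b)}|\sim\Bin(k,1-p_+)$ under $\bP^{(2c)}_k$.

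The one genuine difference is where you split the cases. The paper splits on $k^{\epsilon_1}\delta^{1-\epsilon_1}\ge\delta^{-\epsilon_1/2}$ versus the reverse (roughly, $\mu$ at least $\delta^{-\epsilon_1/2}$ or not): in the first regime it uses concentration of the binomial around $\mu$, and in the second it conditions on the non-failure event $\{k-|A_+^{(\le 2b)}|\ge1\}$, obtaining the ratio bound $\le\delta^{-\epsilon_1/2\pm o(1)}$ (slightly weaker than the stated $\delta^{-\epsilon_1/2}$). You split instead at the cleaner threshold $\mu\le1$ versus $\mu>1$: when $\mu\le1$ the non-failure event alone already caps the ratio by $(1+o(1))\mu\le2$, and when $\mu>1$ a crude Chernoff gives a constant-probability event on which the ratio is $\le2(1+o(1))$. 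Your split buys a uniform $O(1)$ bound on the ratio in both regimes (strictly better than what is needed and than what the paper's Case~2 delivers), and the probability lower bounds ($\mu/2$ and $\Omega(1)$ respectively) both dominate $\delta^{1-\epsilon_1\pm o(1)}$ since $k^{\epsilon_1}\ge1$ and $\epsilon_1<1$. The intersection with the independent $\{|A_-^{(2a)}|=o(n)\}$ event is handled correctly. In short, this is the paper's argument with a more natural case boundary and slightly tighter bookkeeping.
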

\begin{proof}
  Because $p_-=o(1)$, with probability $1-o(1)$, we have $\left|A_-^{(2a)}\right| \le n p_-^{1/2}$.
  Then
  \begin{align*}
    \left| [n]\backslash \left(A_+^{(\le 2c)} \cup A_-^{(2a)}\right) \right|
    = n-(k-1) - A_-^{(2a)} = (1\pm o(1)) n
  \end{align*}
  and
  \begin{align*}
    \frac{d \bP^{(2c)}_k}{d \bP^{(2c)}_{k-1}}
    &= (1\pm o(1)) n \cdot \frac{1}{k-\left| A_+^{(\le 2b)} \right|} \cdot \frac{p_{j_l}}{p_{m_1-j_l}} \cdot \frac kn\\
    \nonumber &= (1\pm o(1)) \frac{k}{k-\left| A_+^{(\le 2b)} \right|} \cdot \frac{p_{j_l}}{p_{m_1-j_l}}.
  \end{align*}

  Note that $k-\left| A_+^{(\le 2b)} \right|\sim \Bin(k,1-p_+)$, and $\left| A_+^{(\le 2b)} \right|$ is independent with $\left|A_-^{(2a)}\right|$ conditioned on $|A|$. Recall that
  \begin{align*}
    1-p_+ = (1\pm o(1)) \frac{p_{j_l}}{p_{m_1-j_l}} = (\delta/k)^{1-\epsilon_1 \pm o(1)}.
  \end{align*}

  If $k^{\epsilon_1} \delta^{1-\epsilon_1} \ge \delta^{-\epsilon_1/2}$, then $k (1-p_+) = \omega(1)$ and by concentration
  \begin{align*}
    \bP\left( \Bin(k,1-p_+) \ge \frac 12 k(1-p_+) \right) = 1-o(1)
  \end{align*}
  Under $\bP^{(2c)}_k$, conditioned on $k-\left| A_+^{(\le 2b)} \right| \ge \frac 12 k(1-p_+)$, we have
  \begin{align*}
    \frac{d \bP^{(2c)}_k}{d \bP^{(2c)}_{k-1}}
    = (1\pm o(1)) \frac{k}{k-\left| A_+^{(\le 2b)} \right|} \cdot \frac{p_{j_l}}{p_{m_1-j_l}}
    = O(1).
  \end{align*}

  If $k^{\epsilon_1} \delta^{1-\epsilon_1} \le \delta^{-\epsilon_1/2}$, then conditioned on $\left| A_+^{(\le 2b)} \right| < k$ (which happens with probability at least $\delta^{1-\epsilon_1\pm o(1)}$ by \cref{eqn:th:step-2b:zero-prob}), we have
  \begin{align*}
    \frac{d \bP^{(2c)}_k}{d \bP^{(2c)}_{k-1}}
    = (1\pm o(1)) \frac{k}{k-\left| A_+^{(\le 2b)} \right|} \cdot \frac{p_{j_l}}{p_{m_1-j_l}}
    \le \delta^{-\epsilon_1/2 \pm o(1)}.
  \end{align*}

  Combining both cases we finish the proof.
\end{proof}
From \cref{lem:th:step-2c} we can conclude that no algorithm can determine $|A|$ with error probability $\le \delta$ at the end of Phase 2.
Suppose for the sake of contradiction that such an algorithm exists.
Let $\cE$ denote the event that $\frac{d \bP^{(2c)}_k}{d \bP^{(2c)}_{k-1}} \le \delta^{-\epsilon_1/2}$.
Under $\bP^{(2c)}_k$, conditioned on $\cE$, the algorithm outputs ``$|A|=k$'' with probability at least $\frac 12$ (otherwise the overall error probability under $\bP^{(2c)}_k$ is at least $\frac 12 \cdot \delta^{1-\epsilon\pm o(1)}$).
However, this implies that under $\bP^{(2c)}_{k-1}$, the algorithm outputs $|A|=k$ with probability at least $\frac 12 \delta^{\epsilon_1/2}$ by definition of $\cE$.

\subsection{Phase 3} \label{sec:th-small:phase-3}
In Phase 3, the algorithm makes at most $(1-\epsilon_2) n$ adaptive exact queries.
If $\left| [n]\backslash \left(A_+^{(\le 2c)} \cup A_-^{(2a)}\right) \right| = (1\pm o(1)) n$, then with probability $(1\pm o(1))\epsilon_2$, these exact queries do not hit any elements in $A$.
Let $\bP^{(3)}_k$ (resp.~$\bP^{(3)}_{k-1}$) denote the probability measure of observations at the end of Phase 3 conditioned on $|A|=k$ (resp.~$|A|=k-1$).
Conditioned on that the exact queries do not hit any elements in $A$, we have
\begin{align*}
  \frac{d \bP^{(2c)}_k}{d \bP^{(2c)}_{k-1}} =\left| [n]\backslash \left(A_+^{(\le 2c)} \cup A_-^{(2a)} \cup A^{(3)}\right) \right|\cdot \frac{1}{k-\left| A_+^{(\le 2b)} \right|} \cdot \frac{p_{j_l}}{p_{m_1-j_l}} \cdot \frac{\binom n{k-1}}{\binom nk}
\end{align*}
where $A^{(3)}$ denotes the set of elements queried in Phase 3.
By a similar proof as \cref{lem:th:step-2c}, we can prove that
\begin{align*}
  \bP^{(3)}_k \left(\frac{d \bP^{(3)}_k}{d \bP^{(3)}_{k-1}} \le \epsilon_2^{-1}\delta^{-\epsilon_1/2}\right) \ge \epsilon_2 \delta^{1-\epsilon_1 \pm o(1)}.
\end{align*}
However, $\epsilon_2$ is a constant, so the discussion in the end of \cref{sec:th-small:phase-2} still applies.
This concludes that no algorithm can solve the three-phase problem with error probability $\le \delta$ using at most $(1-\epsilon) \frac{n \log \frac k\delta}{\DKL}$ noisy queries.

\end{document}